\documentclass[14pt]{extarticle}
\usepackage{extsizes}
\usepackage{amssymb,amsfonts,amsthm}
\usepackage{amsmath}
\usepackage[makeroom]{cancel}
\usepackage{mathtools}
\usepackage{mathrsfs,pifont}
\usepackage{slashed,mathabx}
\usepackage[bbgreekl]{mathbbol}
\usepackage{ytableau}
\usepackage{tikz}
\usetikzlibrary{calc}

\usepackage[top=1in, bottom=1.25in, left=0.75in,right=0.75in]{geometry}
\usepackage{authblk}

\usepackage[font={small,sl}]{caption}
\usepackage[all]{xy}

\usepackage{hyperref}

\newtheorem{theorem}{Theorem}[section]
\newtheorem{lemma}[theorem]{Lemma}
\newtheorem{corollary}[theorem]{Corollary}
\theoremstyle{definition}
\newtheorem{definition}[theorem]{Definition}
\newtheorem{proposition}[theorem]{Proposition}

\theoremstyle{remark}
\newtheorem{remark}[theorem]{Remark}

\numberwithin{equation}{section}

\makeatletter
\def\blfootnote{\xdef\@thefnmark{}\@footnotetext}

\begin{document}

\title{Vertex operators, infinite wedge representations, \\ and correlation functions of  the $t$-Schur measure }

\author{Gary Greaves\thanks{Division of Mathematical Sciences, Nanyang Technological University, 21 Nanyang Link, Singapore 637371. Email: gary@ntu.edu.sg}
\qquad Naihuan Jing\thanks{Department of Mathematics, North Carolina State University, Raleigh, NC 27695, USA. Email: jing@ncsu.edu}
\qquad Haoran Zhu\thanks{Division of Mathematical Sciences, Nanyang Technological University, 21 Nanyang Link, Singapore 637371. Email: zhuh0031@e.ntu.edu.sg} }

\maketitle
	
\begin{abstract}
We study the $t$-Schur measure on partitions, defined by
$
\mathbb{P}(\lambda)=Z^{-1}S_\lambda(x;t)s_\lambda(y)
$,
where $S_\lambda(x;t)$ denotes the $t$-Schur symmetric functions and $s_\lambda(y)$ the ordinary Schur functions, and $Z$ is the normalising constant. Using vertex operator calculus, we realise $S_\lambda(x;t)$ in the charged free-fermion Fock space, yielding a $t$-deformation of the classical boson-fermion correspondence. These realisations give vertex-algebraic proofs of the $t$-Cauchy identities and $t$-Gessel identity. Building on this framework, we compute the correlation functions of the $t$-Schur measure and show that the associated point process is determinantal, with an explicit correlation kernel. The Poissonised $t$-Plancherel measure appears as a specialisation of our construction, so its correlation functions follow as a corollary. As an application, we derive the limiting distribution for the length of the longest ascent pair in a random permutation. Our results interpolate the Schur case at $t=0$, connect to the Schur-$Q$ theory at $t=-1$, and provide a probabilistic interpretation of a natural $t$-refinement of increasing subsequences via a generalised RSK correspondence.

\end{abstract}
	
{\noindent \it Keywords:} Schur measure, vertex operator, correlation function, limit theorem, random partition, Airy kernel

{\noindent \it Mathematics Subject Classification:}  05E05, 60C05,  17B69,  81R10,  60F05, 60B20
	
\section{Introduction}

\subsection{A probabilistic model}\label{subsect:prob-model}
We begin with a probabilistic model that interpolates the Schur and shifted Schur (Schur's $Q$) worlds via a deformation parameter $t\le0$.

Fix integers $m,n\ge1$ and parameters $x=(x_1,\dots,x_m)$, $y=(y_1,\dots,y_n)$ with $0\le x_i,y_j\le1$ and $x_i y_j<1$.
Let $\mathcal A:=\{1'<1<2'<2<\cdots\}$ and denote by $\mathcal A_{m,n}$ the set of $m\times n$ matrices with entries in $\mathcal A\cup\{0\}$.
We equip the entries $\{a_{ij}\}$ with independent sitewise distributions parametrised by $x_i y_j$ and $t\le0$.

For $k\ge1$,
\begin{align*}
\mathbb P_t(a_{ij}=0)&=\frac{1-x_i y_j}{1-tx_i y_j},\\
\mathbb P_t(a_{ij}=k)&=\frac{1-x_i y_j}{1-tx_i y_j}\,(x_i y_j)^k,\\
\mathbb P_t(a_{ij}=k')&=\frac{1-x_i y_j}{1-tx_i y_j}\,(-t)\,(x_i y_j)^k,
\end{align*}
which indeed defines a probability mass function since
\[
\underbrace{\frac{1-x_i y_j}{1-tx_i y_j}}_{\mathbb P_t(a_{ij}=0)}
+\underbrace{\frac{1-x_i y_j}{1-tx_i y_j}\sum_{k\ge1}(x_i y_j)^{k}}_{\sum_{k\ge1}\mathbb P_t(a_{ij}=k)}
+\underbrace{\frac{1-x_i y_j}{1-tx_i y_j}\sum_{k\ge1}(-t)\,(x_i y_j)^{k}}_{\sum_{k\ge1}\mathbb P_t(a_{ij}=k')}
=1
\]
and all terms are nonnegative when $t\le0$.

For a matrix $A=(a_{ij})$, let $\mathrm{mark}(A)$ be the number of marked entries.
Define the column and row sums
\[
s_j=\sum_{i=1}^m |a_{ij}|\ (1\le j\le n),\qquad
u_i=\sum_{j=1}^n |a_{ij}|\ (1\le i\le m),
\]
and, for $s\in\mathbb Z_{\ge0}^{\,n}$, $u\in\mathbb Z_{\ge0}^{\,m}$ and $r\in\mathbb Z_{\ge0}$, put
\[
\mathcal A_{m,n;s,u,r}
=\Bigl\{A\in\mathcal A_{m,n}:\ \sum_{i=1}^m |a_{ij}|=s_j,\ \sum_{j=1}^n |a_{ij}|=u_i,\ \mathrm{mark}(A)=r\Bigr\}.
\]
Let
\[
Z_t(x,y)=\prod_{i=1}^m\prod_{j=1}^n\frac{1-tx_i y_j}{1-x_i y_j}
\]
be the Cauchy-type normalising constant.
By independence we obtain, for any $A\in\mathcal A_{m,n;s,u,r}$,
\begin{equation}\label{eq:pmass-new}
\mathbb P_t(\{A\})
=\Biggl(\prod_{i=1}^m\prod_{j=1}^n\frac{1-x_i y_j}{1-tx_i y_j}\Biggr)
(-t)^r \prod_{i=1}^m x_i^{u_i}\prod_{j=1}^n y_j^{s_j}
=\frac{1}{Z_t(x,y)}\,(-t)^r\,x^{u}\,y^{s}.
\end{equation}

Two natural questions arise:
\begin{enumerate}
\item[(i)] What is the pushforward distribution on Young diagrams under a suitable generalised RSK correspondence? \label{que-1}
\item[(ii)] What are the edge fluctuations (e.g.\ of $\lambda_1$) when $m,n\to\infty$ under canonical specialisations?
\end{enumerate}

\subsection{Symmetric functions and their $t$-analogue}\label{subsect:sym-fcn}

Let $\Lambda=\Lambda_{\mathbb{Q}}$ denote the ring of symmetric functions in the countably infinite set of variables $x_1,x_2,\ldots$ with coefficients in the field of rational numbers $\mathbb{Q}$.  The degree of a homogeneous symmetric function equips $\Lambda_{\mathbb{Q}}$ with the natural $\mathbb{Z}_{\geqslant 0}$-grading
\[
  \Lambda_{\mathbb{Q}} \;=\; \bigoplus_{k\geqslant 0} \Lambda_{\mathbb{Q}}^{k},
\]
where $\Lambda_{\mathbb{Q}}^{k}$ is the subspace consisting of all homogeneous symmetric functions of degree~$k$.

If the sequence of parts $(\lambda_i)$ is not required to be weakly decreasing, we call $\lambda$ a composition and still write $|\lambda|$ for its weight.  Whenever $\lambda$ is weakly decreasing and $|\lambda| = n$, we say that $\lambda$ is a partition of~$n$.
Endow $\Lambda_{\mathbb{Q}}$ with the standard inner product defined by the orthogonality of the power sums:
\[
  \bigl\langle p_\lambda,\,p_\mu \bigr\rangle
  \;=\;
  z_\lambda\,\delta_{\lambda\mu},
  \qquad
  z_\lambda \;=\; \prod_{i\ge1} i^{\,m_i} m_i!,
  \quad
  \text{for } \lambda = 1^{m_1} 2^{m_2}\dotsm .
\]
The Schur basis $\{s_\lambda\}_{\lambda\in\mathbb Y}$ is orthonormal with respect to this form,
where $\mathbb Y$ denotes the set of partitions (Young diagrams).

Define the generating series
\[
E_{x,t}(z)=\prod_{i\ge1}\frac{1+x_i z}{1+t x_i z}=\sum_{n\ge0}e^{(t)}_n(x)z^n,\quad
H_{x,t}(z)=\prod_{i\ge1}\frac{1-t x_i z}{1-x_i z}=\sum_{n\ge0}h^{(t)}_n(x)z^n,
\]
so that $E_{x,t}(z)H_{x,t}(-z)=1$.
For $\lambda\in\mathbb Y$ with conjugate $\lambda'$, set the \textbf{$t$-Schur function} via the Jacobi-Trudi-like identity (see in Macdonald's book~\cite[Chapter I, \S 5]{Ma})
\[
S_\lambda(x;t)=\det\bigl(h^{(t)}_{\lambda_i-i+j}(x)\bigr)
=\det\bigl(e^{(t)}_{\lambda_i'-i+j}(x)\bigr),
\]
which specialises to $S_\lambda(x;0)=s_\lambda(x)$.
The corresponding Cauchy identity reads
\[
\sum_{\lambda\in\mathbb Y} S_\lambda(x;t)\,s_\lambda(y)
=\prod_{i,j\ge1}\frac{1-t x_i y_j}{1-x_i y_j}.
\]
This is actually the \textbf{Hall-Littlewood kernel}. At $t=-1$ one recovers the kernel appearing in the Schur $Q$-theory and
connections to the shifted Schur world via the shifted RSK, though the measures are not identical in general.

\subsection{Generalised RSK over the alphabet $\mathcal A$}

Now we consider a combinatorial interpretation of the probabilistic model defined above. We work with the totally ordered \textbf{marked alphabet}
\[
\mathcal A=\{\,1'<1<2'<2<3'<3<\cdots\,\}.
\]
A \textbf{marked tableau} $S$ of shape $\lambda\in\mathbb Y$ is a filling of the Young diagram
$\lambda$ by letters in $\mathcal A$ such that:

\medskip
\noindent\textbf{(T1)} entries are weakly increasing along each row and down each column;

\noindent\textbf{(T2)} for each $k\ge1$, every row contains at most one marked $k'$ and every column contains at most one unmarked $k$.

\medskip
For a marked tableau $S$ let $\mathrm{mark}(S)$ denote the number of marked entries, and define
\[
x^{S}\ :=\ \prod_{i\ge1} x_i^{\,m_i(S)},\qquad m_i(S):=\#\{\text{letters }|i|\text{ (either }i'\text{ or }i)\text{ in }S\}.
\]
Likewise, for a (usual) semistandard tableau $U$ with entries in $\{1,2,\dots\}$ set
\[
y^{U}\ :=\ \prod_{j\ge1} y_j^{\,n_j(U)},\qquad n_j(U):=\#\{\text{letters }j\text{ in }U\}.
\]

Let $A=(a_{ij})\in\mathcal A_{m,n}$. For each $(i,j)$ with $|a_{ij}|=k>0$,
create $k$ copies of the pair
\[
(\beta,\alpha)=
\begin{cases}
(j,\ i') & \text{if } a_{ij}\text{ is marked},\\
(j,\ i)  & \text{if } a_{ij}\text{ is unmarked}.
\end{cases}
\]
Arrange all pairs in a \textbf{biword} $w_A=\binom{\beta_1\cdots\beta_N}{\alpha_1\cdots\alpha_N}$
by sorting primarily by $\beta$ in nondecreasing order, and for ties $\beta=j$
by increasing order of $\alpha$ in the alphabet $\mathcal A$:
\[
1' < 1 < 2' < 2 < \cdots .
\]
This \textit{lexicographic order on $(\beta,\alpha)$ with $\beta$ primary and $\alpha$ secondary} order is the natural analogue
of the classical matrix-to-biword encoding in RSK, and will guarantee the recording tableau is semistandard.

Starting from $(S_0,U_0)=(\varnothing,\varnothing)$ and reading the pairs of $w_A$ from left to right,
we insert $\alpha_k$ into the current marked tableau $S_{k-1}$ by the following row bumping rule:
\begin{itemize}
\item[\(\circ\)] if $\alpha_k$ is \emph{unmarked} ($\alpha_k\in\{1,2,\dots\}$), then in the current row
replace the leftmost entry $\gamma$ with $\gamma>\alpha_k$ (if any) by $\alpha_k$ and bump $\gamma$ to the next row; if none exists, append $\alpha_k$ to the end of the row and stop;
\item[\(\circ\)] if $\alpha_k$ is \emph{marked} ($\alpha_k\in\{1',2',\dots\}$), use the same procedure but with the weak inequality $\gamma\ge\alpha_k$.
\end{itemize}
Let $S_k$ be the result of inserting $\alpha_k$. Record $\beta_k$ in the new cell created in $S_k$
to obtain $U_k$ from $U_{k-1}$.
By construction, $U_k$ is a semistandard tableau, 
and the marked tableau $S_k$ satisfies \textbf{(T1)}\textbf{(T2)}: the switch from $>$ to $\ge$ for marked letters
is exactly what enforces \textbf{(T2)}.

\begin{theorem}[{\cite[Generalised RSK for $\mathcal A$]{Mat1}}]\label{thm:gen-rsk-A}
The above procedure is a bijection between matrices $A=(a_{ij})\in\mathcal A_{m,n}$
and pairs $(S,U)$ consisting of a marked tableau $S$ and a semistandard tableau $U$
of the same shape $\lambda\in\mathbb Y$. Moreover,
\[
\sum_{i=1}^m |a_{ij}|=\#\{\text{entries }j\text{ in }U\}=:s_j,\qquad
\sum_{j=1}^n |a_{ij}|=\#\{\text{entries }i\text{ in }S\}=:u_i,
\]
and $\mathrm{mark}(S)=\mathrm{mark}(A)$.
\end{theorem}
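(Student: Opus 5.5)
The plan is to follow the classical proof of RSK (Knuth's version for matrices with nonnegative integer entries) and adapt each step to the mixed strict/weak bumping rule.

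\textbf{Step 1: well-definedness of a single insertion.} We show that inserting one letter $\alpha\in\mathcal A$ into a marked tableau $S$ by the stated rule returns a marked tableau satisfying \textbf{(T1)} and \textbf{(T2)}. The argument goes row by row, as in Schensted's lemma.

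First, the bumped letter $\gamma$ enters the next row at a column weakly to the left of the column it left. This is because the entries of that row lying below $\gamma$ are at least $\gamma$, with strictness forced by \textbf{(T2)} where it matters. Hence columns stay weakly increasing.

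Second, the switch between $>$ and $\ge$ ensures that a marked $k'$ never lands in a row that already contains $k'$. Likewise an unmarked $k$ never lands directly below another $k$. Unmarked letters therefore behave like row-weak, column-strict letters, and marked letters like row-strict, column-weak letters (the ``super'' or hook RSK convention).

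Third, the bumping path moves weakly left. This leads to the standard \emph{row bumping lemma}. If we insert $\alpha$ and then $\alpha'$ with $\alpha\le\alpha'$, the second new box lies strictly right and weakly above the first, except in one case: when $\alpha=\alpha'$ is marked, it lies strictly below and weakly left.

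\textbf{Step 2: the recording tableau is semistandard.} Within a block of equal $\beta=j$, the letters $\alpha$ are inserted in increasing order in $\mathcal A$. Repeated marked letters $i'$ cannot occur within a block, since a single entry $a_{ij}$ contributes to exactly one of $i$ or $i'$ with multiplicity $k$. We must check this against the lemma. A marked $a_{ij}$ creates $k$ copies of $(j,i')$, so equal marked letters do occur within a block. The recording convention must then be adjusted: either record these copies with their bumps going down, or note that the paper's rule places the $j$'s in a column. I will handle this by treating the boxes created by equal marked letters as a vertical strip and checking compatibility with \textbf{(T2)}. The key input is the bumping lemma, and this case analysis is where care is needed. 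By the lemma, the new boxes of a block form a horizontal strip and are added left to right, so $U$ is semistandard.

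\textbf{Step 3: invertibility.} Given $(S,U)$, locate the largest entry $j$ of $U$, taking the rightmost occurrence among equal ones. Remove that box and reverse-bump from $S$. In each row we pick the rightmost entry that is $<\gamma$ if $\gamma$ is unmarked, or $\le\gamma$ if $\gamma$ is marked, mirroring the insertion rules. The reverse bump is well defined by the same column and \textbf{(T2)} arguments. We must also show that the ejected letters within one block come out in decreasing order; this follows from the converse of the bumping lemma. This reconstructs $w_A$, hence $A$, and the two compositions are mutually inverse.

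\textbf{Step 4: the statistics.} Each pair contributes one letter $|\alpha|=i$ to $S$ and one letter $\beta=j$ to $U$. Insertion only permutes letters and never changes their marking. This gives the row and column sums and the identity $\mathrm{mark}(S)=\mathrm{mark}(A)$.

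The main obstacle is the bumping lemma in the presence of repeated marked letters and its interaction with the ordering inside a $\beta$-block. This is what guarantees that $U$ is semistandard and that the reverse algorithm is unambiguous. I would first try to reduce it to the known result of Berele--Remmel / hook RSK by standardising: replace unmarked letters by distinct increasing indices and marked letters by distinct decreasing ones. This turns the problem into ordinary Schensted insertion, after which the lemma can be transported back.
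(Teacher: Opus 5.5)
There is a genuine gap, and it is exactly where your Step 2 contradicts itself. With the encoding as stated, a marked entry $a_{ij}=k'$ puts $k$ consecutive copies of the same marked letter $i'$ into the block $\beta=j$. By your own bumping lemma, each further copy creates a box strictly below and weakly left of the previous one. So your closing claim that ``the new boxes of a block form a horizontal strip'' is false.

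Neither a change of recording convention nor an appeal to \textbf{(T2)} can repair this. \textbf{(T2)} constrains $S$, not $U$, and with this encoding the statement itself is false. For $m=n=1$ and $A=(2')$ the biword is $\binom{1\,1}{1'\,1'}$. Insertion gives $S$ a single column $1'$ over $1'$ and $U$ a single column $1$ over $1$, which is not semistandard. Also $\mathrm{mark}(S)=2\neq 1=\mathrm{mark}(A)$, which breaks Step 4: insertion preserves markings, but the biword contains $\sum_{a_{ij}\text{ marked}}|a_{ij}|$ marked letters, not $\mathrm{mark}(A)$. Meanwhile the legitimate one-row pair $S=(1',1)$, $U=(1,1)$ is never reached.

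The missing idea is the right encoding of marked entries: $a_{ij}=k'$ should contribute one copy of $(j,i')$ and $k-1$ copies of $(j,i)$. The rule $0\mapsto(0,0)$, $k\mapsto(k,0)$, $k'\mapsto(k-1,1)$ is a bijection $\mathcal A\cup\{0\}\to\mathbb Z_{\ge0}\times\{0,1\}$ with the following properties.
\begin{itemize}
\item It preserves $|a_{ij}|$ and marking.
\item With $q=x_iy_j$, it matches the expansion $\frac{1-tq}{1-q}=\sum_{a\ge0}q^a\,(1+(-t)q)$ cell by cell.
\item It identifies the domain with pairs (nonnegative integer matrix, $0$--$1$ matrix). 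This is precisely the Berele--Remmel hook setting you wanted to reduce to.
\end{itemize}

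With this encoding, every $\beta$-block is weakly increasing in $\mathcal A$ with no repeated marked letter. So the exceptional case of your bumping lemma never occurs inside a block, and $U$ is semistandard. Each marked entry yields exactly one marked letter, so $\mathrm{mark}(S)=\mathrm{mark}(A)$.

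In Step 3 you then also need the equality case of the converse bumping lemma. Removing two $j$-boxes of a horizontal strip right to left ejects letters $\alpha_2\le\alpha_1$, with equality only if unmarked. This shows the reconstructed block again has no repeated marked letter, so it decodes to an element of $\mathcal A_{m,n}$. Your Step 1 arguments and the standardisation reduction are otherwise sound.
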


\begin{remark}
By the encoding above, each nonzero entry $a_{ij}$ contributes the weight $(x_i y_j)^{|a_{ij}|}$,
and if it is marked it contributes an extra factor $(-t)^{\mathbf 1_{\text{marked}}}$ in our probabilistic model.
Grouping by shape and summing over $S$ and $U$ of the same shape yields
\[
\sum_{\mathrm{shape}(S)=\lambda}(-t)^{\mathrm{mark}(S)}\,x^S=S_\lambda(x;t),\quad \text{and}\quad
\sum_{\mathrm{shape}(U)=\lambda}y^U = s_\lambda(y),
\]
where $S_\lambda(x;t)$ is the $t$-Schur polynomial and $s_\lambda(y)$ the Schur polynomial.
Thus the pushforward of the product measure to shapes is governed by the \emph{mixed} Cauchy kernel $\sum_\lambda S_\lambda(x;t)s_\lambda(y)$.
\end{remark}

For the lower row $\alpha_1\alpha_2\cdots\alpha_N$ of $w_A$, call a subsequence increasing
if it is weakly increasing in $\mathcal A$ and, for each $k\ge1$, it uses at most one marked letter $k'$.
Let $\ell(w_A)$ be the maximal length of such subsequences.

\begin{lemma}[{\cite[Lemma 1.1]{Mat1}}]\label{lem:column-length-A}
Suppose during the insertion of $\alpha_k$ into the first row of $S_{k-1}$ the new cell appears in column $j$.
Then every increasing subsequence of $\alpha_1\cdots\alpha_k$ ending at $\alpha_k$ has length at most $j$,
and there exists one with length exactly $j$.
\end{lemma}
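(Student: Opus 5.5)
We argue by induction on $k$, exactly as in Schensted's classical theorem, and we track only the first row $R_k$ of $S_k$. The key invariant to establish is:
\begin{quote}
for every $j\le |R_k|$, the $j$-th entry $R_k(j)$ is the smallest possible last letter (in $\mathcal A$) of an increasing subsequence of length $j$ of $\alpha_1\cdots\alpha_k$, and $|R_k|=\ell(\alpha_1\cdots\alpha_k)$.
\end{quote}
Here ``increasing'' is meant in the marked sense: weakly increasing in $\mathcal A$, and each $k'$ used at most once. Note that the first row of a marked tableau is weakly increasing and contains each $k'$ at most once, by \textbf{(T1)}--\textbf{(T2)}. So $R_k$ itself reads as an increasing subsequence in this sense.

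The first step is to record the extension rule that the insertion mimics. An increasing subsequence ending in $\gamma$ can be extended by $\alpha$ exactly when:
\begin{itemize}
\item $\gamma\le\alpha$, if $\alpha$ is unmarked;
\item $\gamma<\alpha$, if $\alpha$ is marked (a marked $\alpha$ may not repeat a previous $\alpha$, and $\gamma=\alpha$ is the only way to violate this, since the sequence is weakly increasing and the letters preceding $\alpha$ are $\le \alpha$).
\end{itemize}
Comparing with the bumping rule, inserting an unmarked $\alpha_k$ lands in column
\[
j=1+\#\{i: R_{k-1}(i)\le\alpha_k\},
\]
and inserting a marked $\alpha_k$ lands in column
\[
j=1+\#\{i:R_{k-1}(i)<\alpha_k\}.
\]
In both cases $j-1$ is the number of positions whose entry is \emph{compatible} with $\alpha_k$ in the above sense. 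Since $R_{k-1}$ is weakly increasing, these positions form an initial segment $1,\dots,j-1$.

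The second step is existence. By the inductive invariant there is an increasing subsequence of $\alpha_1\cdots\alpha_{k-1}$ of length $j-1$ ending in $R_{k-1}(j-1)$, which is compatible with $\alpha_k$. Appending $\alpha_k$ yields one of length $j$ ending at $\alpha_k$. This requires the invariant to hold in the form ``the minimal last letter among length-$(j-1)$ subsequences equals $R_{k-1}(j-1)$.''

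The third step is the upper bound. Suppose some increasing subsequence ending at $\alpha_k$ has length $L$. Its penultimate letter $\gamma$ ends a length-$(L-1)$ subsequence of $\alpha_1\cdots\alpha_{k-1}$, so by minimality $R_{k-1}(L-1)\le\gamma$. Since $\gamma$ is compatible with $\alpha_k$ and compatibility is monotone (smaller letters remain compatible), $R_{k-1}(L-1)$ is compatible with $\alpha_k$. Hence $L-1\le j-1$.

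Finally we must re-establish the invariant for $R_k$. Only position $j$ changes: it is either set to $\alpha_k$ or appended. The new value is minimal for length $j$ because, by the previous step, any length-$j$ subsequence ending at $\alpha_k$ is now available, and $\alpha_k<R_{k-1}(j)$ or $\alpha_k\le R_{k-1}(j)$ by the choice of bumped entry. For the other lengths, the only new subsequences are those ending at $\alpha_k$, which have length $\le j$. For lengths $i<j$ these do not lower the minimum, since $R_{k-1}(i)$ is compatible with $\alpha_k$ and hence $\le\alpha_k$.

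The main obstacle I expect is the marked/unmarked bookkeeping in this last step. When $\alpha_k=a'$ is marked, the entries equal to $a'$ in $R_{k-1}$ are \emph{not} compatible with it, yet we still need $R_{k-1}(i)\le\alpha_k$ for $i<j$. I would handle this by checking the two bumping cases separately against the two compatibility relations, using \textbf{(T2)} to know that $a'$ occurs at most once in the row.
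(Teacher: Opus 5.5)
Your proof is correct. The paper gives no argument of its own for this lemma; it cites Matsumoto's Lemma~1.1, the marked analogue of Schensted's lemma. Your induction on the invariant that $R_k(j)$ is the minimal last letter of an increasing subsequence of length $j$ is exactly Schensted's argument, with the compatibility rule ($\gamma\le\alpha$ for unmarked $\alpha$, $\gamma<\alpha$ for marked $\alpha$) matching the two bumping rules. As a bonus, it gives Theorem~\ref{thm:LIS-A} directly via $|R_N|=\ell(w_A)$.

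The obstacle you anticipate in the last step does not arise. The positions $i<j$ are by construction exactly the compatible ones, so for marked $\alpha_k=a'$ you already have $R_{k-1}(i)<a'$ there, and any entry equal to $a'$ sits at a position $\ge j$. No appeal to \textbf{(T2)} is needed.
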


\begin{theorem}[{\cite[Theorem 3]{Mat1}}]\label{thm:LIS-A}
Let $(S,U)$ be the image of $A$ under the generalised RSK. If $\mathrm{shape}(S)=\mathrm{shape}(U)=\lambda$,
then
\[
\ell(w_A)=\lambda_1.
\]
\end{theorem}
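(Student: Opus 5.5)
We deduce Theorem~\ref{thm:LIS-A} from Lemma~\ref{lem:column-length-A}, in the same way Schensted's theorem follows from its column lemma. The key observation is that the first row of $S_k$ depends only on the letters $\alpha_1,\dots,\alpha_k$ inserted so far. The first row never loses cells: an insertion either appends a cell at the end of row one or replaces an entry in place. Entries bumped into lower rows never return to row one. Hence the length of the first row of $S_k$ is nondecreasing in $k$. It increases by one exactly when $\alpha_k$ creates a new cell at the end of row one, and that cell then sits in column $(\text{current length})+1$.

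\emph{Upper bound.} Let $\alpha_{i_1}\le\cdots\le\alpha_{i_\ell}$ be an increasing subsequence of the lower row of $w_A$, in the sense defined before Lemma~\ref{lem:column-length-A}. When $\alpha_{i_\ell}$ is inserted into the first row of $S_{i_\ell-1}$, it lands in some column $j$. This holds whether it bumps an entry or is appended at the end of the row, and we read Lemma~\ref{lem:column-length-A} as referring to the column of row one where $\alpha_{i_\ell}$ is placed. The lemma then gives $\ell\le j$. Since $j\le$ (length of the first row of $S_{i_\ell})\le$ (length of the first row of $S=S_N$) $=\lambda_1$, we get $\ell(w_A)\le\lambda_1$.

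\emph{Lower bound.} Let $k$ be the last index at which the first row grows. At that step $\alpha_k$ creates a new cell at the end of row one, in column $\lambda_1$. By the existence part of Lemma~\ref{lem:column-length-A}, there is an increasing subsequence of $\alpha_1\cdots\alpha_k$ ending at $\alpha_k$ of length exactly $\lambda_1$. Therefore $\ell(w_A)\ge\lambda_1$.

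Combining the two bounds gives $\ell(w_A)=\lambda_1$. Theorem~\ref{thm:gen-rsk-A} guarantees $\operatorname{shape}(S)=\operatorname{shape}(U)$, so $\lambda_1$ is the same for both tableaux.

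The main obstacle is the reading of Lemma~\ref{lem:column-length-A}: it must apply to the column of row one in which $\alpha_k$ is \emph{placed}, not only when a genuinely new cell is created. If the lemma as quoted covers only the appending case, I would re-prove the general version directly by induction on $k$. The inductive invariant is that the $j$-th entry of row one of $S_{k-1}$ is the smallest possible last letter of an increasing subsequence of length $j$ in $\alpha_1\cdots\alpha_{k-1}$. The two insertion rules, using $>$ for unmarked letters and $\ge$ for marked letters, are exactly what make this invariant compatible with the restriction of at most one $k'$ per subsequence. A repeated unmarked $k$ may extend a subsequence ending in $k$, so it must bump the first entry strictly larger than $k$. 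A marked $k'$ may not follow another $k'$, so it must bump the first entry that is $\ge k'$. Checking this case analysis is the only substantive verification; once it holds, the sandwich argument above completes the proof.
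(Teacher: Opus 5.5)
Your argument is correct and is essentially the intended one. The paper gives no proof of its own: it cites \cite[Theorem 3]{Mat1} and places Lemma~\ref{lem:column-length-A} immediately before it, so that the theorem follows by exactly your Schensted-style sandwich (the upper bound from the row-one column of $\alpha_{i_\ell}$, the lower bound from the last step at which row one grows). Your reading of the lemma as referring to the column where $\alpha_k$ is placed in row one, whether by bumping or appending, is the intended one, and your fallback invariant (the $j$-th entry of row one is the minimal last letter of an increasing subsequence of length $j$, with the $>$ versus $\ge$ rules matching repeatable unmarked versus non-repeatable marked letters) is a correct self-contained proof of that general form.
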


Now we can provide an answer to the first question posed in subsection~\ref{subsect:prob-model}:
For $A\in\mathcal A_{m,n}$ with column-sum vector $s=(s_1,\dots,s_n)$, row-sum vector $u=(u_1,\dots,u_m)$
and $r=\mathrm{mark}(A)$, independence gives
\begin{equation}\label{eq:pmass-final}
\mathbb P_t(\{A\})=\frac{1}{Z_t(x,y)}\,(-t)^r\,x^{u}\,y^{s},\qquad
x^{u}:=\prod_{i=1}^m x_i^{u_i},\ \ y^{s}:=\prod_{j=1}^n y_j^{s_j}.
\end{equation}
Summing over all matrices with $\ell(w_A)\le h$ and then pushing forward through the bijection
$(A\leftrightarrow (S,U))$ yields
\begin{align*}
\mathbb P_{t,m,n}(\ell\le h)
&= \mathbb P_t\bigl(\{\,A\in\mathcal A_{m,n} \,:\, \ell(w_A)\le h\,\}\bigr)\\
&= \sum_{s,u,r}\ \sum_{\substack{A\in\mathcal A_{m,n;s,u,r}\\ \ell(w_A)\le h}}
   \frac{1}{Z_t(x,y)}\,(-t)^r\,x^{u}\,y^{s}\\
&= \frac{1}{Z_t(x,y)}\sum_{\substack{\lambda\in\mathbb Y\\ \lambda_1\le h}}
   \Biggl[\sum_{\substack{S\ \mathrm{marked}\\ \mathrm{shape}(S)=\lambda}} (-t)^{\mathrm{mark}(S)}\,x^{S}\Biggr]
   \Biggl[\sum_{\substack{U\ \mathrm{SSYT}\\ \mathrm{shape}(U)=\lambda}} y^{U}\Biggr]\\
&= \frac{1}{Z_t(x,y)}\sum_{\substack{\lambda\in\mathbb Y\\ \lambda_1\le h}}
   S_\lambda(x_1,\dots,x_m;\,t)\,s_\lambda(y_1,\dots,y_n).
\end{align*}

\begin{remark}
The answer to Question~\ref{que-1} is not unique. The correspondence~\cite{Mat1} here is tailored to the $t$-Schur polynomials $S_\lambda(x;t)$ paired with $s_\lambda(y)$.
If one replaces the present insertion by the \textbf{shifted RSK} (adapted to strict partitions and Schur's $Q/P$-functions) with $t=-1$,
the induced pushforward law becomes the \textbf{shifted Schur measure} introduced by Tracy-Widom~\cite{TW}.

At $t=-1$ our Cauchy kernel coincides with that of the $Q/P$-theory. However, the measures are, in general, not identical, and they nevertheless share the same edge asymptotics under standard specialisations.
\end{remark}

\subsection{The $t$-Schur measure and organisation of the paper}

Building on the material above, we define the \textbf{$t$-Schur measure} on partitions, introduced by Matsumoto~\cite{Mat1}.

\begin{definition}\label{def:tSchur-measure}
Let $x=(x_i)_{i\ge1}$ and $y=(y_j)_{j\ge1}$ be specialisations of the power sums for which
\[
Z_t(x,y):=\sum_{\mu\in\mathbb Y} S_\mu(x;t)\,s_\mu(y)
\]
converges. Equivalently, by the mixed Cauchy identity,
\[
Z_t(x,y)=\prod_{i,j\ge1}\frac{1-t\,x_i y_j}{1-x_i y_j}.
\]
The \textbf{$t$-Schur measure} is
\begin{equation}\label{eq:tSchur-measure}
\mathbb P_t(\{\lambda\})
:=\frac{1}{Z_t(x,y)}\,S_\lambda(x;t)\,s_\lambda(y)\qquad(\lambda\in\mathbb Y).
\end{equation}
\end{definition}

\begin{remark}
At $t=0$ we recover the Schur measure of Okounkov~\cite{O}, that is,
$\mathbb P_0(\lambda)\propto s_\lambda(x)s_\lambda(y)$.
\end{remark}

It is shown that the $t$-Schur measure inherits much of the integrable structure of the Schur measure (see Section~\ref{sect:corr-fcn} later). In particular, its length-truncated sums admit Gessel-type Toeplitz determinants with symbol $H_{x,t}(z)H_y(z^{-1})$, and hence Fredholm determinant representations via Borodin-Okounkov factorisation~\cite{BO}. Under the $\alpha$-specialisation, Matsumoto~\cite{Mat1} showed that the top row $\lambda_1$ exhibits Tracy-Widom $F_2$ fluctuations in the standard soft-edge scaling, as in the (shifted) Schur case~\cite{Joh2,TW}. More broadly, random partitions under Schur-type measures form determinantal point processes whose
edge scaling belongs to the same universality class as unitary random matrices.

These features place the $t$-Schur framework alongside a family of integrable deformations and extensions:
the periodic Schur process~\cite{BoPer}, shifted Schur measure~\cite{TW,Mat2}, Macdonald processes~\cite{BC}, and Hall-Littlewood models~\cite{BufPet},
among others. However, we would like to stress that the current results are different from the process arising from Hall-Littlewood measures.

The paper is organised as follows. In Section~\ref{sect:VO-realise}, we give a vertex operator realisation of the $t$-Schur functions.  As byproducts, we show vertex-algebraic proofs of the $t$-deformed Cauchy identity and the Gessel-type determinant for $\sum_{\ell(\lambda)\leqslant k}S_\lambda(x;t)s_\lambda(y)$, with Toeplitz symbol $\phi(z)=H_{x,t}(z)H_y(z^{-1})$. In Section~\ref{sect:infinite-wedge}, we recall the infinite wedge representation and establish the corresponding boson-fermion correspondence relevant to~\eqref{eq:tSchur-measure}. In Section~\ref{sect:corr-fcn}, using vertex operator calculus, we compute correlation functions and show that the $t$-Schur measure is determinantal, giving explicit kernels and their Fredholm representations. In Sections~\ref{sect:t-Plancherel-measure}, \ref{sect:t-ascent-pair}, and \ref{sect:limit-distri}, we give some applications: a $t$-Plancherel (Poissonised) family and a $t$-version of ascent pairs interpolating between longest increasing subsequence ($t=0$) and the shifted case ($t=-1$).
We also obtain scaling limits for $\lambda_1$ via correlation kernels, where Matsumoto's main result~\cite[Theorem~1]{Mat1} appears as a specialisation.
For $t=0$, our results recover those of Johansson~\cite{Joh2}, Okounkov~\cite{O}, and Borodin-Okounkov-Olshanski~\cite{BOO}.

\section{Vertex operators realisations}\label{sect:VO-realise}
Classical symmetric functions occupy a central place across mathematics and physics. In particular, they are closely connected to representations of Lie algebras in (in)finite dimensions~\cite{W}, integrable systems, and probability measures~\cite{Mat2,O}.

Building on Bernstein's seminal work~\cite{Ze}, vertex operator techniques have provided systematic constructions for prominent families of symmetric functions, including the Schur and Schur $Q$-functions~\cite{J1}, as well as the Hall-Littlewood functions~\cite{J2}, and some particular cases of Jack functions~\cite{CJ} and Macdonald functions~\cite{J3, J4, Za}.

\subsection{Vertex operators arising from $t$-Schur functions}

Begin by considering the ring $\Lambda_{\mathbb{Q}}$ as a Fock space associated with the infinite-dimensional Heisenberg algebra.  Define
\[
  a_{-n}=p_{n}\quad(n\geqslant 1),
  \qquad
  a_{n}=n\frac{\partial}{\partial p_{n}},
\]
where the power-sum symmetric function is $p_{n}(x)=\sum_{i=1}^{\infty} x_i^{\,n}$.
The operators $\{a_{n}\mid n\neq0\}$ together with the central element
$c=I$ generate a subalgebra
\(
  \mathcal{H}\subset\operatorname{End}(\Lambda)
\)
that satisfies
\[
  [a_{m},a_{n}]=m\delta_{m,-n}\,c,
  \qquad
  [c,a_{n}]=0,
\]
and is therefore isomorphic to the (complex) infinite-dimensional
Heisenberg algebra.  The space $\Lambda$ is the unique irreducible
$\mathcal{H}$-module determined by the conditions
\[
  a_{n}.1=0\quad(n>0),
  \qquad
  c=1.
\]
Equip $\Lambda$ with the Hermitian form for which
$a_{n}^{\ast}=a_{-n}$.  The monomial basis
$a_{-\lambda}=p_{\lambda}$ is orthogonal, and
\[
  \langle a_{-\lambda},a_{-\mu}\rangle
  =z_{\lambda}\,\delta_{\lambda\mu}.
\]

Now let us recall the vertex operator realisation of Schur functions.
Define the \textbf{Bernstein vertex operators}~\cite{Ze} (resp. \textbf{Jing operators}~\cite{J1}\footnote{Jing operators are the adjoint operators of Bernstein vertex operators and adjoint is taken with respect to the standard scalar product on $\Lambda$.}) $S(z)$ (resp. $S^{*}(z)$) $: \Lambda \longrightarrow \Lambda[[z,z^{-1}]]$ as linear maps,
by
\[
\begin{aligned}
S(z)
  &=\exp\Bigl(\sum_{n=1}^{\infty}\frac{a_{-n}}{n}\,z^{n}\Bigr)\,
    \exp\Bigl(-\sum_{n=1}^{\infty}\frac{a_{n}}{n}\,z^{-n}\Bigr) \\
  &=\exp\Bigl(\sum_{n=1}^{\infty}\frac{p_{n}}{n}\,z^{n}\Bigr)\,
    \exp\Bigl(-\sum_{n=1}^{\infty}\frac{\partial}{\partial p_{n}}\,
                z^{-n}\Bigr)
   \;=\sum_{n\in\mathbb{Z}}S_{n}\,z^{-n},
\\
S^{*}(z)
  &=\exp\Bigl(-\sum_{n=1}^{\infty}\frac{a_{-n}}{n}\,z^{n}\Bigr)\,
    \exp\Bigl(\sum_{n=1}^{\infty}\frac{a_{n}}{n}\,z^{-n}\Bigr) \\
  &=\exp\Bigl(-\sum_{n=1}^{\infty}\frac{p_{n}}{n}\,z^{n}\Bigr)\,
    \exp\Bigl(\sum_{n=1}^{\infty}\frac{\partial}{\partial p_{n}}\,
                z^{-n}\Bigr)
   \;=\sum_{n\in\mathbb{Z}}S^{*}_{n}\,z^{n}.
\end{aligned}
\]

\begin{proposition}[\cite{J1}] \label{prop:OPE}
The operator product expansions satisfy
\begin{align}
S(z)\,S(w)   &=\,:S(z)S(w):\,(1-wz^{-1}),\\
S^{*}(z)\,S^{*}(w) &=\,:S^{*}(z)S^{*}(w):\,(1-wz^{-1}),\\
S(z)\,S^{*}(w) &=\,:S(z)S^{*}(w):\,(1-wz^{-1})^
{-1},
\end{align}
where \(|w|<\min\{|z|,|z|^{-1}\}\). The rational factors are understood as formal power-series expansions in the variable \(w\).
\end{proposition}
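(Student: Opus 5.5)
The plan is to split each vertex operator into its creation and annihilation halves, move the annihilation half of the left operator past the creation half of the right operator, and read off the resulting scalar factor. Write
\[
S(z)=E_-(z)E_+(z),\qquad E_-(z)=\exp\Bigl(\sum_{n\ge1}\frac{a_{-n}}{n}z^n\Bigr),\quad E_+(z)=\exp\Bigl(-\sum_{n\ge1}\frac{a_n}{n}z^{-n}\Bigr),
\]
and similarly $S^*(w)=F_-(w)F_+(w)$ with the signs of both exponents reversed. The normal ordered product $:S(z)S(w):$ means all creation parts ($a_{-n}$) are placed to the left of all annihilation parts ($a_n$). So $S(z)S(w)=E_-(z)E_+(z)E_-(w)E_+(w)$, and the only reordering needed is to commute $E_+(z)$ past $E_-(w)$.

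The key tool is the standard identity $e^{A}e^{B}=e^{[A,B]}e^{B}e^{A}$, valid whenever $[A,B]$ is central. From the Heisenberg relation $[a_m,a_n]=m\delta_{m,-n}c$ with $c=1$, each commutator below is a scalar.

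For the first relation, take $A=-\sum_n \frac{a_n}{n}z^{-n}$ and $B=\sum_m\frac{a_{-m}}{m}w^m$. Only the terms with $m=n$ contribute, giving
\[
[A,B]=-\sum_{n\ge1}\frac{1}{n^2}[a_n,a_{-n}]\,z^{-n}w^n=-\sum_{n\ge1}\frac{(w/z)^n}{n}=\log(1-w/z).
\]
Hence $E_+(z)E_-(w)=(1-wz^{-1})E_-(w)E_+(z)$, which gives $S(z)S(w)=:S(z)S(w):(1-wz^{-1})$.

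For $S^*(z)S^*(w)$, both exponents flip sign, so the product of the two sign changes leaves the commutator unchanged and the factor is again $(1-wz^{-1})$.

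For $S(z)S^*(w)$, only the creation part of $S^*(w)$ flips sign. The commutator therefore becomes $-\log(1-w/z)$, and the factor is $(1-wz^{-1})^{-1}$.

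The only genuinely delicate point is convergence and interpretation. The exponent $-\sum_n (w/z)^n/n$ is summed as a formal power series in $w/z$. The stated region $|w|<\min\{|z|,|z|^{-1}\}$ ensures $|w/z|<1$, so the logarithm series converges. It also ensures that the matrix coefficients of the normal ordered products, which involve expansions in $z$ and $z^{-1}$ as well as in $w$, are well defined.

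I would also state the justification of the commutation identity in the formal setting. Each side acts on any fixed element of $\Lambda$ as a finite sum in each degree, because the $a_n$ with $n>0$ lower degree. This makes the manipulation legitimate degreewise before any analytic convergence is invoked.
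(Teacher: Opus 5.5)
Your proof is correct and is essentially the paper's own route. The paper cites \cite{J1} for this proposition without proving it, but right after the $t$-deformed version it carries out exactly this BCH contraction, $[B(z),A(w)]=\log(1-wz^{-1})$, which at $t=0$ is your computation, and your sign bookkeeping for the $S^*S^*$ and $SS^*$ cases is right.
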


We now pass to the $t$-deformed setting.
The vertex operators for the $t$-Schur functions are defined as $Y(z),\,Y^{*}(z):\Lambda \longrightarrow \Lambda[[z,z^{-1}]]$,
with
\begin{align}
Y(z)
   &=\exp\Bigl(\sum_{n=1}^{\infty}\frac{1-t^{n}}{n}\,a_{-n}\,z^{n}\Bigr)\,
     \exp\Bigl(-\sum_{n=1}^{\infty}\frac{1}{1-t^{n}}\frac{a_{n}}{n}\,z^{-n}\Bigr) \\\nonumber
   &=\exp\Bigl(\sum_{n=1}^{\infty}\frac{1-t^{n}}{n}\,p_{n}\,z^{n}\Bigr)\,
     \exp\Bigl(-\sum_{n=1}^{\infty}\frac{1}{1-t^{n}}\frac{\partial}{\partial p_{n}}\,z^{-n}\Bigr)
    \;=\sum_{n\in\mathbb{Z}}Y_{n}\,z^{-n},
\\
Y^{*}(z)
   &=\exp\Bigl(-\sum_{n=1}^{\infty}\frac{1-t^{n}}{n}a_{-n}\,z^{n}\Bigr)\,
     \exp\Bigl(\sum_{n=1}^{\infty}\frac{1}{1-t^{n}}\frac{a_{n}}{n}\,z^{-n}\Bigr) \\
   &=\exp\Bigl(-\sum_{n=1}^{\infty}\frac{1-t^{n}}{n}\,p_{n}\,z^{n}\Bigr)\,
     \exp\Bigl(\sum_{n=1}^{\infty}\frac{1}{1-t^{n}}\frac{\partial}{\partial p_{n}}\,z^{-n}\Bigr)
    \;=\sum_{n\in\mathbb{Z}}Y_{n}^{*}\,z^{\,n}.\nonumber
\end{align}

\begin{remark}
    When $t=0$, we recover the Bernstein-Jing case, that is, $Y(z;t=0)=S(z)$ and $Y^*(z;t=0)=S^*(z)$.
\end{remark}

The \textbf{normal-ordered product}, written $:\cdots:$, is obtained by commuting every creation operator $a_{-n}$ to the left of every annihilation operator $a_{n}$. For the $t$-Schur vertex operators, this gives, for example,
\[
\begin{aligned}
: Y(z)Y(w) :
   &=
     \exp\Bigl(\sum_{n=1}^{\infty}
               \frac{1-t^{n}}{n}\,a_{-n}\,(z^{n}+w^{n})\Bigr)\,
     \exp\Bigl(-\sum_{n=1}^{\infty}
              \frac{1}{1-t^{n}}\frac{a_{n}}{n}\,(z^{-n}+w^{-n})\Bigr)\\
   &=
     \exp\Bigl(\sum_{n=1}^{\infty}
              \frac{1-t^{n}}{n}\,p_{n}\,(z^{n}+w^{n})\Bigr)\,
     \exp\Bigl(-\sum_{n=1}^{\infty}
               \frac{1}{1-t^{n}}\frac{\partial}{\partial p_{n}}\,(z^{-n}+w^{-n})\Bigr).
\end{aligned}
\]

\begin{proposition}
The vertex operators satisfy the following relations on $\Lambda$:
\begin{align}
Y(z)\,Y(w) &= :Y(z)\,Y(w):(1-wz^{-1}), \\
Y^*(z)\,Y^\ast(w) &= :Y^*(z)\,Y^\ast(w):(1-wz^{-1}), \\
Y(z)\,Y^*(w) &= :Y(z)\,Y^*(w):\,(1-wz^{-1})^{-1}.
\end{align}
where \(|w|<\min\{|z|,|z|^{-1}\}\).
\end{proposition}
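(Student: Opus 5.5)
The plan is to run the standard Heisenberg normal-ordering computation, exactly as in the proof of Proposition~\ref{prop:OPE}. The only new feature is that the $t$-dependent coefficients cancel in pairs.

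Write each vertex operator as $Y(z)=Y_+(z)Y_-(z)$. Here $Y_+(z)=\exp(\sum_{n\ge1}A_n a_{-n})$ is the creation part and $Y_-(z)=\exp(\sum_{n\ge1}B_n a_n)$ is the annihilation part, with
\[
A_n=\frac{1-t^n}{n}z^n,\qquad B_n=-\frac{1}{1-t^n}\frac{z^{-n}}{n}.
\]
Similarly $Y^*(w)=Y^*_+(w)Y^*_-(w)$, with the signs of both coefficients flipped.

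The product $Y(z)Y(w)$ equals $Y_+(z)\,[Y_-(z)Y_+(w)]\,Y_-(w)$. It therefore suffices to move the annihilation part of the left factor past the creation part of the right factor.

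The key tool is the relation $e^{X}e^{Y}=e^{Y}e^{X}e^{[X,Y]}$, valid whenever $[X,Y]$ is central. Take $X=\sum_n B_n a_n$ and $Y=\sum_m A'_m a_{-m}$, where $A'_m=\frac{1-t^m}{m}w^m$. By $[a_n,a_{-m}]=n\delta_{nm}$,
\[
[X,Y]=\sum_{n\ge1} n B_n A'_n=-\sum_{n\ge1}\frac{1}{1-t^n}\cdot\frac{1-t^n}{n}\,(w/z)^n=-\sum_{n\ge1}\frac{(w/z)^n}{n}=\log(1-wz^{-1}).
\]
The factors $(1-t^n)$ and $(1-t^n)^{-1}$ cancel. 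This cancellation is the whole point of the chosen normalisation, and it is why the contraction is $t$-independent and identical to the Schur case. It gives
\[
Y(z)Y(w)=:Y(z)Y(w):(1-wz^{-1}).
\]

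For $Y^*(z)Y^*(w)$ both coefficient signs flip, so the product of the signs is unchanged. For $Y(z)Y^*(w)$ exactly one sign flips, giving $[X,Y]=-\log(1-wz^{-1})$ and hence the factor $(1-wz^{-1})^{-1}$.

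The only delicate point is convergence and the meaning of these identities, and this is the step I would treat most carefully. The exponent series $\sum (w/z)^n/n$ converges for $|w|<|z|$, and this is the region stated in the proposition. Alternatively, one reads every identity as an identity of formal series expanded in nonnegative powers of $w/z$. I would also note that each coefficient of $z^{-n}$ or $w^{-n}$, acting on a fixed element of $\Lambda$, is a finite sum, because the annihilation exponentials act locally nilpotently on each graded piece; so the products are well defined as formal operator series. One also needs $1-t^n\neq0$, which holds for $t\le 0$, $t\ne -1$; at $t=-1$ one interprets the formulas formally or takes a limit with odd/even modes handled appropriately. This is the main obstacle, and I would flag it with a brief remark rather than resolve it fully. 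The extra condition $|w|<|z|^{-1}$ in the statement is inherited from Proposition~\ref{prop:OPE}; it is harmless and is not needed for the computation.
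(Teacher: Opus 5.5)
Your proof is correct and is essentially the paper's argument: the paper supports these OPEs only by the remark, placed after the Clifford-relations proposition, that the single BCH contraction $[B(z),A(w)]=\log(1-wz^{-1})$ is $t$-independent because the factors $1-t^n$ and $(1-t^n)^{-1}$ cancel. You carry this out in full for all three products, including the sign bookkeeping for $Y^*$ and the formal-series meaning of the identities. Your worry about $t=-1$ needs no resolution: there $Y(z)$ and $Y^*(z)$ are simply undefined, since $\frac{1}{1-t^n}$ blows up for even $n$, so the statement tacitly assumes $1-t^n\neq 0$ and no limiting procedure is needed.
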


\begin{proposition}
The operators $Y_n$ and $Y_n^\ast$ satisfy the Clifford relations:
\begin{align}
\{Y_m,Y_n^*\}=\delta_{m,n}I,\qquad \{Y_m,Y_n\}=\{Y_m^*,Y_n^*\}=0
\end{align}
Here $\delta_{m,n}$ denotes the Kronecker delta.
\end{proposition}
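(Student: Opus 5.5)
We derive the Clifford relations from the operator product expansions in the preceding proposition, using the standard generating-function argument. Write the anticommutators as coefficient extractions from the formal series
\[
Y(z)Y^*(w)+Y^*(w)Y(z),\qquad Y(z)Y(w)+Y(w)Y(z),\qquad Y^*(z)Y^*(w)+Y^*(w)Y^*(z).
\]
The key point is that the normal-ordered products are symmetric. Since the creation parts commute among themselves and the annihilation parts commute among themselves, we have $:Y(z)Y(w):\,=\,:Y(w)Y(z):$, and likewise $:Y(z)Y^*(w):\,=\,:Y^*(w)Y(z):$. So each anticommutator reduces to a fixed normal-ordered operator multiplied by a sum of two expansions of a rational function in different regions.

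First we treat $\{Y_m,Y_n\}$. Swapping the roles of $z$ and $w$ in the first relation gives
\[
Y(z)Y(w)+Y(w)Y(z)=\,:Y(z)Y(w):\,\bigl[(1-wz^{-1})+(1-zw^{-1})\bigr].
\]
This is not identically zero as a polynomial. The correct way to see the vanishing is to rewrite the relation as $Y(z)Y(w)=\,:Y(z)Y(w):\,z^{-1}(z-w)$, and likewise $Y(w)Y(z)=\,:Y(w)Y(z):\,w^{-1}(w-z)$. The normal-ordered product contains the prefactor $\exp\bigl(\sum_n\frac{1-t^n}{n}a_{-n}(z^n+w^n)\bigr)$, and I would absorb the asymmetric monomial factors into it, exactly as Jing does for $S(z)$ in \cite{J1}. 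In other words, one checks that $z^{-1}(z-w)\,:Y(z)Y(w):$ is antisymmetric under $z\leftrightarrow w$ once the conventions for $Y_n$ (with $Y(z)=\sum Y_n z^{-n}$) are tracked. If a mismatch persists, I would first verify the OPE by direct contraction. The contraction of the annihilation part of $Y(z)$ with the creation part of $Y(w)$ equals
\[
\exp\Bigl(-\sum_n \frac{1}{1-t^n}\cdot\frac{1-t^n}{n}\,(w/z)^n\Bigr)=1-w/z,
\]
because the factors $1-t^n$ cancel. This cancellation is exactly why the $t$-deformation does not affect the relations, and it allows the $t=0$ argument of \cite{J1} (the Proposition on Bernstein/Jing operators) to be transported verbatim.

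Second, for $\{Y_m,Y_n^*\}$ the same contraction gives
\[
Y(z)Y^*(w)+Y^*(w)Y(z)=\,:Y(z)Y^*(w):\,\Bigl[\tfrac{1}{1-w/z}\Big|_{|w|<|z|}+\tfrac{1}{1-z/w}\cdot(\text{sign/shift})\Big|_{|z|<|w|}\Bigr],
\]
and the bracket equals the formal delta function $\delta(w/z)=\sum_{k\in\mathbb Z}(w/z)^k$, up to the monomial normalisation. On the support of the delta function we may set $w=z$. There $:Y(z)Y^*(z):\,=1$, since the exponents cancel exactly. Extracting the coefficient of $z^{-m}w^{n}$ then yields $\delta_{m,n}I$. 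The $Y^*Y^*$ case is identical to the $YY$ case.

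The main obstacle is bookkeeping: the OPE in the preceding proposition is stated with normalisation $(1-wz^{-1})$ rather than $(z-w)$, so reversing the order introduces a factor $-z/w$ relative to the naive swap. The indexing of $Y(z)=\sum Y_nz^{-n}$ versus $Y^*(z)=\sum Y^*_nz^n$ must be matched against this to obtain genuine anticommutation rather than a shifted relation. I would resolve this by recomputing $Y(w)Y(z)$ directly by contraction, which gives $1-z/w=-(z/w)(1-w/z)$, and then checking that the extra factor $-z/w$ is absorbed by the shift in indices. This is the same mechanism that makes the Bernstein operators satisfy $S_mS_n+S_{n-1}S_{m+1}=0$. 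If the identity holds only in this shifted form, I would state that the stated Clifford relations hold for suitably reindexed modes, consistent with \cite{J1} at $t=0$.
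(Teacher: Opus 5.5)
The gap is not bookkeeping: with the modes as defined, $Y(z)=\sum Y_nz^{-n}$ and $Y^*(z)=\sum Y^*_nz^{n}$ on $\Lambda$, the displayed relations are false, so no version of your argument can close it. Directly from the contraction,
\[
Y_{-1}^2.1=[z^1w^1]\,(1-w/z)H^{(t)}(z)H^{(t)}(w)=(h^{(t)}_1)^2-h^{(t)}_2=\tfrac12\bigl((1-t)^2p_1^2-(1-t^2)p_2\bigr)\neq0,
\]
so $\{Y_{-1},Y_{-1}\}\neq0$. Also $Y_0.1=Y_0^*.1=1$, which gives $\{Y_0,Y_0^*\}.1=2$.

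Correspondingly, your intermediate steps fail. The product $:Y(z)Y(w):$ is symmetric, while $1-w/z\mapsto 1-z/w=-(z/w)(1-w/z)$, so $z^{-1}(z-w):Y(z)Y(w):$ is not antisymmetric. The exponentials on $\Lambda$ contain no zero-mode factor that could absorb the monomial $z/w$. In the mixed case the bracket is $\sum_{k\ge0}(w/z)^k+\sum_{k\ge0}(z/w)^k=\delta(w/z)+1$, so $\{Y(z),Y^*(w)\}=\delta(w/z)+:Y(z)Y^*(w):$, and the second term is precisely the obstruction. Your fallback to ``suitably reindexed modes'' cannot rescue the statement on $\Lambda$ either. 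Since $Y_{-1}^2\neq0$, no relabelling of the modes produces operators that square to zero.

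The paper gives no argument beyond the remark that the factors $1-t^n$ cancel in the contraction, so the OPE prefactors are those of the Schur case. That is the same correct observation you make. What it transports from $t=0$, however, is Jing's \emph{shifted} system, and your method proves it once you commit to it. The OPEs give $zY(z)Y(w)+wY(w)Y(z)=0$ and $zY^*(z)Y^*(w)+wY^*(w)Y^*(z)=0$. Using $:Y(z)Y^*(z):=1$, they also give $Y(z)Y^*(w)+\tfrac{z}{w}Y^*(w)Y(z)=\delta(w/z)$. Extracting coefficients yields
\[
Y_mY_n+Y_{n+1}Y_{m-1}=0,\qquad Y^*_mY^*_n+Y^*_{n-1}Y^*_{m+1}=0,\qquad Y_mY^*_n+Y^*_{n+1}Y_{m+1}=\delta_{m,n}.
\]
The form $S_mS_n+S_{n-1}S_{m+1}=0$ you quote belongs to the convention $S(z)=\sum S_nz^{n}$; with the paper's $z^{-n}$ the shift is reversed.

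Genuine Clifford relations require adjoining the charge. Work on $\Lambda\otimes\mathbb{C}[e^{\pm\theta}]$ with $z^{\partial_\theta}e^{c\theta}=z^{c}e^{c\theta}$. The fields $Y(z)e^{\theta}z^{\partial_\theta}$ and $Y^*(z)e^{-\theta}z^{-\partial_\theta}$ then have contraction factors $z-w$ and $(z-w)^{-1}$. Their modes act on the charge-$c$ sector as $Y_{k+c}$ and $Y^*_{k+c}$, a charge-dependent shift, and they satisfy the CAR up to a constant relabelling. Either that statement, or the shifted relations above, is what can actually be proved.
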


Although the $t$-Schur vertex operators $Y(z),Y^*(z)$ explicitly depend on $t$, in the Baker-Campbell-Hausdorff contraction, the creation and annihilation coefficients multiply to 1, so the OPE rational prefactors coincide with the Schur case.

For example, writing $Y(z)=e^{A(z)}e^{B(z)}$ with $A(z)=\sum_{n\geqslant 1}\frac{1-t^n}{n}\,a_{-n}z^n$ and $B(z)=-\sum_{n\geqslant 1}\frac{1}{1-t^n}\frac{a_n}{n}z^{-n}$, the only nonzero BCH contraction is
\[
[B(z),A(w)]
=-\sum_{n\ge1}\frac{1}{1-t^n}\frac{1}{n}z^{-n}\cdot\frac{1-t^n}{n}w^n\,[a_n,a_{-n}]
=-\sum_{n\ge1}\frac{1}{n}\Bigl(\frac{w}{z}\Bigr)^{n}
=\log\bigl(1-wz^{-1}\bigr).
\]

\subsection{Realisations of the $t$-Schur function}\label{subsect:vo-realisation}

Expanding the Jacobi-Trudi determinant of $t$-Schur functions tells us
\[
S_{\lambda}(x;t)
=
\det
\begin{bmatrix}
h_{\lambda_1}^{(t)} & h_{\lambda_1+1}^{(t)} & \cdots & h_{\lambda_1+k-1}^{(t)}\\
h_{\lambda_2-1}^{(t)} & h_{\lambda_2}^{(t)} & \cdots & h_{\lambda_2+k-2}^{(t)}\\
\vdots & \vdots & \ddots & \vdots \\
h_{\lambda_k-k+1}^{(t)} & h_{\lambda_k-k+2}^{(t)} & \cdots & h_{\lambda_k}^{(t)}
\end{bmatrix},
\qquad k=\ell(\lambda),
\]
with the usual convention $h_m^{(t)}=0$ for $m<0$ and $h_0^{(t)}=1$.

Based on the preparations above, it is sufficient to give the vertex operator realisations of the $t$-Schur functions.

\begin{theorem}\label{thm:Jacobi-Trudi}
For any partition $\lambda=(\lambda_1,\dots,\lambda_k)$ with conjugate $\lambda'=(\lambda'_1,\lambda'_2,\dots)$ one has
\begin{equation}\label{e:det-tSchur}
Y_{-\lambda_1}Y_{-\lambda_2}\cdots Y_{-\lambda_k}\,.1
   \;=\;
S_\lambda(x;t)
   \;=\;
\det\bigl(h^{(t)}_{\lambda_i-i+j}\bigr)_{1\leqslant i,j\leqslant \ell(\lambda)},
\end{equation}
where $h^{(t)}_m$ denotes the $m$th $t$-complete symmetric function defined above.
\end{theorem}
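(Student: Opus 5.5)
The plan is the standard Bernstein-type argument: compute the product of $k$ vertex operators applied to the vacuum, read off coefficients, and recognise a Vandermonde-weighted generating function of the $h^{(t)}$'s, which antisymmetrises to the Jacobi--Trudi determinant.

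\textbf{Step 1 (the creation part).} First identify the creation half of $Y(z)$ with $H_{x,t}$. Since $\sum_{n\ge1}\frac{1-t^n}{n}p_nz^n=\log\prod_i\frac{1-tx_iz}{1-x_iz}$, we have
\[
\exp\Bigl(\sum_{n\ge1}\frac{1-t^n}{n}p_nz^n\Bigr)=H_{x,t}(z)=\sum_{m\ge0}h^{(t)}_m z^m .
\]
The annihilation half kills $1$, so $Y(z).1=H_{x,t}(z)$. Hence $Y_{-m}.1=h^{(t)}_m$, and $Y_{-m}.1=0$ for $m<0$.

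\textbf{Step 2 (product of $k$ operators).} Iterate the OPE $Y(z)Y(w)={:}Y(z)Y(w){:}\,(1-wz^{-1})$ from the preceding Proposition; equivalently, commute each $e^{B(z_i)}$ past $e^{A(z_j)}$ for $j>i$ using $[B(z),A(w)]=\log(1-wz^{-1})$. This gives
\[
Y(z_1)\cdots Y(z_k).1=\prod_{1\le i<j\le k}\bigl(1-z_jz_i^{-1}\bigr)\prod_{i=1}^kH_{x,t}(z_i),
\]
expanded in the region $|z_1|>\cdots>|z_k|$, i.e.\ as a power series in $z_j/z_i$ for $i<j$.

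\textbf{Step 3 (extracting coefficients).} The left-hand side of the theorem is the coefficient of $z_1^{\lambda_1}\cdots z_k^{\lambda_k}$. Write $\prod_{i<j}(1-z_j/z_i)=\det\bigl(z_i^{\,j-i}\bigr)$ after factoring; more precisely,
\[
\prod_{i<j}(1-z_jz_i^{-1})=\prod_i z_i^{-(k-i)}\cdot\prod_{i<j}(z_i-z_j)=\prod_i z_i^{-(k-i)}\sum_{\sigma\in S_k}\mathrm{sgn}(\sigma)\prod_i z_i^{\,k-\sigma(i)} .
\]
This is a Laurent polynomial, so no convergence issue arises. The coefficient of $\prod_i z_i^{\lambda_i}$ in $\prod_i z_i^{\,i-\sigma(i)}H_{x,t}(z_i)$ is $\prod_i h^{(t)}_{\lambda_i-i+\sigma(i)}$. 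Summing over $\sigma$ gives $\det\bigl(h^{(t)}_{\lambda_i-i+j}\bigr)$, which equals $S_\lambda(x;t)$ by definition. The equality of the $h$- and $e$-forms is taken from Macdonald, via $E_{x,t}(z)H_{x,t}(-z)=1$.

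\textbf{Main obstacle.} The only delicate point is the bookkeeping in Step 2. The normal-ordering contractions must be taken with the correct ordering: $B(z_i)$ meets $A(z_j)$ only for $j>i$. The $t$-dependent factors $\frac{1}{1-t^n}$ and $1-t^n$ cancel exactly, as already noted in the BCH computation, so the contraction is $t$-independent. Once the product formula is in hand, Step 3 is the classical Bernstein argument. One should also note that the terms $h_m^{(t)}$ with $m<0$ vanish automatically, since $H_{x,t}$ is a power series.
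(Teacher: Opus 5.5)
Your proposal is correct and follows the paper's proof essentially line for line: normal ordering gives $Y(z_1)\cdots Y(z_k).1=\prod_{i<j}(1-z_j/z_i)\prod_i H_{x,t}(z_i)$, the Vandermonde factor is expanded as $\sum_{\sigma\in S_k}\mathrm{sgn}(\sigma)\prod_i z_i^{\,i-\sigma(i)}$, and extracting the coefficient of $z_1^{\lambda_1}\cdots z_k^{\lambda_k}$ yields $\det\bigl(h^{(t)}_{\lambda_i-i+j}\bigr)$. One cosmetic slip: your shorthand $\det\bigl(z_i^{\,j-i}\bigr)$ should be $\det\bigl(z_i^{\,i-j}\bigr)$, since the former equals $\prod_{i<j}(1-z_i/z_j)$; your ``more precisely'' formula, which is what you actually use, is right.
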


\begin{proof}
Recall the generating series
\[
  H^{(t)}(z):=\exp\Bigl(\sum_{n\geqslant 1}\frac{1-t^n}{n}\,p_n z^n\Bigr)
  \;=\;\sum_{r\geqslant 0}h^{(t)}_r\,z^r,\qquad h^{(t)}_0=1,\; h^{(t)}_{r<0}=0.
\]
where $h_n^{(t)}$ is the generalised complete symmetric function.

By BCH and normal ordering, we have
\begin{equation}\label{eq:Wick}
  Y(z_1)\cdots Y(z_k).1 =
  \prod_{1\leqslant i<j\leqslant k}\bigl(1-\frac{z_j}{z_i}\bigr)
  :Y(z_1)\cdots Y(z_k):.1
  =
  \prod_{i<j}\bigl(1-\frac{z_j}{z_i}\bigr)
  \prod_{i=1}^k H^{(t)}(z_i).
\end{equation}


Recall a classical Vandermonde-type identity
\begin{equation}\label{eq:Vdm}
  \prod_{i<j}(1-\frac{z_j}{z_i})
  \;=\;
  \frac{\det\bigl(z_i^{\,k-j}\bigr)_{1\leqslant i,j\leqslant k}}{\prod_{i=1}^k z_i^{\,k-i}}
  \;=\;
  \sum_{\sigma\in S_k}\operatorname{sgn}(\sigma)
  \prod_{i=1}^k z_i^{\,i-\sigma(i)}.
\end{equation}
Then by taking the coefficient of $z_1^{\lambda_1}z_2^{\lambda_2}\cdots z_k^{\lambda_k}$ we get the result.
\end{proof}

Similarly, for the $Y^*(z)$, we have the following determinant formula.
\begin{theorem}
Let $\lambda=(\lambda_1,\dots,\lambda_k)$ be a partition of length $k$. Then
\[
  Y^*_{\lambda_1}\cdots Y^*_{\lambda_k}.1
  \;=\;
  (-1)^{|\lambda|}\,\det\bigl(e^{(t)}_{\lambda_i-i+j}\bigr)_{1\leqslant i,j\leqslant \ell(\lambda)}.
\]
\end{theorem}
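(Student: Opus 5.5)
The plan is to mirror the proof of Theorem~\ref{thm:Jacobi-Trudi}, replacing $Y(z)$ by $Y^*(z)$ and the series $H^{(t)}(z)$ by its inverse. First I would compute the action of a single $Y^*(z)$ on the vacuum. The annihilation factor $\exp\bigl(\sum_{n\ge1}\frac{1}{1-t^n}\frac{a_n}{n}z^{-n}\bigr)$ fixes $1$, because $a_n.1=0$ for $n>0$. Hence
\[
Y^*(z).1=\exp\Bigl(-\sum_{n\ge1}\frac{1-t^n}{n}p_nz^n\Bigr)=H^{(t)}(z)^{-1}.
\]
Since $\log H_{x,t}(z)=\sum_{n\ge1}\frac{1-t^n}{n}p_nz^n$, we have $H^{(t)}(z)=H_{x,t}(z)$. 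The relation $E_{x,t}(z)H_{x,t}(-z)=1$ then gives
\[
H^{(t)}(z)^{-1}=E_{x,t}(-z)=\sum_{r\ge0}(-1)^re^{(t)}_r z^r .
\]

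Next I would use the second OPE of the preceding proposition, $Y^*(z)Y^*(w)=:Y^*(z)Y^*(w):(1-wz^{-1})$, iteratively (BCH contractions, exactly as in \eqref{eq:Wick}). This yields
\[
Y^*(z_1)\cdots Y^*(z_k).1=\prod_{1\le i<j\le k}\Bigl(1-\frac{z_j}{z_i}\Bigr)\prod_{i=1}^k E_{x,t}(-z_i),
\]
where again the normally ordered annihilation parts kill $1$.

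Since $Y^*(z)=\sum_n Y^*_nz^n$, the vector $Y^*_{\lambda_1}\cdots Y^*_{\lambda_k}.1$ is the coefficient of $z_1^{\lambda_1}\cdots z_k^{\lambda_k}$ in the product above. I would expand the prefactor by the Vandermonde identity \eqref{eq:Vdm} as $\sum_{\sigma\in S_k}\operatorname{sgn}(\sigma)\prod_i z_i^{i-\sigma(i)}$. For fixed $\sigma$, the factor $E_{x,t}(-z_i)$ must then supply $z_i^{\lambda_i-i+\sigma(i)}$, which contributes $(-1)^{\lambda_i-i+\sigma(i)}e^{(t)}_{\lambda_i-i+\sigma(i)}$. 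The total sign from these factors is $(-1)^{\sum_i(\lambda_i-i+\sigma(i))}=(-1)^{|\lambda|}$, because $\sum_i\sigma(i)=\sum_i i$. This sign is therefore independent of $\sigma$ and factors out of the sum. What remains is $\sum_\sigma\operatorname{sgn}(\sigma)\prod_i e^{(t)}_{\lambda_i-i+\sigma(i)}=\det\bigl(e^{(t)}_{\lambda_i-i+j}\bigr)$, with the convention $e^{(t)}_r=0$ for $r<0$.

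The main point needing care is the formal bookkeeping: the OPE expansion must be taken in the region $|z_1|>\cdots>|z_k|$ so that the prefactor is the stated finite polynomial in the ratios. Within that convention the proof is then just a coefficient extraction, and the sign tracking above is the only genuinely new ingredient compared with Theorem~\ref{thm:Jacobi-Trudi}.
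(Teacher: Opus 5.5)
Your proposal is correct and follows the route the paper intends. The paper gives no separate argument, only ``similarly'', which means rerunning the proof of Theorem~\ref{thm:Jacobi-Trudi} with $Y^*(z).1=H^{(t)}(z)^{-1}=E_{x,t}(-z)$, the same Vandermonde prefactor from the $Y^*Y^*$ OPE, and coefficient extraction; your observation that $(-1)^{\sum_i(\lambda_i-i+\sigma(i))}=(-1)^{|\lambda|}$ is independent of $\sigma$ supplies the sign the paper leaves implicit, while your caveat about the expansion region is harmless but unnecessary, since each factor $1-z_j/z_i$ is already a Laurent polynomial.
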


By comparing the coefficients of the two theorems above, we obtain the following.

\begin{corollary}\label{thm:conjugate}
Let $\lambda=(\lambda_1,\dots,\lambda_\ell)$ be a partition and
$\lambda'=(\lambda_1',\dots,\lambda_k')$ its conjugate.
Then
\[
  Y_{-\lambda_1}\cdots Y_{-\lambda_\ell}. 1
   \;=\;
  (-1)^{|\lambda|}
  Y^{*}_{\lambda_1'}\cdots Y^{*}_{\lambda_k'}. 1 .
\]
Consequently $\{Y_{-\lambda}.1\}_\lambda$ and $\{Y^{*}_{\lambda'}.1\}_\lambda$ both form bases of the ring of symmetric function~$\Lambda$.
\end{corollary}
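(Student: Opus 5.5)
Combining Theorem~\ref{thm:Jacobi-Trudi} with the dual determinant formula for $Y^*$, applied to the conjugate partition, gives
\[
Y^*_{\lambda'_1}\cdots Y^*_{\lambda'_k}.1=(-1)^{|\lambda'|}\det\bigl(e^{(t)}_{\lambda'_i-i+j}\bigr).
\]
By the dual Jacobi--Trudi identity recalled in the introduction, $\det(h^{(t)}_{\lambda_i-i+j})=\det(e^{(t)}_{\lambda'_i-i+j})$. This identity holds because $E_{x,t}(z)H_{x,t}(-z)=1$: the sequences $(h^{(t)}_n)$ and $((-1)^ne^{(t)}_n)$ are inverse under convolution, so Macdonald's minor argument for the infinite lower-triangular Toeplitz matrices applies verbatim. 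Since $|\lambda'|=|\lambda|$, we obtain
\[
Y_{-\lambda_1}\cdots Y_{-\lambda_\ell}.1=S_\lambda(x;t)=(-1)^{|\lambda|}Y^*_{\lambda'_1}\cdots Y^*_{\lambda'_k}.1 .
\]
One point needs checking: the generating function of the $e^{(t)}_n$ arising from $Y^*$ must match. The creation part of $Y^*(z)$ is $\exp\bigl(-\sum_n\frac{1-t^n}{n}p_nz^n\bigr)=H^{(t)}(z)^{-1}=E_{x,t}(-z)$, whose coefficients are $(-1)^ne^{(t)}_n$. The coefficient extraction produces a product of such signs, $\prod_i(-1)^{\lambda'_i-i+j}$ over each permutation term, which collapses to the global factor $(-1)^{|\lambda|}$. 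So the sign conventions are consistent.

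For the basis statement, argue by degree. The $t$-complete functions are $h^{(t)}_n=\sum_{k\le n}(-t)^{n-k}\,(\text{stuff})$, or more cleanly $H^{(t)}(z)=H(z)E(-tz)$, i.e.\ $p_n\mapsto(1-t^n)p_n$. This is the ring endomorphism $\phi_t$ of $\Lambda$ defined by $p_n\mapsto(1-t^n)p_n$. Then $S_\lambda(x;t)=\phi_t(s_\lambda)$ by the Jacobi--Trudi determinant.

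For $t$ a formal parameter, or any $t$ with $t^n\neq1$ for all $n$, the map $\phi_t$ is invertible over $\mathbb Q(t)$, being diagonal on the power-sum basis with nonzero eigenvalues. Hence it carries the Schur basis $\{s_\lambda\}$ to a basis $\{S_\lambda(x;t)\}$ of each graded piece. It follows that $\{Y_{-\lambda}.1\}$ is a basis, and by the identity above $\{Y^*_{\lambda'}.1\}$ is a basis as well, differing only by signs.

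The main obstacle is the sign/indexing bookkeeping in the dual determinant. I would check it first on $\lambda=(1)$: here $Y_{-1}.1=h^{(t)}_1=(1-t)p_1$, while $Y^*_1.1=-(1-t)p_1$, which matches $(-1)^1$. A second concern is the invertibility caveat, which requires $t$ generic or formal, and I would state that explicitly.
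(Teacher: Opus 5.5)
Your proof of the identity is the paper's. The paper says only that the corollary follows ``by comparing'' Theorem~\ref{thm:Jacobi-Trudi} with the determinant formula for $Y^*_{\lambda_1}\cdots Y^*_{\lambda_k}.1$, which implicitly uses the dual Jacobi--Trudi identity $\det(h^{(t)}_{\lambda_i-i+j})=\det(e^{(t)}_{\lambda'_i-i+j})$ from the introduction. Your justification of that identity from $E_{x,t}(z)H_{x,t}(-z)=1$ is correct, and so is your sign bookkeeping.

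You go beyond the paper on the basis claim, for which the paper gives no argument beyond ``Consequently''. Your observation that $S_\lambda(x;t)=\phi_t(s_\lambda)$, where $\phi_t\colon p_n\mapsto(1-t^n)p_n$ is diagonal on the power-sum basis, supplies the missing justification.

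Your genericity caveat is genuinely needed. At $t=-1$ the map $\phi_{-1}$ kills $p_2,p_4,\dots$, so every $S_\lambda(x;-1)$ lies in $\mathbb{Q}[p_1,p_3,\dots]$ and the family cannot span $\Lambda$. The paper's statement survives only because $Y(z)$ and $Y^*(z)$ contain the factors $1/(1-t^n)$ and so implicitly require $t^n\neq1$ for all $n$. Stating that hypothesis explicitly, as you propose, is the right fix.
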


\subsection{Generalised Cauchy identities and Gessel identity}

\begin{theorem}\label{thm:t-Cauchy-VO}
For two variable sets $x=(x_1,x_2,\dots)$ and $y=(y_1,y_2,\dots)$ with $|x_i\,y_j|<1$, the $t$-Schur functions satisfy
\begin{equation}\label{eq:t-Cauchy}
   \sum_{\lambda\in\mathbb Y} S_\lambda(x;t)\,s_\lambda(y)=
   \prod_{i,j\geqslant 1}\frac{1-t\,x_i y_j}{1-x_i y_j}.
\end{equation}
\end{theorem}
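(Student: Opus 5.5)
We will prove \eqref{eq:t-Cauchy} by computing a single vacuum matrix element in two ways. Introduce the half vertex operators
\[
\Gamma_+(y)=\exp\Bigl(\sum_{n\ge1}\frac{p_n(y)}{n}\,a_n\Bigr),\qquad
\Gamma_-^{(t)}(x)=\exp\Bigl(\sum_{n\ge1}\frac{1-t^n}{n}\,p_n(x)\,a_{-n}\Bigr).
\]
Since $a_n=n\,\partial/\partial p_n$, the operator $\Gamma_+(y)$ acts on $\Lambda$ as the translation $p_n\mapsto p_n+p_n(y)$. The operator $\Gamma_-^{(t)}(x)$ is multiplication by $\prod_i H^{(t)}(x_i)$, where $H^{(t)}$ is the generating series used in the proof of Theorem~\ref{thm:Jacobi-Trudi}. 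The quantity we evaluate is
\[
\langle 1,\ \Gamma_+(y)\,\Gamma_-^{(t)}(x)\,.1\rangle .
\]

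\emph{First evaluation.} Move $\Gamma_+$ to the right past $\Gamma_-^{(t)}$ using the BCH contraction, which is the same computation as the one displayed before Theorem~\ref{thm:Jacobi-Trudi}. The commutator of the exponents is
\[
\sum_{n\ge1}\frac{1-t^n}{n}\,p_n(x)\,p_n(y)
=\sum_{i,j}\bigl(-\log(1-x_iy_j)+\log(1-tx_iy_j)\bigr),
\]
which is central. Hence
\[
\Gamma_+(y)\Gamma_-^{(t)}(x)=\prod_{i,j}\frac{1-tx_iy_j}{1-x_iy_j}\;\Gamma_-^{(t)}(x)\Gamma_+(y).
\]
We have $\Gamma_+(y).1=1$ and $\langle 1,\Gamma_-^{(t)}(x).1\rangle=1$, because $\Gamma_-^{(t)}$ has constant term $1$ and $a_{-n}^*=a_n$ kills $1$ on the left. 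So the matrix element equals the right-hand side of \eqref{eq:t-Cauchy}. The hypothesis $|x_iy_j|<1$ makes the logarithm series converge.

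\emph{Second evaluation.} Expand $\Gamma_-^{(t)}(x).1$ in a basis. In the variables $p_n$, multiplication by $\prod_iH^{(t)}(x_i)$ applied to $1$ gives
\[
\exp\Bigl(\sum_n\frac{1-t^n}{n}\,p_n(x)\,p_n\Bigr).
\]
This is the ordinary Cauchy kernel between the variables $p_n$ and the plethystically substituted power sums $(1-t^n)p_n(x)$. By the classical Cauchy identity (the orthonormality of the Schur basis stated in \S\ref{subsect:sym-fcn}) it equals
\[
\sum_\lambda s_\lambda[(1-t)x]\,s_\lambda.
\]
We then identify $s_\lambda[(1-t)x]$ with $S_\lambda(x;t)$. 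This holds because $h^{(t)}_r=h_r[(1-t)x]$, which one sees by comparing $H_{x,t}(z)$ with $H^{(t)}(z)$, and because both families satisfy the Jacobi--Trudi determinant of Theorem~\ref{thm:Jacobi-Trudi} (plethysm is a ring homomorphism). Next, $\Gamma_+(y)$ is the adjoint of the ordinary $\Gamma_-(y)=\exp(\sum p_n(y)a_{-n}/n)$, and $\Gamma_-(y).1=\sum_\mu s_\mu(y)s_\mu$. Therefore
\[
\langle 1,\Gamma_+(y)\,s_\lambda\rangle=\langle \Gamma_-(y).1,\,s_\lambda\rangle=s_\lambda(y).
\]
Summing over $\lambda$ gives $\sum_\lambda S_\lambda(x;t)s_\lambda(y)$, and comparing with the first evaluation proves \eqref{eq:t-Cauchy}.

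\emph{Main obstacle.} The delicate point is phrasing the second evaluation so that it honestly uses the vertex operators $Y$ rather than simply quoting plethysm. To stay within the vertex-operator framework, I would instead expand $\Gamma_-^{(t)}(x).1$ via the basis $\{Y_{-\lambda}.1\}$ of Corollary~\ref{thm:conjugate}. I would then check that the Bernstein operators $S_{-\mu}.1=s_\mu$ are dual to it under $\langle\ ,\ \rangle$, i.e.
\[
\langle S_{-\mu}.1,\;Y_{-\lambda}.1\rangle=\delta_{\lambda\mu}.
\]
This can be computed from the mixed OPE: $S^*(z)$ contracts with $Y(w)$ with factor $(1-w/z)^{-1}$, by the same BCH computation as in the displayed contraction, since the $(1-t^n)$ factors cancel.
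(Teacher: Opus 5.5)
Your two-evaluation argument is correct. Its skeleton matches the paper's: one vacuum matrix element is computed once by a BCH commutation and once by expanding in the orthonormal Schur basis. The key identification step is different.

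The paper works with the full operators $Y(x_i)$ and Bernstein's $S(y_j^{-1})$. It gets the product from the mixed OPE $S(w^{-1})Y(z)=:S(w^{-1})Y(z):\frac{1-zw}{1-tzw}$, and it identifies the coefficients $S_\lambda(x;t)$ through the mode realisation $Y_{-\lambda}.1=S_\lambda(x;t)$ of Theorem~\ref{thm:Jacobi-Trudi}.

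You keep only the creation half of $Y$ and an annihilation half-operator in $y$, so the matrix element is exactly the kernel. You identify the coefficients by plethysm: $h_r[(1-t)x]=h^{(t)}_r(x)$, Jacobi--Trudi, and the classical Cauchy identity $\exp\bigl(\sum_n p_n\otimes p_n/n\bigr)=\sum_\lambda s_\lambda\otimes s_\lambda$. In effect you show that the $t$-Cauchy identity is the ordinary one under $p_n\mapsto(1-t^n)p_n(x)$. Your first evaluation just says that $\exp\bigl(\sum_n\frac{1-t^n}{n}p_n(x)p_n(y)\bigr)$ equals the product, since $\Gamma_+(y)$ is a translation and $\langle 1,\cdot\,\rangle$ extracts the constant term.

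Your route is short and rigorous. It also avoids the Vandermonde factors $\prod_{i<j}(1-x_j/x_i)$ and same-type contractions that a correlator of full operators produces (cf.~\eqref{eq:Wick}), which the paper's argument leaves implicit. What the paper's route is meant to buy is that the identity comes out of the vertex-operator realisation itself, not out of the classical Cauchy identity.

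Do not, however, make the substitution you propose under ``Main obstacle'': it fails.

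First, the duality $\langle S_{-\mu}.1,Y_{-\lambda}.1\rangle=\delta_{\lambda\mu}$ is false, since $Y_{-\lambda}.1=s_\lambda[(1-t)X]$ as an element of $\Lambda$. Already for $\lambda=\mu=(1)$ you get $\langle p_1,(1-t)p_1\rangle=1-t$. The basis dual to $\{Y_{-\lambda}.1\}$ under $\langle\ ,\ \rangle$ is $\{s_\lambda[X/(1-t)]\}$. The $t$-Schur functions are dual to the $s_\lambda$ only for the Hall--Littlewood form $\langle p_\lambda,p_\mu\rangle_t=z_\lambda\prod_i(1-t^{\lambda_i})^{-1}\delta_{\lambda\mu}$.

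Second, the factors $1-t^n$ do not cancel in the $S^*$--$Y$ contraction. The annihilation part of $S^*(z)$ against the creation part of $Y(w)$ gives $\exp\bigl(\sum_n\frac{1-t^n}{n}(w/z)^n\bigr)=\frac{1-tw/z}{1-w/z}$; compare the paper's proof of Theorem~\ref{thm:t-dual-Cauchy}. Cancellation occurs only in contractions among $Y$ and $Y^*$ themselves, whose annihilation parts carry $1/(1-t^n)$.

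If you want the coefficient $\langle s_\lambda,\Gamma_-^{(t)}(x).1\rangle$ computed by vertex operators instead of plethysm, proceed as follows. Write $s_\lambda=S_{-\lambda_1}\cdots S_{-\lambda_k}.1$. Let $\Gamma_+^{(t)}(x)=\exp\bigl(\sum_n\frac{1-t^n}{n}p_n(x)a_n\bigr)$, the adjoint of your $\Gamma_-^{(t)}(x)$, and use $\Gamma_+^{(t)}(x)\,S(z)\,\Gamma_+^{(t)}(x)^{-1}=H_{x,t}(z)\,S(z)$. Then $\langle 1,\Gamma_+^{(t)}(x)S(z_1)\cdots S(z_k).1\rangle=\prod_{i<j}(1-z_j/z_i)\prod_iH_{x,t}(z_i)$. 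Extracting the coefficient of $z_1^{\lambda_1}\cdots z_k^{\lambda_k}$ gives $\det\bigl(h^{(t)}_{\lambda_i-i+j}(x)\bigr)=S_\lambda(x;t)$, exactly as in Theorem~\ref{thm:Jacobi-Trudi}.
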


\begin{proof}
Let $S(w^{-1})$ be the Bernstein operator for ordinary Schur functions and $Y(z)$ our $t$-vertex operator.
By BCH and $[a_m,a_n]=m\delta_{m,-n}$,
\[
\Bigl[-\sum_{n\geqslant 1}\frac{a_n}{n}\,w^{n},\ \sum_{m\geqslant1}\frac{1-t^m}{m}\,a_{-m}z^m\Bigr]
=-\sum_{n\geqslant 1}\frac{1-t^n}{n}\,(zw)^{n}
=\log\frac{1-zw}{1-tzw},
\]
hence we have the operator product expansion
\[
S(w^{-1})\,Y(z)=:S(w^{-1})\,Y(z):\,\frac{1-zw}{\,1-tzw\,}.
\]
Iterating over variables $x=(x_1,x_2,\dots)$ and $y=(y_1,y_2,\dots)$ (with $|x_i y_j|<1$ so that all series multiply formally), we obtain that
\[
\bigl\langle 1 \, ,\, \prod_{i\geqslant1}Y(x_i)\ \prod_{j\geqslant 1}S(y_j^{-1})\,.\,1 \bigr\rangle
=\prod_{i,j\geqslant 1}\frac{1-tx_i y_j}{1-x_i y_j}.
\]
On the other hand, by Theorem~\ref{thm:Jacobi-Trudi} and the standard Bernstein realisation,
\[
\prod_{i\geqslant1}Y(x_i).\,1=\sum_{\lambda\in\mathbb Y} S_\lambda(x;t)\,S_{-\lambda}.\,1,
\qquad
\prod_{j\geqslant 1}S(y_j^{-1}).\,1=\sum_{\mu\in\mathbb Y} s_\mu(y)\,S_{-\mu}.\,1.
\]
Therefore
\[
\bigl\langle 1 \, ,\, \prod_{i\geqslant1}Y(x_i)\ \prod_{j\geqslant1}S(y_j^{-1})\,.\,1 \bigr\rangle
=\sum_{\lambda\in\mathbb Y} S_\lambda(x;t)\,s_\lambda(y),
\]
and comparing the two evaluations yields
\[
\sum_{\lambda\in\mathbb Y} S_\lambda(x;t)\,s_\lambda(y)
=\prod_{i,j\geqslant 1}\frac{1-tx_i y_j}{1-x_i y_j},
\]
which is the desired $t$-Cauchy identity.
\end{proof}

\begin{theorem}\label{thm:t-dual-Cauchy}
For two variable sets $x=(x_1,x_2,\dots)$ and $y=(y_1,y_2,\dots)$ such that
$|x_i y_j|<1$, one has
\begin{equation}\label{eq:t-Cauchy-dual}
      \sum_{\lambda\in\mathbb Y }
           (-1)^{|\lambda|}
           \,S_\lambda(x;t)\,s_{\lambda'}(y)
      \;=\;
      \prod_{i,j\geqslant 1}\frac{1-x_i y_j}{1-t\,x_i y_j}.
\end{equation}
\end{theorem}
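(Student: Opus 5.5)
The plan mirrors the proof of Theorem~\ref{thm:t-Cauchy-VO}, replacing the Bernstein operator $S(w^{-1})$ by the Jing operator $S^*(w^{-1})$. The sign change in the exponent of the annihilation part produces the inverse kernel. Concretely,
\[
S^{*}(w^{-1})=\exp\Bigl(-\sum_{n\ge1}\frac{a_{-n}}{n}w^{-n}\Bigr)\exp\Bigl(\sum_{n\ge1}\frac{a_n}{n}w^{n}\Bigr).
\]
Contracting its annihilation part against the creation part of $Y(z)$ via BCH and $[a_n,a_{-m}]=n\delta_{nm}$ gives
\[
\Bigl[\sum_{n\ge1}\frac{a_n}{n}w^n,\ \sum_{m\ge1}\frac{1-t^m}{m}a_{-m}z^m\Bigr]=\sum_{n\ge1}\frac{1-t^n}{n}(zw)^n=\log\frac{1-tzw}{1-zw}.
\]
Hence $S^*(w^{-1})Y(z)=:S^*(w^{-1})Y(z):\,\frac{1-zw}{1-tzw}$ read the other way round. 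More to the point, the vacuum expectation satisfies
\[
\langle 1,\ \prod_i Y(x_i)\prod_j S^*(y_j^{-1}).1\rangle=\prod_{i,j}\frac{1-x_iy_j}{1-tx_iy_j}.
\]
To see this, take the adjoint: $\langle 1, Y(x).\,v\rangle$ only involves the annihilation part of $Y$ acting on the left. It is cleaner to write the pairing as $\langle \prod_i Y(x_i)^{\dagger}.1,\ \prod_j S^*(y_j^{-1}).1\rangle$, note $\prod_i Y(x_i)^\dagger$ applied to the vacuum is an exponential in the $a_{-n}$, and evaluate the pairing of two exponentials of creation operators. In either formulation, only the single contraction above contributes, and iteration over all $i,j$ gives the product.

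Next expand each side in the relevant bases. By Theorem~\ref{thm:Jacobi-Trudi}, $\prod_i Y(x_i).1=\sum_\lambda S_\lambda(x;t)\,S_{-\lambda}.1$, exactly as used in the proof of Theorem~\ref{thm:t-Cauchy-VO}, where $S_{-\lambda}.1=s_\lambda$. For the $y$ side, I will use the Jing-operator analogue of the Bernstein realisation (the $t=0$ case of the dual determinant theorem and Corollary~\ref{thm:conjugate}). This gives
\[
\prod_j S^*(y_j^{-1}).1=\exp\Bigl(-\sum_n\frac{p_n(y)}{n}p_n\Bigr)=\sum_\mu(-1)^{|\mu|}s_{\mu'}(y)\,s_\mu,
\]
which is the classical dual Cauchy identity $\prod_{i,j}(1-x_iy_j)=\sum_\mu(-1)^{|\mu|}s_\mu(x)s_{\mu'}(y)$ in Fock-space form. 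Pairing with the orthonormality $\langle s_\lambda,s_\mu\rangle=\delta_{\lambda\mu}$ then yields $\sum_\lambda(-1)^{|\lambda|}S_\lambda(x;t)s_{\lambda'}(y)$. Comparing with the product evaluation proves~\eqref{eq:t-Cauchy-dual}.

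An alternative, more economical route is to apply the involution $\omega_y$ to the identity of Theorem~\ref{thm:t-Cauchy-VO} together with the substitution $y\mapsto -y$ in degree. The operation $p_n(y)\mapsto -p_n(-y)\cdot(-1)^{n}$, that is, $p_n(y)\mapsto -p_n(y)$, sends $s_\lambda(y)\mapsto(-1)^{|\lambda|}s_{\lambda'}(y)$. Writing the right side as $\exp\bigl(\sum_n\frac{1-t^n}{n}p_n(x)p_n(y)\bigr)$, this sign flip inverts it. I would present the vertex-operator proof and mention this as a check.

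The main obstacle is bookkeeping of signs and of which operator's modes are paired with $s_{\mu'}$. Specifically, I need to confirm that $S^*(w^{-1})$ expanded at the vacuum yields $(-1)^{|\mu|}s_{\mu'}(y)s_\mu$ with the paper's normalisation $S^*(z)=\sum S^*_nz^n$. I also need to make sure that the formal expansion region $|x_iy_j|<1$ is the one compatible with the OPE domain of Proposition~\ref{prop:OPE}. I would verify both by checking the degree-one and degree-two terms explicitly.
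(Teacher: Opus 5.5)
Your main route is the paper's own proof, step for step: contract $S^*(w^{-1})$ against $Y(z)$, assert $\langle 1,\prod_iY(x_i)\prod_jS^*(y_j^{-1}).1\rangle=\prod_{i,j}\frac{1-x_iy_j}{1-tx_iy_j}$, and expand both factors in Schur functions. For the operators as written, however, neither this evaluation nor the expansions hold. The paper asserts them in the same form, so following it gives you no additional justification.

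\emph{The evaluation.} In $\langle 1,Y(x)\,S^*(y^{-1}).1\rangle$ the only contraction that survives is between the annihilation part of $Y(x)$ and the creation part of $S^*(y^{-1})$. That contraction is $\sum_{n\ge1}\frac{(xy)^{-n}}{n(1-t^n)}$. The contraction you computed belongs to the opposite ordering $S^*(w^{-1})Y(z)$, and there it gives the factor $\frac{1-tzw}{1-zw}$, not its inverse. Your phrase ``read the other way round'' hides exactly this mismatch. Already for one variable on each side and $t=0$, Proposition~\ref{prop:OPE} gives $\langle 1,S(x)S^*(y^{-1}).1\rangle=\sum_{k\ge0}(xy)^{-k}\neq 1-xy$. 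The adjoint reformulation does not help either, since $Y(x)^{\dagger}.1=\exp\bigl(-\sum_n\frac{x^{-n}}{n(1-t^n)}a_{-n}\bigr).1$, which is not $Y(x).1$.

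\emph{The expansions.} By \eqref{eq:Wick}, $Y(x_1)\cdots Y(x_k).1=\prod_{i<j}(1-x_j/x_i)\prod_iH^{(t)}(x_i)$. So $\prod_iY(x_i).1\neq\sum_\lambda S_\lambda(x;t)s_\lambda$ once $k\ge2$. Likewise $\prod_jS^*(y_j^{-1}).1$ involves $y_j^{-n}$ rather than $y_j^{n}$, and it carries the factors $\prod_{j<j'}(1-y_j/y_{j'})$ from Proposition~\ref{prop:OPE}. So it is not $\exp\bigl(-\sum_n p_n(y)p_n/n\bigr)$.

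\emph{The repair.} Your instinct to pair two exponentials of creation operators is right, but it must be applied to half-vertex operators, not to products of full vertex operators. Set $\Gamma^{(t)}_x=\exp\bigl(\sum_{n\ge1}\frac{1-t^n}{n}p_n(x)a_{-n}\bigr)$ and $\Gamma^*_y=\exp\bigl(-\sum_{n\ge1}\frac{p_n(y)}{n}a_{-n}\bigr)$. Then:
\begin{itemize}
\item $\Gamma^{(t)}_x.1=\sum_\lambda S_\lambda(x;t)s_\lambda$. This is the ordinary Cauchy identity for the alphabet with power sums $(1-t^n)p_n(x)$, whose complete functions are $h^{(t)}_r(x)$, combined with Jacobi--Trudi.
\item $\Gamma^*_y.1=\sum_\mu(-1)^{|\mu|}s_{\mu'}(y)s_\mu$, by the classical dual Cauchy identity.
\item A single BCH contraction gives $\langle\Gamma^{(t)}_x.1,\Gamma^*_y.1\rangle=\exp\bigl(-\sum_n\frac{1-t^n}{n}p_n(x)p_n(y)\bigr)=\prod_{i,j}\frac{1-x_iy_j}{1-tx_iy_j}$.
\end{itemize}
Orthonormality of the $s_\lambda$ then finishes the proof. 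Transported to $\Lambda$, this is just \eqref{eq:t-Cauchy-half} with $p_n(y)$ replaced by $-p_n(y)$.

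\emph{Your second route.} This is already a complete and correct proof, and I would lead with it rather than treat it as a check. The map $p_n(y)\mapsto-p_n(y)$ is the ring automorphism $(-1)^{\deg}\omega_y$. It sends $s_\lambda(y)\mapsto(-1)^{|\lambda|}s_{\lambda'}(y)$ and inverts $\exp\bigl(\sum_n\frac{1-t^n}{n}p_n(x)p_n(y)\bigr)$. It therefore converts Theorem~\ref{thm:t-Cauchy-VO} directly into \eqref{eq:t-Cauchy-dual}. Moreover, that $t$-Cauchy identity is itself just the ordinary Cauchy identity for the same transformed alphabet, so nothing depends on the flawed vertex-operator evaluation.
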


\begin{proof}
This time we pair the Jing operator $S^*(w^{-1})$ with $Y(z)$. As above,
\[
\Bigl[\sum_{n\geqslant 1}\dfrac{a_n}{n}\,w^{n}\,,\,\sum_{m\geqslant1}\dfrac{1-t^m}{m}\,a_{-m}z^m\Bigr]
=\sum_{n\geqslant 1}\frac{1-t^n}{n}\,(zw)^{n}
=\log\frac{1-tzw}{1-zw},
\]
hence
\[
S^*(w^{-1})\,Y(z)
=:S^*(w^{-1})\,Y(z):\;\frac{1-tzw}{\,1-zw\,}.
\]
Taking vacuum expectations gives
\[
\bigl<1\, ,\; \prod_{i\geqslant 1}Y(x_i)\;\prod_{j\geqslant 1}S^*(y_j^{-1})\,.1 \,\bigr>
=\prod_{i,j\geqslant 1}\frac{1-x_i y_j}{1-tx_i y_j}.
\]
On the other hand, the mode expansions together with the dual Jacobi-Trudi formula for ordinary Schur functions yield
\[
\prod_{j\geqslant 1}S^*(y_j^{-1}).1
=\sum_{\lambda\in\mathbb Y}(-1)^{|\lambda|}s_{\lambda'}(y)\,S^*_{\lambda}.1.
\]
Consequently,
\[
\bigl<1, \,\,\prod_{i\geqslant 1}Y(x_i)\;\prod_{j\geqslant 1}S^*(y_j^{-1})\,.1 \,\bigr>
=\sum_{\lambda\in\mathbb Y}(-1)^{|\lambda|}S_\lambda(x;t)\,s_{\lambda'}(y).
\]
Comparing with the previous evaluation establishes
\[
\sum_{\lambda\in\mathbb Y}(-1)^{|\lambda|}\,S_\lambda(x;t)\,s_{\lambda'}(y)
=\prod_{i,j\geqslant 1}\frac{1-x_i y_j}{1-tx_i y_j},
\]
which is the dual $t$-Cauchy identity.
\end{proof}

We keep employing a vertex algebraic approach to prove a $t$-analogue of the Gessel identity~\cite{Ge,Mat1}, which plays a crucial role in the proof of the limit theorem for the (shifted) Schur measure~\cite{Joh2, Mat1, TW}. We give this for completeness, while it is not needed for the vertex-algebraic derivation of correlation kernels later.

Write the Toeplitz symbol $\phi(z)=\sum_{m\in\mathbb Z}\phi_m z^m$ and
$T_k(\phi)=(\phi_{j-i})_{1\le i,j\le k}$.
Introduce the half-vertex operator
\(
\Gamma_-(z):=\exp\big(-\sum_{n\ge1}\frac{a_n}{n}\,z^{n}\big).
\)

\begin{theorem}\label{thm:tGessel}
For every integer $k\geqslant 1$,
\[
\sum_{\ell(\lambda)\leqslant k} S_\lambda(x;t)\,s_\lambda(y)
\;=\;
\det\left( T_k\big(\phi\big) \right),
\qquad
\phi(z):=H_{x,t}(z)\,H_y(z^{-1}),
\]
where $H_{x,t}(z)=\prod_{i\geqslant 1}\frac{1-t x_i z}{1-x_i z}$ and
$H_y(z)=\prod_{j\geqslant 1}\frac{1}{1-y_j z}$.
\end{theorem}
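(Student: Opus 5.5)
Write $\sum_{\ell(\lambda)\le k}S_\lambda(x;t)s_\lambda(y)=\langle 1,\ \Phi_y\,.\,Y_{-\lambda_1}\cdots Y_{-\lambda_k}.1\rangle$ summed appropriately. The natural route is through the generating function (\ref{eq:Wick}). By Theorem~\ref{thm:Jacobi-Trudi}, $S_\lambda(x;t)$ is the coefficient of $z_1^{\lambda_1}\cdots z_k^{\lambda_k}$ in $\prod_{i<j}(1-z_j/z_i)\prod_i H^{(t)}(z_i)$. This holds for any $\lambda$ with $\ell(\lambda)\le k$ if we pad with zeros, since the Jacobi--Trudi determinant is stable under adding zero parts. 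Likewise, $s_\lambda(y)$ is the coefficient of $w_1^{-\lambda_1}\cdots w_k^{-\lambda_k}$ in $\prod_{i<j}(1-w_i/w_j)\prod_i H_y(w_i^{-1})$; alternatively, $s_\lambda(y)$ is the $\lambda$-component of $\prod_i\Gamma_-$-type operators acting on $S_{-\lambda}.1$.

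Using the Hall inner product, one pairs the two expansions and reduces everything to a constant-term identity:
\[
\sum_{\ell(\lambda)\le k}S_\lambda(x;t)s_\lambda(y)=\frac{1}{k!}\,\mathrm{CT}_{z}\Bigl[\prod_{i\ne j}\bigl(1-\tfrac{z_j}{z_i}\bigr)\prod_{i=1}^k H_{x,t}(z_i)H_y(z_i^{-1})\Bigr].
\]
One way to see this is vertex-algebraically. Act with $\prod_j \Gamma_-$-style half operators in $y$, which produce $H_y(z_i^{-1})$ through the contraction $\exp(\sum_n p_n(y)z^{-n}/n)$, on $Y(z_1)\cdots Y(z_k).1$, and then take the vacuum expectation. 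The first product $\prod_{i<j}(1-z_j/z_i)$ comes from the Wick formula (\ref{eq:Wick}). A direct alternative is the Cauchy--Binet route: write $s_\lambda(y)=\det(h_{\lambda_i-i+j}(y))$, or better as $\det(y$-coefficients$)$, and combine it with $S_\lambda(x;t)=\det(h^{(t)}_{\lambda_i-i+j})$. Sum over $l_i=\lambda_i+k-i$, which are strictly decreasing nonnegative integers. Cauchy--Binet then gives $\det\bigl(\sum_{l\ge0}h^{(t)}_{l-k+i}(x)\,h_{l-k+j}(y)\bigr)_{i,j}$, and the $(i,j)$ entry is exactly the coefficient $\phi_{j-i}$ of $\phi(z)=H_{x,t}(z)H_y(z^{-1})$. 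This gives $\det T_k(\phi)$ at once (up to a transpose, which does not affect the determinant).

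So the plan is as follows. First, recast the Jacobi--Trudi formula of Theorem~\ref{thm:Jacobi-Trudi} with shifted indices $l_i$. Second, apply Cauchy--Binet over $k$-subsets of $\mathbb Z_{\ge0}$. Third, identify the inner sums as Fourier coefficients of $\phi$ via the product $H^{(t)}(z)H_y(z^{-1})=\sum_m\bigl(\sum_r h^{(t)}_{r+m}h_r(y)\bigr)z^m$. To stay in the vertex-operator spirit, I will phrase the first step as the constant-term extraction from (\ref{eq:Wick}) paired against $\prod_i S(\cdot)$. I will then use the Vandermonde expansion (\ref{eq:Vdm}) on both sides, and the antisymmetrisation collapses to the Heine/Andreief form $\frac1{k!}\mathrm{CT}\,|\Delta|^2\prod\phi(z_i)=\det T_k(\phi)$.

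The main obstacle is bookkeeping at the boundary. Subsets containing indices with $\lambda_i-i+j<0$ must contribute correctly, which is handled by the conventions $h_{m<0}=0$. The formal-series convergence of $\phi_m$ also needs attention; it is ensured by $|x_iy_j|<1$, which makes $\sum_r h^{(t)}_{r+m}(x)h_r(y)$ absolutely convergent.
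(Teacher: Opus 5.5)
Your argument is correct, and its core takes a different route from the paper's.

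The paper stays inside the vertex-operator picture. It reads $S_\lambda(x;t)$ off the correlator $\prod_{a<b}(1-z_b/z_a)\prod_i H_{x,t}(z_i)$ and $s_\lambda(y)$ off $\prod_{a<b}(1-z_b/z_a)\prod_i H_y(z_i^{-1})$. It then multiplies the generating functions and sums the coefficient extractions over $\ell(\lambda)\le k$. Finally it writes the Vandermonde factor as a determinant, so that row-by-row coefficient extraction gives $\det\bigl([z^{j-i}]\phi(z)\bigr)$.

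You instead apply Cauchy--Binet to the two Jacobi--Trudi matrices in the shifted indices $l_i=\lambda_i+k-i$. Take $A_{l,i}=h^{(t)}_{l-k+i}(x)$ and $B_{l,j}=h_{l-k+j}(y)$ for $l\ge0$. Then $(A^{\top}B)_{ij}=\sum_{r\ge0}h^{(t)}_{r+i-j}(x)\,h_r(y)=\phi_{i-j}$; the terms with $j-k\le r<0$ vanish because $h_r(y)=0$. So $A^{\top}B=T_k(\phi)^{\top}$. The truncation is handled because $k$-subsets of $\mathbb Z_{\ge0}$ biject with partitions of length at most $k$.

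What your route buys:
\begin{itemize}
\item It needs only the defining Jacobi--Trudi formula for $S_\lambda(x;t)$ and the classical one for $s_\lambda(y)$.
\item It makes explicit the step the paper passes over in one line: that a sum over $\lambda$ of \emph{products} of two coefficient extractions equals a single extraction from $\prod_i\phi(z_i)$. That step is exactly a Cauchy--Binet/Andreief identity.
\item It shows the statement is just the classical Gessel identity pushed through the ring homomorphism $p_n(x)\mapsto(1-t^n)p_n(x)$, which sends $h_n(x)\mapsto h^{(t)}_n(x)$ and hence $s_\lambda(x)\mapsto S_\lambda(x;t)$.
\end{itemize}
What the paper's route buys is coherence with its vertex-operator narrative, in which the symbol $\phi$ appears directly as a product of OPE factors.

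Two remarks on the parts that are not load-bearing.

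First, drop the vertex-algebraic aside, or repair it. A half operator in $y$ built from the $a_n$ (like the paper's $\Gamma_-$) contracts with the creation half of $Y(z)$. That contraction gives $\prod_j\frac{1-y_jz}{1-ty_jz}$, a function of $y_jz$, not $H_y(z^{-1})$. Also, a Hall pairing does not by itself produce your constant-term formula. That formula is correct, but the honest justification is different. Use the $t$-Cauchy identity in $k$ variables, $\prod_iH_{x,t}(z_i)=\sum_{\ell(\mu)\le k}S_\mu(x;t)\,s_\mu(z_1,\dots,z_k)$, together with the classical Cauchy identity for $\prod_iH_y(z_i^{-1})$. Then apply orthogonality of Schur polynomials under $\frac{1}{k!}\mathrm{CT}\bigl[\prod_{i\ne j}(1-z_j/z_i)\,\cdot\,\bigr]$, and finish with Heine's formula.

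Second, read the identity in formal power series. Each homogeneous component then involves only finitely many $\lambda$, so the infinite Cauchy--Binet needs no analytic convergence argument.
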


\begin{proof}
 Fix $k\geqslant 1$ and set $Z=(z_1,\dots,z_k)$.
From the OPEs
\begin{align*}
&Y(z_a)Y(z_b)=:Y(z_a)Y(z_b):\,(1-z_b z_a^{-1})\quad(a<b),\\
&
Y(z)\,\Gamma_-(x)=:Y(z)\,\Gamma_-(x):\ H_{x,t}(z).
\end{align*}
with $H_{x,t}(z)=\prod_{i\ge1}\frac{1-t\,x_i z}{1-x_i z}
=\sum_{m\ge0}h^{(t)}_m(x)z^m$, we obtain
\begin{equation}\label{eq:A-pureVO}
\bigl\langle 1,\; Y(z_1)\cdots Y(z_k)\,\Gamma_-(x)\,.1 \bigr\rangle
=\Bigl(\prod_{1\le a<b\le k}(1-z_b z_a^{-1})\Bigr)\,
  \prod_{i=1}^k H_{x,t}(z_i).
\end{equation}

Expand $Y(z)=\sum_{n\in\mathbb Z}Y_n z^{-n}$ and let $\rho=(k-1,k-2,\dots,0)$.
By expanding the Vandermonde factor and extracting coefficients, we have the standard identity
\begin{equation}\label{eq:B-pureVO}
\bigl[z_1^{-\lambda_1-\rho_1}\cdots z_k^{-\lambda_k-\rho_k}\bigr]\,
\Bigl(\prod_{1\le a<b\le k}(1-z_b z_a^{-1})\Bigr)\,
Y(z_1)\cdots Y(z_k)
=\det\!\bigl(Y_{-\lambda_i+j-i}\bigr)_{1\le i,j\le k}.
\end{equation}
Applying \eqref{eq:B-pureVO} to the correlator and using multilinearity
(equivalently, Wick's theorem for the $Y$-modes), we obtain, for $\ell(\lambda)\le k$,
\[
\bigl\langle 1,\; \det\!\bigl(Y_{-\lambda_i+j-i}\bigr)_{i,j=1}^k\,\Gamma_-(x)\,.1 \bigr\rangle
=\det\!\bigl(\,\langle 1,\,Y_{-\lambda_i+j-i}\,\Gamma_-(x)\,.1\rangle\,\bigr)_{1\le i,j\le k}.
\]
Since
\[
\langle 1,\,Y(z)\,\Gamma_-(x)\,.1\rangle
=H_{x,t}(z)=\sum_{m\ge0}h_m^{(t)}(x)\,z^{m},
\]
we finally get
\[
\bigl\langle 1,\; \det\!\bigl(Y_{-\lambda_i+j-i}\bigr)\,\Gamma_-(x)\,.1 \bigr\rangle
=\det\bigl(h^{(t)}_{\lambda_i-i+j}(x)\bigr)_{1\le i,j\le k}
=S_\lambda(x;t).
\]

Moreover, multiplying the left-hand side of \eqref{eq:B-pureVO} by
$\prod_{i=1}^k H_y(z_i^{-1})$ and taking the same
coefficient gives
\[
\bigl[z_1^{-\lambda_1-\rho_1}\cdots z_k^{-\lambda_k-\rho_k}\bigr]\,
\Bigl(\prod_{1\le a<b\le k}(1-z_b z_a^{-1})\Bigr)\,
\prod_{i=1}^k H_y(z_i^{-1})
=\det\bigl(h_{\lambda_i-i+j}(y)\bigr)_{1\le i,j\le k}
=s_\lambda(y).
\]

Multiply \eqref{eq:A-pureVO} by $\prod_{i=1}^k H_y(z_i^{-1})$, extract
$(z_1^{-\lambda_1-\rho_1}\cdots z_k^{-\lambda_k-\rho_k})$, and sum over all
$\lambda$ with $\ell(\lambda)\le k$:
\[
\sum_{\ell(\lambda)\le k} S_\lambda(x;t)s_\lambda(y)
=\sum_{\ell(\lambda)\le k}
\bigl[z_1^{-\lambda_1-\rho_1}\cdots z_k^{-\lambda_k-\rho_k}\bigr]
\Bigl(\prod_{1\le a<b\le k}(1-z_b z_a^{-1})\Bigr)
\prod_{i=1}^k \bigl(H_{x,t}(z_i)H_y(z_i^{-1})\bigr).
\]

Now use the determinant form of the Vandermonde-type factor:
\[
\prod_{1\le a<b\le k}\bigl(1-z_b z_a^{-1}\bigr)
=\Bigl(\prod_{i=1}^k z_i^{\,k-i}\Bigr)^{-1}
\det\bigl(z_i^{\,j-i}\bigr)_{1\le i,j\le k}.
\]
Here, the additional factor \(\prod_i z_i^{\,k-i}\) is exactly offset by
\(\rho=(k-1,\dots,0)\) when we take the coefficient. Thus, we obtain the coefficient by taking the multiple linear coefficient of each \(z_i\).
\[
\sum_{\ell(\lambda)\le k} S_\lambda(x;t)\,s_\lambda(y)
=\det\Big([z^{\,j-i}]\,H_{x,t}(z)\,H_y(z^{-1})\Big)_{1\le i,j\le k}
=\det T_k\big(\phi\big),
\]
which proves the claim.
\end{proof}

\begin{remark}
Here, we use the \textit{length truncation} $\ell(\lambda)\le k$, whose Toeplitz symbol is $\phi(z)=H_{x,t}(z)\,H_y(z^{-1})$.  One may see another Gessel-type identity~\cite[Lemma 2]{Mat1}, depending on a different truncation, that is, for the \textit{first row truncation} $\lambda_1\le h$, the correct symbol changes to $ E_{x,t}(z^{-1})\,E_y(z)$.
\end{remark}

\section{Infinite wedge representations}\label{sect:infinite-wedge}

The infinite wedge (more precisely, the half-infinite wedge $\bigwedge^{\infty/2} V$, i.e.\ the fermionic Fock space) has connections to representation theory of infinite-dimensional symmetric groups, integrable systems, and modular forms, among many other areas. Here, we record only the features needed for the Boson-Fermion correspondence and relate the character theory of the symmetric group. Standard references include \cite{BlO,BO,Kac,O}. For
categorification of the Boson-Fermion correspondence and the infinite symmetric group, please see \cite{FPS}.

\subsection{Fermionic Fock space}

Let $V$ be a complex vector space with basis $\{e_k\}_{k\in\mathbb Z+\frac12}$.
The half-infinite wedge $\Lambda^{\infty/2}V$ is spanned by vectors
\[
  v_S \;=\; e_{s_1}\wedge e_{s_2}\wedge e_{s_3}\wedge\cdots,
\]
where $S=\{s_1>s_2>\cdots\}\subset\mathbb Z+\tfrac12$ satisfies
\[
  S_+\;=\;S\setminus\bigl(\mathbb Z_{\le0}-\tfrac12\bigr),\qquad
  S_-\;=\;\bigl(\mathbb Z_{\le0}-\tfrac12\bigr)\setminus S,
\]
both finite. We equip $\Lambda^{\infty/2}V$ with the inner product for which $\{v_S\}$ is orthonormal.
The charge-$m$ vacuum is
\[
  v_m\;=\;\bigl(m-\tfrac12\bigr)\wedge\bigl(m-\tfrac32\bigr)\wedge\bigl(m-\tfrac52\bigr)\wedge\cdots,
\]
and in particular $v_0$ is the Dirac sea.

\subsection{Free fermions}

Define operators $\psi$ and their adjoints $\psi^*$ on $\Lambda^{\infty/2}V$ by
\[
  \psi_k(f)=e_k\wedge f,\quad
  \psi_k^*(e_k\wedge f)=f,\quad \psi_k^*(e_\ell\wedge f)=0\,\, (\ell\neq k).
\]
They satisfy the canonical anticommutation relations (CAR)
\[
  \{\psi_k,\psi_\ell^{*}\}=\delta_{k\ell},\qquad
  \{\psi_k,\psi_\ell\}=0,\qquad
  \{\psi_k^{*},\psi_\ell^{*}\}=0,
\]
hence
\(
  \psi_k\psi_k^{*}\,v_S=v_S \ (k\in S),\ \psi_k\psi_k^{*}\,v_S=0 \ (k\notin S).
\)
Introduce generating fields (formal Laurent series)
\[
  \psi(z)=\sum_{k\in\mathbb Z+\tfrac12} z^{\,k}\,\psi_k,\qquad
  \psi^{*}(z)=\sum_{k\in\mathbb Z+\tfrac12} z^{-k}\,\psi^{*}_k,
\]
and the normal ordering (with respect to the Dirac sea)
\[
  :\psi_k\psi_k^{*}:\;=\;
  \begin{cases}
    \psi_k\psi_k^{*},& k>0,\\
    -\,\psi_k^{*}\psi_k,& k<0.
  \end{cases}
\]

\subsection{Bosons and $t$-half vertex operators}\label{subsec:t-half-vertex}

Set
\[
  \alpha_n \;=\;\sum_{k\in\mathbb Z+\tfrac12} :\psi_{k-n}\psi_k^{*}:,\qquad n\in\mathbb Z\setminus\{0\}.
\]
Then
\[
  [\alpha_n,\alpha_m]=n\,\delta_{n,-m},\qquad \alpha_n^*=\alpha_{-n},
\]
and
\begin{equation}\label{eq:Heis-on-psi}
  [\alpha_n,\psi(z)]=z^{\,n}\psi(z),\qquad
  [\alpha_n,\psi^{*}(z)]=-\,z^{\,n}\psi^{*}(z).
\end{equation}

According to the vertex operator realisation in Section~\ref{sect:VO-realise}, we consider the following
\textbf{half-vertex operators}\footnote{Under the Boson-Fermion correspondence (charge $0$),
creation modes are $a_{-n}$ and $\alpha_{-n}$, and annihilation modes are $a_n$ and $\alpha_n$ ($n>0$).
The identifications are $\alpha_{-n}\leftrightarrow a_{-n}=p_n$ and $\alpha_n\leftrightarrow a_n=n\,\partial/\partial p_n$ ($n\ge1$), so $[\alpha_n,\alpha_m]=n\delta_{n,-m}=[a_n,a_m]$.}
, given any sequence $u = (u_1, u_2, \ldots)$,
\begin{equation}\label{eq:Y-def}
  \mathfrak Y_{+}^{(t)}(u):=\exp\Bigl(\sum_{n\ge1}\frac{1-t^{n}}{n}\,\alpha_{n}\,u_n\Bigr),\qquad
  \mathfrak Y_{-}(u):=\exp\Bigl(\sum_{n\ge1}\frac{1}{n}\,\alpha_{-n}\,u_n\Bigr).
\end{equation}
By \eqref{eq:Heis-on-psi} and the Baker-Campbell-Hausdorff formula,
\begin{align}
  &\mathfrak Y_{+}^{(t)}(x)\,\psi(z)\,\bigl(\mathfrak Y_{+}^{(t)}(x)\bigr)^{-1}
   =H_{x,t}(z)\,\psi(z),\\
   &\mathfrak Y_{+}^{(t)}(x)\,\psi^{*}(z)\,\bigl(\mathfrak Y_{+}^{(t)}(x)\bigr)^{-1}
   =H_{x,t}(z)^{-1}\,\psi^{*}(z),\label{eq:tY-conj-plus}\\
  &\mathfrak Y_{-}(y)\,\psi(z)\,\bigl(\mathfrak Y_{-}(y)\bigr)^{-1}
   =H_{y}(z^{-1})\,\psi(z),\\
   &\mathfrak Y_{-}(y)\,\psi^{*}(z)\,\bigl(\mathfrak Y_{-}(y)\bigr)^{-1}
   =H_{y}(z^{-1})^{-1}\,\psi^{*}(z),\label{eq:tY-conj-minus}
\end{align}
where
\begin{equation}\label{eq:HxHy}
  H_{x,t}(z)=\exp\Bigl(\sum_{n\ge1}\frac{1-t^n}{n}\,p_n(x)\,z^n\Bigr)
            =\prod_{i}\frac{1-t\,x_i z}{1-x_i z},\quad
  H_{y}(z)
          =\prod_{j}\frac{1}{1-y_j z}.
\end{equation}
Moreover,
\begin{align}\label{eq:t-Cauchy-half}
  &\mathfrak Y_{+}^{(t)}(x)\,\mathfrak Y_{-}(y)
  \;=\;Z_t(x,y)\,\mathfrak Y_{-}(y)\,\mathfrak Y_{+}^{(t)}(x),\\
  &\log Z_t(x,y)=\sum_{n\ge1}\frac{1-t^n}{n}\,p_n(x)\,p_n(y)
  =\sum_{n\ge1}n\,(1-t^n)\,\mathfrak t_n \mathfrak t'_n,
\end{align}
with Miwa times $\mathfrak t_n=p_n(x)/n$, $\mathfrak t_n'=p_n(y)/n$.
Finally, since $\alpha_n v_m=0$ for $n>0$, we have
\[
  \mathfrak Y_{+}^{(t)}(x)\,v_m=v_m\qquad\text{for all }m\in\mathbb Z.
\]

\section{Correlation functions of $t$-Schur measure}\label{sect:corr-fcn}

\subsection{Coordinates on partitions}

To a partition $\lambda$, we associate the subset
\[
  \mathfrak S(\lambda)=\{\lambda_i-i+\tfrac12\}\subset\mathbb Z+\tfrac12.
\]
For example, $\mathfrak S(\varnothing)=\{-\tfrac12,-\tfrac32,-\tfrac52,\dots\}$.
One has $S=\mathfrak S(\lambda)$ if and only if
\[
  S_+ = S\setminus\bigl(\mathbb Z_{\le 0}-\tfrac12\bigr),\qquad
  S_- = \bigl(\mathbb Z_{\le 0}-\tfrac12\bigr)\setminus S
\]
are finite and $|S_+|=|S_-|$; this common value equals the number of diagonal boxes of~$\lambda$.
The finite set $S_+(\lambda)\cup S_-(\lambda)$ is the modified Frobenius coordinate set~\cite{KO}.

Given a finite $X\subset\mathbb Z+\tfrac12$ (we always list $X=\{x_1<\cdots<x_n\}$ in increasing order),
define the \textbf{$n$-point correlation function}
\[
  \mathfrak C_t(X)
  = \mathbb P_t^{x,y}\bigl(\{\lambda:\ X\subset \mathfrak S(\lambda)\}\bigr),
\]
where $\mathbb P_t^{x,y}(\{\lambda\})=Z_t(x,y)^{-1}S_\lambda(x;t)s_\lambda(y)$ is the
normalised $t$-Schur measure and
\(
Z_t(x,y)=\prod_{i,j}\frac{1-tx_i y_j}{1-x_i y_j}.
\)

\subsection{Determinantal structure}

Using half-vertex operators in subsection~\ref{subsec:t-half-vertex}, we have
\begin{equation}\label{eq:Ct-as-functional}
  \mathfrak C_t(X)
  = \frac{1}{Z_t(x,y)}\,
    \bigl\langle v_0\, ,\, \mathfrak Y_+^{(t)}(x)\,\bigl(\prod_{a\in X}\psi_a\psi_a^{*}\bigr)\,\mathfrak Y_-(y)\, v_0 \bigr\rangle.
\end{equation}
Set the Bogoliubov transform
\[
  \mathcal G_t := \mathfrak Y_+^{(t)}(x)\,\mathfrak Y_-(y)^{-1},\qquad
  \mathfrak F_k:=\mathcal G_t\,\psi_k\,\mathcal G_t^{-1},\qquad
  \mathfrak F_k^{*}:=\mathcal G_t\,\psi_k^{*}\,\mathcal G_t^{-1}.
\]
Using $\mathfrak Y_+^{(t)}\mathfrak Y_- = Z_t\,\mathfrak Y_-\,\mathfrak Y_+^{(t)}$ and the CAR, one rewrites the previous expectation as the vacuum expectation
\[
  \mathfrak C_t(X)=\bigl\langle v_0\, ,\, \prod_{x\in X}\mathfrak F_x\mathfrak F_x^{*}\, v_0 \bigr\rangle.
\]

\begin{theorem}\label{thm:main}
For any finite $X\subset\mathbb Z+\tfrac12$,
\begin{align}\label{eq:det-claim}
  \mathfrak C_t(X)=\det\bigl[\,\mathcal K_t(x_i,x_j)\,\bigr]_{x_i,x_j\in X},
\qquad
  \mathcal K_t(i,j) := \bigl\langle v_0\, ,\, \mathfrak F_i\,\mathfrak F_j^{*}\, v_0 \bigr\rangle.
\end{align}
\end{theorem}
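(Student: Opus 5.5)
The plan is to prove \eqref{eq:det-claim} by Wick's theorem for the free-fermion vacuum state, after checking that the conjugated operators $\mathfrak F_k,\mathfrak F_k^{*}$ are still linear combinations of the original free fermions.

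First I make the Bogoliubov transform explicit. By \eqref{eq:tY-conj-plus} and \eqref{eq:tY-conj-minus}, conjugation by $\mathfrak Y_+^{(t)}(x)$ multiplies $\psi(z)$ by $H_{x,t}(z)$. Conjugation by $\mathfrak Y_-(y)^{-1}$ multiplies $\psi(z)$ by $H_y(z^{-1})^{-1}$. For $\psi^*(z)$ both factors are inverted. Hence
\[
\mathcal G_t\,\psi(z)\,\mathcal G_t^{-1}=\frac{H_{x,t}(z)}{H_y(z^{-1})}\,\psi(z),\qquad
\mathcal G_t\,\psi^{*}(z)\,\mathcal G_t^{-1}=\frac{H_y(z^{-1})}{H_{x,t}(z)}\,\psi^{*}(z).
\]
Taking the coefficient of $z^{k}$ (respectively $z^{-k}$) gives
\[
\mathfrak F_k=\sum_{l}c_{k-l}\,\psi_l,\qquad \mathfrak F_k^{*}=\sum_{l}\tilde c_{l-k}\,\psi_l^{*},
\]
where $c$ and $\tilde c$ are the Laurent coefficients of $J(z)=H_{x,t}(z)/H_y(z^{-1})$ and of $J(z)^{-1}$. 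For admissible specialisations (e.g.\ $|x_i|,|y_j|<1$) these coefficients converge on an annulus containing $|z|=1$. So $\mathfrak F_k$ and $\mathfrak F^{*}_k$ are well-defined (infinite) linear combinations of $\psi_l$ and of $\psi_l^{*}$ respectively. Because conjugation is an algebra automorphism, they satisfy the same CAR:
\[
\{\mathfrak F_k,\mathfrak F_l^{*}\}=\delta_{kl},\qquad \{\mathfrak F_k,\mathfrak F_l\}=\{\mathfrak F_k^{*},\mathfrak F_l^{*}\}=0.
\]
I also need the reduction $\mathfrak C_t(X)=\langle v_0,\prod_{x\in X}\mathfrak F_x\mathfrak F_x^{*}v_0\rangle$ recorded before the theorem. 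It follows from $\mathfrak Y_+^{(t)}(x)v_0=v_0$ and its adjoint statement $\langle v_0,\mathfrak Y_-(y)^{-1}\,\cdot\,\rangle=\langle v_0,\cdot\,\rangle$ (since $\alpha_{-n}^{*}=\alpha_n$ kills $v_0$ on the left). One inserts $\mathcal G_t^{-1}\mathcal G_t$ between the factors and uses \eqref{eq:t-Cauchy-half} to cancel $Z_t$. I will take this as given.

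Second, I apply Wick's theorem. The functional $A\mapsto\langle v_0,Av_0\rangle$ is quasi-free (Gaussian) on the CAR algebra generated by linear combinations of $\psi_l,\psi_l^{*}$. This is standard: $v_0$ is a Slater determinant, annihilated by $\psi_l$ for $l<0$ and by $\psi_l^{*}$ for $l>0$. Hence for fermion-linear operators $\phi_1,\dots,\phi_{2n}$ the expectation of their product equals the Pfaffian of the matrix of pair expectations $\langle v_0,\phi_a\phi_b v_0\rangle$, $a<b$. Charge conservation of $v_0$ gives $\langle \mathfrak F\mathfrak F\rangle=\langle\mathfrak F^{*}\mathfrak F^{*}\rangle=0$. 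Apply this to the ordered product $\mathfrak F_{x_1}\mathfrak F^{*}_{x_1}\cdots\mathfrak F_{x_n}\mathfrak F^{*}_{x_n}$. The Pfaffian then collapses to a determinant $\det[K(x_i,x_j)]$, where $K(x_i,x_j)=\langle\mathfrak F_{x_i}\mathfrak F_{x_j}^{*}\rangle$ for $i\le j$, and for $i>j$ it is $-\langle\mathfrak F^{*}_{x_j}\mathfrak F_{x_i}\rangle$. By the CAR with $x_i\neq x_j$, $-\langle\mathfrak F^{*}_{x_j}\mathfrak F_{x_i}\rangle=\langle\mathfrak F_{x_i}\mathfrak F^{*}_{x_j}\rangle$. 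So every entry equals $\mathcal K_t(x_i,x_j)$, the diagonal included, which is \eqref{eq:det-claim}.

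Alternatively, I can run an induction on $|X|$: anticommute the rightmost $\mathfrak F^{*}_{x_n}$ through to $v_0$ after splitting $\mathfrak F^{*}=\mathfrak F^{*}_{+}+\mathfrak F^{*}_{-}$ into annihilating and creating parts. This reproduces the Laplace expansion of the determinant.

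The main obstacle is justifying Wick's theorem for the infinite linear combinations $\mathfrak F_k$. Each is an infinite sum of $\psi_l$, and $\mathfrak F_k^{*}$ involves $\psi_l^{*}$ with $l\to-\infty$ acting nontrivially. I would handle this by working with formal parameters (power series in the $p_n(x)$, $p_n(y)$), where every coefficient of the correlator involves only finitely many modes and Wick's theorem for finite sums applies. Alternatively, I would use the sign-reversing bookkeeping directly on the $2n$-fold contour integral of $\langle\psi(z_1)\psi^*(w_1)\cdots\rangle=\det\bigl[\tfrac{1}{z_i-w_j}\bigr]$-type Cauchy determinant formulas.
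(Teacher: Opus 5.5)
Your proposal is correct and follows the same route as the paper: you rewrite $\mathfrak C_t(X)$ as a vacuum expectation of the Bogoliubov-conjugated fermions $\mathfrak F_x,\mathfrak F_x^{*}$, then apply a Wick expansion to obtain $\det[\mathcal K_t(x_i,x_j)]$. Your version is the more careful one: the paper justifies discarding leftover monomials by saying they ``kill $v_0$'', which is not literally true because $\mathfrak F^{*}_k$ also involves $\psi^{*}_l$ with $l<0$, and your quasi-free Wick theorem (Pfaffian collapsing to a determinant by charge conservation, with the CAR fixing the sub-diagonal entries) is exactly what makes the paper's ``matrix elements factor over different pairs'' step legitimate.
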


\begin{proof}

By \eqref{eq:Ct-as-functional} and the convention that the product
$\prod_{a\in X}$ is taken in the increasing order of~$a$,
\begin{align*}
  \mathfrak C_t(X)
  &= \frac{1}{Z_t(x,y)}\,
    \bigl\langle v_0\, ,\, \mathfrak Y_+^{(t)}(x)\,
      \bigl(\psi_{u_1}\psi_{u_1}^*\cdots \psi_{u_n}\psi_{u_n}^*\bigr)\,
     \mathfrak Y_-(y) \, v_0 \bigr\rangle\\
  &=
  \bigl\langle v_0\, ,\, \mathfrak F_{u_1}\cdots\mathfrak F_{u_n}\,
                     \mathfrak F_{u_n}^{*}\cdots\mathfrak F_{u_1}^{*}\, v_0 \bigr\rangle,
\end{align*}
where in the last equality we used
$\mathcal G_t=\mathfrak Y_+^{(t)}(x)\mathfrak Y_-(y)^{-1}$ and
$\mathfrak Y_+^{(t)}\mathfrak Y_-=Z_t\,\mathfrak Y_-\,\mathfrak Y_+^{(t)}$.

The CAR imply, for $i\neq j$,
\[
  \mathfrak F_{u_i}\,\mathfrak F_{u_j}^{*}=-\,\mathfrak F_{u_j}^{*}\,\mathfrak F_{u_i},
  \qquad
  \{\mathfrak F_{u_i},\mathfrak F_{u_j}\}=\{\mathfrak F_{u_i}^{*},\mathfrak F_{u_j}^{*}\}=0.
\]
Repeatedly commuting $\mathfrak F_{u_n}^{*}\cdots\mathfrak F_{u_1}^{*}$
to the left so that each $\mathfrak F_{u_{\sigma(i)}}^{*}$ sits immediately
to the right of $\mathfrak F_{u_i}$ produces a signed sum over permutations:
\begin{equation}\label{eq:wick-expansion}
  \mathfrak F_{u_1}\cdots\mathfrak F_{u_n}\,
  \mathfrak F_{u_n}^{*}\cdots\mathfrak F_{u_1}^{*}
  =
  \sum_{\sigma\in S_n}(-1)^{\sigma}\,
  \bigl(\mathfrak F_{u_1}\mathfrak F_{u_{\sigma(1)}}^{*}\bigr)\cdots
  \bigl(\mathfrak F_{u_n}\mathfrak F_{u_{\sigma(n)}}^{*}\bigr)
  +(\text{terms killing }v_0).
\end{equation}
Indeed, any monomial in which some starred operator remains to the far right of all unstarred ones annihilates $v_0$ and does not contribute to the vacuum matrix element.
Taking $\langle v_0,\,\cdot\, v_0\rangle$ of \eqref{eq:wick-expansion} and using that
matrix elements factor over different pairs yields
\[
  \mathfrak C_t(X)=\sum_{\sigma\in S_n}(-1)^{\sigma}\,
  \prod_{i=1}^{n}\bigl\langle v_0\, ,\, \mathfrak F_{u_i}\mathfrak F_{u_{\sigma(i)}}^{*}\, v_0 \bigr\rangle
  \;=\;\det\bigl[\mathcal K_t(u_i,u_j)\bigr]_{i,j=1}^{n},
\]
which is precisely \eqref{eq:det-claim}.
\end{proof}

\subsection{Generating function of the kernel}

By the conjugation relations \eqref{eq:tY-conj-plus}-\eqref{eq:tY-conj-minus},
\begin{align}
  \mathcal G_t\,\psi(z)\,\mathcal G_t^{-1}&=\frac{H_{x,t}(z)}{H_y(z^{-1})}\,\psi(z)\ =:\ \mathcal J_t(z)\,\psi(z),\label{eq:Gpsi}\\
  \mathcal G_t\,\psi^{*}(w)\,\mathcal G_t^{-1}&=\frac{H_y(w^{-1})}{H_{x,t}(w)}\,\psi^{*}(w)\ =:\ \mathcal J_t(w)^{-1}\psi^*(w),\label{eq:Gpsistar}
\end{align}
where
\[
  \mathcal J_t(z)=\exp\Bigl(\sum_{n\ge1}\frac{1-t^n}{n}p_n(x)\,z^{n}-\sum_{n\ge1}\frac{1}{n}p_n(y)\,z^{-n}\Bigr)
        =\prod_i\frac{1-tx_i z}{1-x_i z}\;\prod_j\Bigl(1-\frac{y_j}{z}\Bigr).
\]
Since \(\langle v_0,\psi(z)\psi^*(w)v_0\rangle=\sum_{j\in\mathbb Z_{\ge0}+\frac12}(w/z)^j=\frac{\sqrt{zw}}{z-w}\)
(as a formal expansion in $w/z$), we get:

\begin{theorem}[Kernel generating function]\label{thm:GF}
For the formal expansion region $|w|<|z|$,
\[
  \mathcal K_t(z,w) :=\sum_{i,j\in\mathbb Z+\frac12} z^{\,i}w^{-j}\,\mathcal K_t(i,j)
  =\frac{\sqrt{zw}}{\,z-w\,}\,\frac{\mathcal J_t(z)}{\mathcal J_t(w)}.
\]
\end{theorem}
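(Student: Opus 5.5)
The plan is to assemble the generating function from the conjugation formulas \eqref{eq:Gpsi}–\eqref{eq:Gpsistar}, using linearity of the vacuum expectation in each mode. By definition, $\mathcal K_t(i,j)=\langle v_0,\mathcal G_t\psi_i\mathcal G_t^{-1}\mathcal G_t\psi_j^*\mathcal G_t^{-1}v_0\rangle$. Summing against $z^iw^{-j}$ and using the generating fields $\psi(z)=\sum z^k\psi_k$ and $\psi^*(w)=\sum w^{-k}\psi^*_k$ gives
\[
\mathcal K_t(z,w)=\bigl\langle v_0,\ \mathcal G_t\,\psi(z)\,\mathcal G_t^{-1}\,\mathcal G_t\,\psi^*(w)\,\mathcal G_t^{-1}\,v_0\bigr\rangle .
\]
We then substitute $\mathcal G_t\psi(z)\mathcal G_t^{-1}=\mathcal J_t(z)\psi(z)$ and $\mathcal G_t\psi^*(w)\mathcal G_t^{-1}=\mathcal J_t(w)^{-1}\psi^*(w)$. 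The scalar series $\mathcal J_t(z)$ and $\mathcal J_t(w)^{-1}$ factor out, leaving $\mathcal J_t(z)\mathcal J_t(w)^{-1}\langle v_0,\psi(z)\psi^*(w)v_0\rangle$.

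The remaining step is the free two-point function. Since $\psi_k\psi_l^*v_0$ has zero overlap with $v_0$ unless $k=l$, and $\psi_k\psi_k^*v_0=v_0$ exactly when $k\in\mathfrak S(\varnothing)$, we get
\[
\langle v_0,\psi(z)\psi^*(w)v_0\rangle=\sum_{k}(z/w)^{k}\,\mathbf 1[k<0]=\sum_{j\in\mathbb Z_{\ge0}+\frac12}(w/z)^{j}=\frac{\sqrt{zw}}{z-w},
\]
which converges for $|w|<|z|$. This is the identity already recorded just before the statement; I would re-derive it as a consistency check on the sign and index conventions. Multiplying the pieces gives the claimed formula.

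The main obstacle is justifying the manipulation with $\mathcal G_t$, and in particular the sandwich convention. The conjugation by $\mathcal G_t=\mathfrak Y_+^{(t)}(x)\mathfrak Y_-(y)^{-1}$ in \eqref{eq:Ct-as-functional} is only meaningful after normalising by $Z_t$ and letting $\mathfrak Y_\pm$ act on the vacua: $\mathfrak Y_+^{(t)}$ fixes $v_0$ from the left pairing, and $\mathfrak Y_-(y)^{-1}$ fixes $v_0$ on the right, by adjointness. I would verify that $\mathcal K_t(i,j)$ as defined agrees with $Z_t^{-1}\langle v_0,\mathfrak Y_+^{(t)}\psi_i\psi_j^*\mathfrak Y_-v_0\rangle$, using \eqref{eq:t-Cauchy-half}, so that the factors $\mathcal J_t$ appear with the stated orientation. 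I would also check that the products of formal series $\mathcal J_t(z)\mathcal J_t(w)^{-1}\frac{\sqrt{zw}}{z-w}$ are well-defined in the region $|y_j|<|w|<|z|<1/|x_i|$. There, each mode coefficient is an absolutely convergent sum, which legitimises exchanging the mode sums with the vacuum pairing.
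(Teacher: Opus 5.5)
Your proposal is correct and follows the paper's proof essentially verbatim. You sum $\langle v_0,\mathfrak F_i\mathfrak F_j^*v_0\rangle$ against $z^iw^{-j}$, pull out the scalar factor $\mathcal J_t(z)\mathcal J_t(w)^{-1}$ via \eqref{eq:Gpsi}--\eqref{eq:Gpsistar}, and evaluate the free two-point function, which you correctly compute as $\sum_{k<0}(z/w)^k=\sqrt{zw}/(z-w)$.

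Your optional side check is not needed, since $\mathcal K_t(i,j)$ is defined directly as a $\mathcal G_t$-conjugated vacuum expectation and the orientation of $\mathcal J_t$ follows from the operator identity $\mathcal G_t\psi(z)\psi^*(w)\mathcal G_t^{-1}=\mathcal J_t(z)\mathcal J_t(w)^{-1}\psi(z)\psi^*(w)$ alone. It does, however, have the roles reversed. In fact $\mathfrak Y_+^{(t)}(x)^{\pm1}v_0=v_0$, while $\mathfrak Y_-(y)^{\pm1}$ leaves the left pairing $\langle v_0,\cdot\,\rangle$ unchanged. This, together with $\mathfrak Y_-^{-1}\mathfrak Y_+^{(t)}=Z_t\,\mathfrak Y_+^{(t)}\mathfrak Y_-^{-1}$, is what gives $Z_t^{-1}\langle v_0,\mathfrak Y_+^{(t)}A\,\mathfrak Y_-v_0\rangle=\langle v_0,\mathcal G_tA\,\mathcal G_t^{-1}v_0\rangle$.
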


\begin{proof}
By definition, \(\mathcal K_t(i,j)=\langle v_0,\mathfrak F_i\mathfrak F_j^*v_0\rangle\).
Multiplying by \(z^i w^{-j}\) and summing over \(i,j\) gives
\[
\sum_{i,j} z^i w^{-j}\,\langle v_0,\mathfrak F_i\mathfrak F_j^*v_0\rangle
=\langle v_0,\,\mathcal G_t\,\psi(z)\psi^*(w)\,\mathcal G_t^{-1}\,v_0\rangle.
\]
Using \eqref{eq:Gpsi}-\eqref{eq:Gpsistar} and the vacuum two-point function
\(\langle v_0,\psi(z)\psi^*(w)v_0\rangle=\frac{\sqrt{zw}}{z-w}\)
yields the claim.
\end{proof}

Let $\mathcal J_t(z)=\sum_{n\in\mathbb Z}\mathcal J_{t,n}z^n$ and $\mathcal J_t(z)^{-1}=\sum_{m\in\mathbb Z}\widehat{\mathcal J}_{t,m}z^m$.
Extracting coefficients gives:

\begin{corollary}\label{cor:coef}
For $i,j\in\mathbb Z+\tfrac12$,
\[
  \mathcal K_t(i,j)=\sum_{k\in\mathbb Z_{\ge0}+\frac12}  \mathcal J_{t,\,i+k}\,\widehat { \mathcal J}_{t,\,-j-k}.
\]
\end{corollary}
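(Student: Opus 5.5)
We read Corollary~\ref{cor:coef} off Theorem~\ref{thm:GF} by extracting the coefficient of $z^{i}w^{-j}$ on both sides. The first step is to expand the vacuum two-point function in the region $|w|<|z|$:
\[
\frac{\sqrt{zw}}{z-w}=\sum_{k\in\mathbb Z_{\ge0}+\frac12}\Bigl(\frac{w}{z}\Bigr)^{k}=\sum_{k\in\mathbb Z_{\ge0}+\frac12} z^{-k}w^{k}.
\]
The second step is to insert the Laurent expansions $\mathcal J_t(z)=\sum_{n}\mathcal J_{t,n}z^n$ and $\mathcal J_t(w)^{-1}=\sum_m\widehat{\mathcal J}_{t,m}w^m$. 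This gives
\[
\mathcal K_t(z,w)=\sum_{k\in\mathbb Z_{\ge0}+\frac12}\ \sum_{n,m\in\mathbb Z}\mathcal J_{t,n}\,\widehat{\mathcal J}_{t,m}\,z^{\,n-k}\,w^{\,m+k}.
\]

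Next we match this against the defining series $\sum_{i,j}z^iw^{-j}\mathcal K_t(i,j)$. The coefficient of $z^{i}w^{-j}$ forces $n-k=i$ and $m+k=-j$, so $n=i+k$ and $m=-j-k$. Since $i,j,k$ all lie in $\mathbb Z+\tfrac12$, both $n$ and $m$ are integers, as they must be. Summing over $k$ then yields exactly $\sum_{k}\mathcal J_{t,i+k}\widehat{\mathcal J}_{t,-j-k}$. Alternatively, one can derive the same formula directly at the operator level: by \eqref{eq:Gpsi}--\eqref{eq:Gpsistar} we have $\mathfrak F_i=\sum_n\mathcal J_{t,n}\psi_{i-n}$ and $\mathfrak F^*_j=\sum_m\widehat{\mathcal J}_{t,m}\psi^*_{j+m}$. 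Combining this with $\langle v_0,\psi_a\psi_b^*v_0\rangle=\delta_{ab}\mathbf 1_{a<0}$ and setting $a=-k$ reproduces the same sum.

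The main obstacle is justifying that the coefficient extraction is legitimate, that is, that the resulting sum over $k$ is well defined. The product of the three series is not automatically a well-defined formal series, since $\mathcal J_t$ has both positive and negative powers. I would handle this analytically. Take specialisations where $H_{x,t}$ is analytic on $|z|<R$ and $H_y(z^{-1})$ is analytic on $|z|>r$, with $r<R$, which is the convergence condition implicit in Definition~\ref{def:tSchur-measure}. Then $\mathcal J_{t,n}$ and $\widehat{\mathcal J}_{t,m}$ are Laurent coefficients on the annulus $r<|z|<R$. Choosing circles $r<|w|<|z|<R$, every series converges absolutely, so the extraction can be realised as a double contour integral and the sum over $k$ converges.

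In the finite-variable case the extraction is purely formal. Here $\prod_j(1-y_j/z)$ is a polynomial in $z^{-1}$ and the inverse factor $\prod_i(1-x_iz)/(1-tx_iz)$ is a power series in $z$, so for fixed $i,j$ only finitely many $k$ contribute to $\mathcal J_{t,i+k}$ paired with the bounded-below index $-j-k$.
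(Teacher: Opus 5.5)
Your main argument is correct and is the same as the paper's proof. You expand $\frac{\sqrt{zw}}{z-w}=\sum_{k\in\mathbb Z_{\ge0}+\frac12}z^{-k}w^{k}$, multiply by the Laurent series of $\mathcal J_t(z)$ and $\mathcal J_t(w)^{-1}$, and match the coefficient of $z^iw^{-j}$ via $n=i+k$, $m=-j-k$. Your operator-level cross-check via $\mathfrak F_i=\sum_n\mathcal J_{t,n}\psi_{i-n}$ and $\mathfrak F_j^*=\sum_m\widehat{\mathcal J}_{t,m}\psi^*_{j+m}$ is also right. The paper says nothing about convergence.

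\textbf{Your last paragraph is false and should be dropped.}
\begin{itemize}
\item In the finite case the polynomial $\prod_j(1-y_j/z)$ sits in $\mathcal J_t$. By contrast, $\mathcal J_t(w)^{-1}$ contains $\prod_j(1-y_j/w)^{-1}$, an infinite series in $w^{-1}$.
\item So for $t\neq0$ each $\widehat{\mathcal J}_{t,m}$ is already an infinite sum: a power series in $w$ times a power series in $w^{-1}$.
\item For every $t$, infinitely many $k$ contribute. With one $x$, one $y$ and $t=0$, the $k$-th term is $(1-xy)^2x^{i+k}y^{j+k}\neq0$ for all large $k$.
\end{itemize}

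The justification therefore has to come from your annulus argument. There you should require $H_{x,t}$ to be \emph{nonvanishing} as well as analytic on $|z|<R$, so that $\mathcal J_t^{-1}$ is analytic on the annulus. For finite variables this means $\max(1,|t|)\max_i x_i\max_j y_j<1$. For $t<-1$ this condition is strictly stronger than convergence of $Z_t$, so it is not merely implicit in the definition.
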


\begin{proof}
From Theorem~\ref{thm:GF} and the geometric expansion
\(
\frac{\sqrt{zw}}{z-w}=\sum_{r\ge0} z^{-r-\frac12}w^{r+\frac12}
\),
one obtains
\(
  \mathcal K_t(z,w)
  =\sum_{r\ge0}\sum_{n,m}\mathcal J_{t,n}\,\widehat{\mathcal J}_{t,m}\,
   z^{\,n-r-\frac12}w^{\,m+r+\frac12}.
\)
Matching the coefficient of \(z^{\,i}w^{-j}\) with \(k=r+\tfrac12\) gives the formula.
\end{proof}

\begin{corollary}\label{cor:d1}
Differentiating with respect to $p_1(x)$ yields, for $|w|<|z|$,
\[
  \partial_{p_1(x)} \mathcal K_t(z,w)=(1-t)\,\sqrt{zw}\,\frac{ \mathcal J_t(z)}{ \mathcal J_t(w)}.
\]
\end{corollary}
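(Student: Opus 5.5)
We differentiate the closed form of Theorem~\ref{thm:GF} directly, treating the power sums $p_n(x)$ ($n\ge1$) as independent formal variables. Write
\[
\log\mathcal J_t(z)=\sum_{n\ge1}\frac{1-t^n}{n}p_n(x)\,z^{n}-\sum_{n\ge1}\frac1n p_n(y)\,z^{-n}.
\]
The only dependence on $p_1(x)$ is the term $(1-t)p_1(x)z$. Hence
\[
\partial_{p_1(x)}\mathcal J_t(z)=(1-t)\,z\,\mathcal J_t(z),\qquad
\partial_{p_1(x)}\mathcal J_t(w)^{-1}=-(1-t)\,w\,\mathcal J_t(w)^{-1}.
\]
The prefactor $\sqrt{zw}/(z-w)$ does not depend on $p_1(x)$.

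Next we apply the product rule to $\frac{\sqrt{zw}}{z-w}\,\mathcal J_t(z)\mathcal J_t(w)^{-1}$:
\[
\partial_{p_1(x)}\mathcal K_t(z,w)=\frac{\sqrt{zw}}{z-w}\,(1-t)(z-w)\,\frac{\mathcal J_t(z)}{\mathcal J_t(w)}.
\]
The factor $z-w$ cancels the denominator, which gives the claimed formula.

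The one point needing care is that this cancellation must be legitimate at the level of formal series in the expansion region $|w|<|z|$. Here $(z-w)^{-1}$ means $\sum_{r\ge0}w^r z^{-r-1}$, and multiplying it by $(z-w)$ gives exactly $1$ as a formal series, so no boundary term appears.

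The same result can be checked coefficientwise through Corollary~\ref{cor:coef}. Differentiating shifts the coefficients: $\partial\mathcal J_{t,n}=(1-t)\mathcal J_{t,n-1}$ and $\partial\widehat{\mathcal J}_{t,m}=-(1-t)\widehat{\mathcal J}_{t,m-1}$. The sum over $k\in\mathbb Z_{\ge0}+\frac12$ then telescopes to the single $k=\frac12$ boundary term. That term is exactly the coefficient of $z^iw^{-j}$ in $\sqrt{zw}\,\mathcal J_t(z)/\mathcal J_t(w)$.
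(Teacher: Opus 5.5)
Your proposal is correct and follows essentially the same argument as the paper: differentiate $\log\mathcal J_t$ in $p_1(x)$ to get $\partial_{p_1(x)}\bigl(\mathcal J_t(z)/\mathcal J_t(w)\bigr)=(1-t)(z-w)\,\mathcal J_t(z)/\mathcal J_t(w)$, then cancel the factor $z-w$ against the prefactor $\frac{\sqrt{zw}}{z-w}$. Your extra coefficientwise telescoping check via Corollary~\ref{cor:coef} is a valid consistency verification but is not needed.
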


\begin{proof}
From the definition of $\mathcal J_t$,
\(
  \partial_{p_1(x)}\mathcal J_t(z)=(1-t)\,z\,\mathcal J_t(z),\
  \partial_{p_1(x)}\mathcal J_t(w)^{-1}=-(1-t)\,w\,\mathcal J_t(w)^{-1}.
\)
Thus \(\partial_{p_1(x)}(\mathcal J_t(z)/\mathcal J_t(w))=(1-t)(z-w)\mathcal J_t(z)/\mathcal J_t(w)\),
and multiplying by \(\frac{\sqrt{zw}}{z-w}\) gives the claim.
\end{proof}

\subsection{Integrable form of $\mathcal K_t$}

Because \((z\partial_z+w\partial_w)\log\frac{\sqrt{zw}}{z-w}=0\), differentiating $\mathcal K_t$ gives
\begin{equation}\label{eq:Kt-diff}
  \Bigl(z\frac{\partial}{\partial z}+w\frac{\partial}{\partial w}\Bigr)\mathcal K_t(z,w)
  =
  \frac{\sqrt{zw}}{\,z-w\,}\,\frac{\mathcal J_t(z)}{\mathcal J_t(w)}\,
  \Bigl(z\frac{\partial}{\partial z}-w\frac{\partial}{\partial w}\Bigr)
  \log\bigl(\mathcal J_t(z)\mathcal J_t(w)\bigr).
\end{equation}
For a finite specialisation $\mathbf{x}=(x_1,\dots,x_M)$, $\mathbf{y}=(y_1,\dots,y_N)$ with $|x_i|,|y_j|<1$,
a direct computation shows
\begin{align}
  \frac{1}{z-w}
  \Bigl(z\frac{\partial}{\partial z}-w\frac{\partial}{\partial w}\Bigr)
  \log\bigl(\mathcal J_t(z)\mathcal J_t(w)\bigr)
  &=
  \sum_{i=1}^M\Bigl[
       \frac{x_i}{(1-x_i z)(1-x_i w)}
      -\frac{t\,x_i}{(1-t x_i z)(1-t x_i w)}\Bigr]\notag\\
  &\quad
   -\sum_{j=1}^N\frac{y_j}{(z-y_j)(w-y_j)}.\label{eq:rank-sum}
\end{align}
Combining \eqref{eq:Kt-diff}-\eqref{eq:rank-sum} yields a finite rank decomposition of the generating function of $(i-j)\mathcal K_t(i,j)$, i.e.\ $\mathcal K_t$ is integrable in the sense of Its-Izergin-Korepin-Slavnov~\cite{IIKS}.

\begin{proposition}
\label{prop:Kt-integrable}
Under the above finite specialisation, there exist functions $f_\nu,g_\nu$ such that
\[
  (i-j)\,\mathcal K_t(i,j)=\sum_{\nu=1}^{2M+N} f_\nu(i)\,g_\nu(j).
\]
Equivalently,
\[
  \Bigl(z\frac{\partial}{\partial z}+w\frac{\partial}{\partial w}\Bigr)\mathcal K_t(z,w)
  \;=\;\sum_{\nu=1}^{2M+N} F_\nu(z)\,G_\nu(w),
\]
with the explicit choice
\begin{align*}
  F_{i}^{(x)}(z)&=\sqrt{z}\,\mathcal J_t(z)\,\frac{x_i}{1-x_i z}, &
  G_{i}^{(x)}(w)&=\frac{\sqrt{w}\,}{\mathcal J_t(w)}\,\frac{1}{1-x_i w},
  &&(1\le i\le M),\\
  F_{i}^{(tx)}(z)&=-\,\sqrt{z}\,\mathcal J_t(z)\,\frac{t\,x_i}{1-t x_i z}, &
  G_{i}^{(tx)}(w)&=\frac{\sqrt{w}\,}{\mathcal J_t(w)}\,\frac{1}{1-t x_i w},
  &&(1\le i\le M),\\
  F_{j}^{(y)}(z)&=-\,\sqrt{z}\,\mathcal J_t(z)\,\frac{1}{z-y_j}, &
  G_{j}^{(y)}(w)&=\frac{\sqrt{w}\,}{\mathcal J_t(w)}\,\frac{y_j}{w-y_j},
  &&(1\le j\le N).
\end{align*}
\end{proposition}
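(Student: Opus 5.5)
The plan is to start from the generating function of Theorem~\ref{thm:GF},
\[
\mathcal K_t(z,w)=\frac{\sqrt{zw}}{z-w}\,\frac{\mathcal J_t(z)}{\mathcal J_t(w)},
\]
and apply the Euler-type operator $D:=z\partial_z+w\partial_w$ to it. Since $\sqrt{zw}/(z-w)$ is homogeneous of degree $0$, $D$ only acts on the ratio of $\mathcal J_t$'s, giving
\[
D\,\mathcal K_t(z,w)=\frac{\sqrt{zw}}{z-w}\,\frac{\mathcal J_t(z)}{\mathcal J_t(w)}\,\Bigl(z\partial_z\log\mathcal J_t(z)-w\partial_w\log\mathcal J_t(w)\Bigr).
\]
This is the content of \eqref{eq:Kt-diff}, read as a difference of logarithmic derivatives.

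\textbf{Step 1: the divided difference.} For a finite specialisation, $\log\mathcal J_t(z)=\sum_i[\log(1-tx_iz)-\log(1-x_iz)]+\sum_j\log(1-y_j/z)$. I will compute the logarithmic derivatives factor by factor:
\[
z\partial_z\bigl(-\log(1-x z)\bigr)=\frac{xz}{1-xz},\qquad z\partial_z\log(1-y/z)=\frac{y}{z-y}.
\]
The factor $\log(1-txz)$ is treated the same way with $x$ replaced by $tx$ and the opposite sign. Then I form the divided differences:
\[
\frac{xz}{1-xz}-\frac{xw}{1-xw}=\frac{x(z-w)}{(1-xz)(1-xw)},\qquad
\frac{y}{z-y}-\frac{y}{w-y}=-\frac{y(z-w)}{(z-y)(w-y)}.
\]
In each case the factor $(z-w)$ cancels the denominator of $\sqrt{zw}/(z-w)$, which is exactly \eqref{eq:rank-sum}.

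\textbf{Step 2: factorisation.} Multiplying \eqref{eq:rank-sum} by $\sqrt{zw}\,\mathcal J_t(z)/\mathcal J_t(w)$, each of the $2M+N$ summands factors as a function of $z$ times a function of $w$. These are precisely the pairs $F^{(x)}_iG^{(x)}_i$, $F^{(tx)}_iG^{(tx)}_i$ and $F^{(y)}_jG^{(y)}_j$ listed in the statement. I will check the signs directly; for instance, the $y$-term is $-\sqrt z\,\mathcal J_t(z)(z-y_j)^{-1}\cdot\sqrt w\,y_j\,\mathcal J_t(w)^{-1}(w-y_j)^{-1}$. This gives the second (generating-function) form of the proposition.

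\textbf{Step 3: coefficient extraction.} Applying $D$ to $\sum_{i,j}z^iw^{-j}\mathcal K_t(i,j)$ multiplies the $(i,j)$ coefficient by $i-j$. Hence, setting
\[
f_\nu(i):=[z^i]F_\nu(z),\qquad g_\nu(j):=[w^{-j}]G_\nu(w),
\]
the first form of the proposition follows by comparing coefficients of $z^iw^{-j}$.

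\textbf{Main obstacle: consistent expansions.} I expect the main subtlety to be making this coefficient comparison legitimate. $\mathcal K_t(z,w)$ is a formal expansion in the region $|w|<|z|$, whereas after the cancellation of $z-w$ each $F_\nu(z)G_\nu(w)$ is a product of one-variable series. To handle this, I will fix the annulus $\max_j|y_j|<|z|,|w|<\min_i\min(1,|x_i|^{-1},|tx_i|^{-1})$, in which $\mathcal J_t^{\pm1}$ and all the rational factors $1/(1-x_iz)$, $1/(1-tx_iz)$ and $1/(z-y_j)$ have convergent Laurent expansions (the last in powers of $y_j/z$). Both sides are then genuine analytic functions on the region $\{|w|<|z|\}$ inside this annulus, so their Laurent coefficients agree. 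This identifies $f_\nu$ and $g_\nu$ as honest Laurent coefficients of these expansions and gives a rank-$(2M+N)$ decomposition in the sense of \cite{IIKS}.
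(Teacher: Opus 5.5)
Your proposal is correct and follows the paper's route. You apply $z\partial_z+w\partial_w$ to the generating function of Theorem~\ref{thm:GF}, let the divided differences \eqref{eq:rank-sum} cancel the factor $z-w$, split the $2M+N$ summands as $F_\nu(z)G_\nu(w)$, and extract coefficients; you also make explicit the $z-w$ cancellation and the justification of the coefficient comparison, which the paper's proof leaves implicit. Note that your annulus is nonempty exactly when $|t\,x_iy_j|<1$ for all $i,j$ (automatic if $|t|\le 1$). The paper also tacitly needs this assumption for the Laurent coefficients of $\mathcal J_t^{-1}$, so you should state it.
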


\begin{proof}
From \eqref{eq:Kt-diff} and \eqref{eq:rank-sum},
\[
\Bigl(z\partial_z+w\partial_w\Bigr)\mathcal K_t(z,w)
=\frac{\sqrt{zw}}{z-w}\frac{\mathcal J_t(z)}{\mathcal J_t(w)}
\sum_{\nu=1}^{2M+N}\Phi_\nu(z)\,\Psi_\nu(w),
\]
where each $\Phi_\nu$ depends only on $z$ and $\Psi_\nu$ only on $w$.
Absorbing the prefactor into $F_\nu(z):=\sqrt{z}\,\mathcal J_t(z)\,\Phi_\nu(z)$ and
$G_\nu(w):=\frac{\sqrt{w}\,}{\mathcal J_t(w)}\,\Psi_\nu(w)$ yields the stated decomposition.
Coefficient extraction in $z,w$ gives the rank $2M+N$ representation of $(i-j)\mathcal K_t(i,j)$.
\end{proof}

\section{$t$-Plancherel family via Poissonisation}\label{sect:t-Plancherel-measure}

In this section, we provide a canonical one-parameter family of specialisations for the
$t$-Schur measure, obtained by turning on only the
first power sum. Throughout the probabilistic statements, we assume $t\le0$ so that
all weights are nonnegative, and the analytic identities below extend verbatim to $t<1$.

\subsection{$t$-Plancherel measures}

Fix $t<1$ and specialise the power sums to
\[
  p_k(x)=a\,\delta_{k1},\qquad p_k(y)=b\,\delta_{k1}\qquad (a,b>0).
\]
By~\eqref{eq:HxHy} we have
\[
  H_{x,t}(z)=\exp\Big(\sum_{n\ge1}\tfrac{1-t^n}{n}p_n(x)z^n\Big)
            =\exp\big((1-t)a\,z\big),\qquad
  H_y(z)=\exp\big(b\,z\big).
\]
Consequently $h^{(t)}_n(x)=((1-t)a)^n/n!$ and $h_n(y)=b^n/n!$, hence, by Theorem~\ref{thm:Jacobi-Trudi},
\[
  S_\lambda(x;t)=\frac{\big((1-t)a\big)^{|\lambda|}}{|\lambda|!}\,f^\lambda,\qquad
  s_\lambda(y)=\frac{b^{|\lambda|}}{|\lambda|!}\,f^\lambda,
\]
where $f^\lambda$ denotes the number of standard Young tableaux of shape $\lambda$.
Moreover, by Theorem~\ref{thm:t-Cauchy-VO},
\[
  \log Z_t(x,y)=\sum_{n\ge1}\frac{1-t^n}{n}p_n(x)p_n(y)=(1-t)\,ab,
  \qquad Z_t(x,y)=e^{(1-t)ab}.
\]

\begin{definition}[(Poissonised) $t$-Plancherel measure]\label{def:tPl}
For $a,b>0$ and $t<1$,
\[
  \mathbb P_{t;a,b}(\{\lambda\})
  :=\frac{1}{Z_t(x,y)}\,S_\lambda(x;t)\,s_\lambda(y)
  =e^{-(1-t)ab}\,\frac{\big((1-t)ab\big)^{|\lambda|}}{(|\lambda|!)^2}\,\bigl(f^\lambda\bigr)^{2}.
\]
\end{definition}

\begin{proposition}\label{prop:cond-Pl}
For every $N\ge0$,
\[
  \mathbb P_{t;a,b}\big(\,\cdot\,\bigm|\,|\lambda|=N\big)
  \;=\;
  \mathbb P^{\mathrm{Pl}}_N(\lambda)=\frac{(f^\lambda)^2}{N!}.
\]
\end{proposition}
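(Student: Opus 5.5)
The plan is a direct computation from Definition~\ref{def:tPl}. Writing $\theta:=(1-t)ab$, the definition gives
\[
\mathbb P_{t;a,b}(\{\lambda\})=e^{-\theta}\,\frac{\theta^{|\lambda|}}{(|\lambda|!)^2}\,(f^\lambda)^2 .
\]
For fixed $N$, the factor $e^{-\theta}\theta^{N}/(N!)^2$ is the same for every $\lambda\vdash N$. So the conditional law on $\{|\lambda|=N\}$ is proportional to $(f^\lambda)^2$, and everything reduces to computing the normaliser.

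First I will compute the marginal law of $|\lambda|$. Summing over $\lambda\vdash N$ and using the classical identity
\[
\sum_{\lambda\vdash N}(f^\lambda)^2=N!
\]
(the RSK bijection between $S_N$ and pairs of standard tableaux of equal shape) gives
\[
\mathbb P_{t;a,b}(|\lambda|=N)=e^{-\theta}\frac{\theta^N}{N!}.
\]
Thus $|\lambda|$ is Poisson with parameter $\theta$. This is positive because $t<1$ and $a,b>0$, so the conditioning event has positive probability.

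If I prefer to stay inside the paper's framework, I can obtain the same marginal from Theorem~\ref{thm:t-Cauchy-VO}. Replace $(a,b)$ by $(as,b)$, so that the degree-$N$ part scales as $s^N$. Cauchy then gives $\sum_\lambda S_\lambda s_\lambda=e^{(1-t)abs}$. Extracting the coefficient of $s^N$ yields $\sum_{\lambda\vdash N}\theta^N (f^\lambda)^2/(N!)^2=\theta^N/N!$, which is the same identity.

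Dividing the joint law by this marginal, the prefactor $e^{-\theta}\theta^N$ cancels, leaving
\[
\mathbb P_{t;a,b}(\lambda\mid |\lambda|=N)=\frac{(f^\lambda)^2/(N!)^2}{1/N!}=\frac{(f^\lambda)^2}{N!}.
\]
This is the Plancherel measure, and it is independent of $t$, $a$ and $b$.

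The main obstacle is minor: justifying $\sum_{\lambda\vdash N}(f^\lambda)^2=N!$. I would handle it either by citing RSK, or self-containedly by the Cauchy coefficient extraction above. For the latter I would also need the Plancherel specialisation formula $s_\lambda=b^{|\lambda|}f^\lambda/|\lambda|!$. The paper stated that formula via Theorem~\ref{thm:Jacobi-Trudi} and the Jacobi–Trudi determinant in $h_n=b^n/n!$, so I take it as given.
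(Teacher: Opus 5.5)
Your proposal is correct and follows the paper's proof: both observe that on $\{|\lambda|=N\}$ the weight is the $\lambda$-independent constant $e^{-\theta}\theta^N/(N!)^2$ times $(f^\lambda)^2$, and then normalise using $\sum_{\lambda\vdash N}(f^\lambda)^2=N!$. Your optional Cauchy coefficient-extraction argument is just a self-contained justification of that same identity.
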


\begin{proof}
For $|\lambda|=N$,
\(
  \mathbb P_{t;a,b}(\{\lambda\})
  =\mathrm{const}(N)\cdot\frac{(f^\lambda)^2}{(N!)^2}
\)
with
\(
  \mathrm{const}(N)=e^{-(1-t)ab}\big((1-t)ab\big)^N.
\)
Normalise using $\sum_{|\mu|=N}(f^\mu)^2=N!$.
\end{proof}

By Theorem~\ref{thm:GF}, the present specialisation yields
\[
  \mathcal J_t(z)=\frac{H_{x,t}(z)}{H_y(z^{-1})}
  =\exp\big((1-t)a\,z-b\,z^{-1}\big),\qquad
  \mathcal K_t(z,w)=\frac{\sqrt{zw}}{z-w}\,\frac{\mathcal J_t(z)}{\mathcal J_t(w)}.
\]

Using the standard specialisation techniques in~\cite{BOO,Joh}, we have the following.
\begin{proposition}\label{prop:bessel-form}
After the standard discrete Hankel transform on $\mathbb Z+\tfrac12$,
$\mathcal K_t$ is unitarily equivalent to the discrete Bessel kernel with parameter
\[
  \kappa=(1-t)\,ab\,.
\]
\end{proposition}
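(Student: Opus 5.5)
The strategy is to make the kernel generating function of Theorem~\ref{thm:GF} exactly of Borodin--Okounkov--Olshanski type. We do this by a rescaling $z\mapsto cz$ together with a diagonal gauge transformation, both of which leave correlation determinants unchanged. After that we identify the coefficients via the Bessel generating function.

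Put $\alpha=(1-t)a$ and $\beta=b$, so that
\[
\mathcal J_t(z)=\exp(\alpha z-\beta z^{-1}).
\]
Choose $c=\sqrt{\beta/\alpha}$ (positive for $t<1$) and set $\theta=\sqrt{\alpha\beta}$, so that $\theta^2=(1-t)ab=\kappa$. Then
\[
\mathcal J_t(cz)=\exp\bigl(\theta(z-z^{-1})\bigr).
\]
Substituting $z\mapsto cz$ and $w\mapsto cw$ in Theorem~\ref{thm:GF}, the prefactor $\sqrt{zw}/(z-w)$ is homogeneous of degree $0$, so it is unchanged. The coefficients change as
\[
\mathcal K_t(i,j)\mapsto c^{\,i-j}\,\mathcal K_t(i,j).
\]
This is conjugation by the diagonal matrix $\mathrm{diag}(c^{i})$, so every minor $\det[\mathcal K_t(x_i,x_j)]$ is preserved. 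Hence, up to this gauge, $\mathcal K_t$ coincides with the kernel whose generating function is
\[
\frac{\sqrt{zw}}{z-w}\,\frac{e^{\theta(z-z^{-1})}}{e^{\theta(w-w^{-1})}}.
\]
This is precisely the Poissonised Plancherel kernel of \cite{BOO} with parameter $\theta^2$. It is also the $t=0$ case of our construction with $ab$ replaced by $\kappa$.

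Next I would make the kernel explicit. Using $e^{\theta(z-z^{-1})}=\sum_n J_n(2\theta)z^n$ and its inverse $\sum_m(-1)^mJ_m(2\theta)z^m$, Corollary~\ref{cor:coef} gives
\[
\mathcal K(i,j)=\sum_{k\in\mathbb Z_{\ge0}+\frac12}J_{i+k}(2\theta)\,J_{j+k}(2\theta)
\]
after the sign bookkeeping, which uses $J_{-n}=(-1)^nJ_n$. This is the standard series form of the discrete Bessel kernel. The Christoffel--Darboux-type identity (the $t=0$ instance of Proposition~\ref{prop:Kt-integrable}, or equivalently Corollary~\ref{cor:d1} integrated in $\theta$) then converts it into the closed form
\[
\theta\,\frac{J_{i-\frac12}J_{j+\frac12}-J_{i+\frac12}J_{j-\frac12}}{i-j}
\]
with argument $2\theta$. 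This is the discrete Bessel kernel with parameter $\kappa=\theta^2$.

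The main obstacle is bookkeeping rather than substance. One must match our particle-hole and sign conventions on $\mathbb Z+\tfrac12$, namely the coordinates $\lambda_i-i+\tfrac12$, against those of \cite{BOO}. Their kernel is written in terms of $J_{x\pm 1/2}$ and may differ from ours by a reflection $x\mapsto -x$ or by factors $(-1)^{i}$. Such differences are again diagonal conjugations or unitary relabellings, so they do not affect the statement. I would verify the matching by comparing the $t=0$, $ab=\kappa$ case directly with \cite[Thm.~2]{BOO}. I would then note that $t$ enters only through $\alpha=(1-t)a$, so the general case is the $t=0$ case with $a$ replaced by $(1-t)a$. This is consistent with Definition~\ref{def:tPl}, in which the measure depends only on $(1-t)ab$.
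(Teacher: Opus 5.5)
Your argument is correct, and it supplies more than the paper does. The paper gives no proof of Proposition~\ref{prop:bessel-form}. It only appeals to ``standard specialisation techniques'' of \cite{BOO,Joh}, together with the later remark that in the balanced normalisation $(1-t)a=b$ one has $\mathcal J_t(z)=e^{\sqrt{\xi}(z-z^{-1})}$ exactly. Your rescaling $z\mapsto cz$, $w\mapsto cw$ with $c=\sqrt{b/((1-t)a)}$ reduces the general case to the balanced one. It gives the precise identity $c^{\,i-j}\mathcal K_t(i,j)=\sum_{k\in\mathbb Z_{\ge0}+\frac12}J_{i+k}(2\sqrt\kappa)\,J_{j+k}(2\sqrt\kappa)$. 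The right-hand side is already the discrete Bessel kernel of \cite{BOO,O} for the same coordinates $\lambda_i-i+\tfrac12$, so your closing hedge about reflections and signs is unnecessary. As a by-product, for $a=b=\sqrt\xi$ your $c$ coincides with the double saddle $z_0=(1-t)^{-1/2}$ used in the proof of Theorem~\ref{thm:pois-t-schur-airy}, where $\mathcal J_t(z_0)=1$.

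You should, however, state the conclusion more carefully. The gauge $\mathrm{diag}(c^{i})$ is a similarity, not a unitary, unless $c=1$, and no unitary can do better. Indeed $\mathcal K_t(i,j)=c^{\,j-i}K(i,j)$ with $K$ symmetric, and for example $K(\tfrac12,-\tfrac12)=\sqrt\kappa\,\bigl(J_0^2+J_1^2\bigr)\neq0$ (argument $2\sqrt\kappa$). Hence $\mathcal K_t$ is not symmetric when $(1-t)a\neq b$, and so it is not unitarily equivalent to the discrete Bessel kernel in the literal sense. What your proof establishes is equivalence under positive diagonal conjugation. That is also all that is used downstream: minors, and Fredholm determinants on $[u,\infty)$. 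Say this explicitly rather than asserting that diagonal conjugations ``do not affect the statement''; literal unitary equivalence holds exactly in the balanced case $c=1$.

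Two citations in your Christoffel--Darboux step need fixing:
\begin{itemize}
\item Proposition~\ref{prop:Kt-integrable} is stated only for finite specialisations, so it does not directly cover the exponential specialisation here.
\item Integrating Corollary~\ref{cor:d1} in $\theta$ produces an integral representation, not the closed form, so it is not an equivalent route.
\end{itemize}
Derive the closed form directly from \eqref{eq:Kt-diff} instead. Since $z\partial_z\log\mathcal J(z)=\theta(z+z^{-1})$, you get $(z\partial_z+w\partial_w)\mathcal K(z,w)=\theta\bigl(\sqrt{zw}-(zw)^{-1/2}\bigr)\mathcal J(z)/\mathcal J(w)$. Extracting the coefficient of $z^{i}w^{-j}$ then gives your formula $(i-j)K(i,j)=\theta\bigl(J_{i-\frac12}J_{j+\frac12}-J_{i+\frac12}J_{j-\frac12}\bigr)$.
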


\begin{corollary}
Let $\kappa=(1-t)ab$. Then
\[
  \frac{\lambda_1-2\sqrt{\kappa}}{\kappa^{1/6}}
  \ \Rightarrow\ \mathrm{TW}_{2},
\]
and the multi-point edge process converges to the Airy determinantal point process.
\end{corollary}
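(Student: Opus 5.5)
The plan is to reduce everything to the classical Poissonised Plancherel measure and then invoke the Baik--Deift--Johansson / Borodin--Okounkov--Olshanski edge results. By Definition~\ref{def:tPl}, $\mathbb P_{t;a,b}(\{\lambda\})=e^{-\kappa}\kappa^{|\lambda|}(f^\lambda)^2/(|\lambda|!)^2$ with $\kappa=(1-t)ab$. This is exactly the Poissonised Plancherel measure with parameter $\theta=\kappa$, i.e. a Poisson$(\kappa)$ mixture of the measures $\mathbb P^{\mathrm{Pl}}_N$ of Proposition~\ref{prop:cond-Pl}. So the law of $\lambda$ under $\mathbb P_{t;a,b}$ depends on $(t,a,b)$ only through $\kappa$.

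On the kernel side, Proposition~\ref{prop:bessel-form} identifies $\mathcal K_t$, up to a conjugation that leaves all correlation determinants of Theorem~\ref{thm:main} unchanged, with the discrete Bessel kernel of parameter $\kappa$. One can see this concretely: with $\mathcal J_t(z)=\exp((1-t)az-bz^{-1})$, rescale $z\mapsto z\sqrt{b/((1-t)a)}$. This turns $\mathcal J_t$ into $\exp(\sqrt\kappa(z-z^{-1}))$ times a gauge factor $c^{i-j}$ on $\mathcal K_t(i,j)$. The gauge factor cancels in every determinant $\det[\mathcal K_t(x_i,x_j)]$. By Corollary~\ref{cor:coef}, the coefficients $\mathcal J_{t,n}$ become the Bessel functions $J_n(2\sqrt\kappa)$, giving the standard discrete Bessel kernel $\sqrt\kappa\,\frac{J_{i-\frac12}J_{j+\frac12}-J_{i+\frac12}J_{j-\frac12}}{i-j}$ with argument $2\sqrt\kappa$.

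Next, I would use the gap probability $\mathbb P(\lambda_1\le h)=\det(1-\mathcal K_t)_{\ell^2\{h+\frac12,h+\frac32,\dots\}}$. This holds because $\lambda_1-\frac12=\max\mathfrak S(\lambda)$, and determinantal structure turns the probability of no particles above $h$ into a Fredholm determinant. Setting $h=2\sqrt\kappa+s\kappa^{1/6}$, I would apply the uniform asymptotics of $J_{2\sqrt\kappa+x\kappa^{1/6}}(2\sqrt\kappa)\approx\kappa^{-1/6}\mathrm{Ai}(x)$ (Plancherel--Rotach type). This gives $\kappa^{1/6}K_{\mathrm{Bessel}}(2\sqrt\kappa+x\kappa^{1/6},2\sqrt\kappa+y\kappa^{1/6})\to K_{\mathrm{Airy}}(x,y)$.

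The main obstacle is upgrading pointwise kernel convergence to convergence of Fredholm determinants, and likewise of multi-point gap probabilities on finite unions of intervals. This needs trace-class control: I would cite the exponential tail bounds on Bessel functions used in \cite{BOO,Joh}, which dominate the kernel on $[s,\infty)$ uniformly in $\kappa$. That gives $\det(1-K_{\mathrm{Bessel}})\to\det(1-K_{\mathrm{Airy}})_{L^2(s,\infty)}=F_2(s)$. The same estimates give convergence of all rescaled correlation functions to Airy determinants, hence convergence of the edge process to the Airy point process. Since the law of $\lambda$ depends only on $\kappa$, the statement could even be quoted directly from \cite{BOO}. I would still present the kernel route above so the argument stays within the vertex-operator framework.
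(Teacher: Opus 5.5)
Your argument is correct, but it takes a different route from the paper.

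\textbf{The paper's route.} It feeds Proposition~\ref{prop:bessel-form} into its own double-contour steepest-descent analysis (Theorem~\ref{thm:pois-t-schur-airy}). It keeps the $t$-dependent phase $(1-t)\sqrt{\xi}\,z-\sqrt{\xi}\,z^{-1}-u\log z$ and locates the double saddle at $z_0=(1-t)^{-1/2}$. It then conjugates by a diagonal weight and expands to the cubic Airy phase. For the Fredholm step it cites the vanishing of the mixed-sign blocks (Proposition~\ref{lem:three-regimes}) and \cite{BDJ}.

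\textbf{Your route.} You eliminate $t$ entirely. At the level of measures, Definition~\ref{def:tPl} already shows that $\mathbb P_{t;a,b}$ is the Poissonised Plancherel measure with parameter $\kappa$, so the corollary is literally the theorem of \cite{BOO}. At the level of kernels, your substitution $z\mapsto z\sqrt{b/((1-t)a)}$ is an explicit proof of Proposition~\ref{prop:bessel-form}; the scaling constant is exactly the paper's $z_0$ when $a=b$. After that, you need only the classical Bessel asymptotics $\kappa^{1/6}J_{2\sqrt{\kappa}+x\kappa^{1/6}}(2\sqrt{\kappa})\to\mathrm{Ai}(x)$ and the tail bounds of \cite{BOO,Joh}.

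\textbf{What your route buys.} It is shorter, and it covers arbitrary $a,b>0$ directly, whereas the paper's steepest-descent theorem is written only for $a=b=\sqrt{\xi}$. It treats the gauge factor as something that cancels in every correlation determinant and in the Fredholm expansion. This matters: the unconjugated $\kappa^{1/6}\mathcal K_t(u,v)$ carries a factor $z_0^{-(u-v)}$ and does not itself converge to $K_{\mathrm{Airy}}(x,y)$ for $x\neq y$ when $t\neq0$. It also puts the genuine analytic work where it belongs, namely trace-norm control of the kernel on $[s,\infty)$, rather than in the vanishing of mixed-sign blocks.

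\textbf{What the paper's route buys.} Uniformity of method: the same direct saddle-point computation on $\frac{\sqrt{zw}}{z-w}\frac{\mathcal J_t(z)}{\mathcal J_t(w)}$ is reused for the rectangular specialisation (Lemma~\ref{lem:alpha-three-regimes}). There no exact reduction to the $t=0$ Bessel kernel is available, whereas your reduction is specific to the Plancherel specialisation.
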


\begin{proof}
Combining the Bessel form in Proposition~\ref{prop:bessel-form} with the steepest-descent
limit of Theorem~\ref{thm:pois-t-schur-airy} (See~\cite{Joh2,TW} for more details on the steepest-descent method). 
The latter yields
\[
\kappa^{1/6}\,\mathcal K_t\big(2\sqrt{\kappa}+x\kappa^{1/6},
\,2\sqrt{\kappa}+y\kappa^{1/6}\big)\ \longrightarrow\ K_{\mathrm{Airy}}(x,y),
\]
while the two mixed-sign blocks of Proposition~\ref{lem:three-regimes} vanish. Fredholm determinants hence converge to $F_2$,
and the edge process to the Airy ensemble.
\end{proof}

\begin{remark}
(i) \textit{Unbalanced normalisation.} If $a=b=\sqrt{\xi}$, then $\kappa=(1-t)\xi$ and the
$t$-dependence appears simply as a rescaling of the Poisson parameter.
(ii) \textit{Balanced normalisation.} If $a=\sqrt{\xi}/(1-t)$ and $b=\sqrt{\xi}$, then
$\kappa=\xi$ and $\mathcal J_t(z)=\exp\big(\sqrt{\xi}(z-z^{-1})\big)$, i.e.\ exactly the
classical Poissonised Plancherel kernel. Thus, the $t$-dependence is absorbed by the choice
of specialisation.
\end{remark}

\subsection{$t$-$z$-measures and their Poissonian limit}

The classical $z$-measures, studied in~\cite{BO,BOO,KOV}, can be obtained from the Schur measure under the
exponential specialisation $\mathfrak t_k=\mathfrak t'_k=\xi^{k/2}/k$ with parameters $z,z'$ entering via
$s_\lambda(1^z)$ and $s_\lambda(1^{z'})$. We now define their $t$-analogue in our $t$-Schur setting, using the same notational convention $1^z:\ p_k\mapsto z$.

\begin{definition}\label{def:tz-measure}
For $z,z'\in\mathbb C$ and $0\le\xi<1$, define the \textbf{$t$-$z$-measure}
\begin{equation}\label{eq:tz-def}
  \mathcal M^{(t)}_{z,z',\xi}(\{\lambda\})
  \;:=\;\frac{1}{Z^{(t)}_{z,z'}(\xi)}\,
         \xi^{|\lambda|}\,S_\lambda(1^{z};t)\,s_\lambda(1^{z'}),
\end{equation}
with normalisation, by Theorem~\ref{thm:t-Cauchy-VO},
\begin{equation}\label{eq:tz-norm}
  Z^{(t)}_{z,z'}(\xi)
  \;=\;\sum_{\lambda}\xi^{|\lambda|}S_\lambda(1^{z};t)\,s_\lambda(1^{z'})
  \;=\;\left(\frac{1-t\xi}{1-\xi}\right)^{\!zz'}.
\end{equation}
\end{definition}


\begin{remark}
(i) At $t=0$, one recovers the classical $z$-measure:
$\mathcal M^{(0)}_{z,z',\xi}(\lambda)=(1-\xi)^{zz'}\xi^{|\lambda|}s_\lambda(1^z)s_\lambda(1^{z'})$.
(ii) For $t\le0$ and $z,z'\in\mathbb N$, $S_\lambda(1^z;t)$ admits a positive marked-tableaux
expansion and $s_\lambda(1^{z'})\ge0$, so $\mathcal M^{(t)}_{z,z',\xi}$ is a probability measure.
\end{remark}

The $t$-$z$-measures admit the same representation as in Section~\ref{sect:corr-fcn}.
With Miwa times
\[
  \mathfrak t_k=\frac{z\,\xi^{k/2}}{k},\qquad \mathfrak t'_k=\frac{z'\,\xi^{k/2}}{k},
\]
the multiplier in the kernel in Theorem~\ref{thm:GF} is
\begin{equation}\label{eq:Jtz}
  \mathcal J^{(t)}_{z,z',\xi}(u)
  =\exp\Big(\sum_{k\ge1}(1-t^k)\mathfrak t_k\,u^k-\sum_{k\ge1}\mathfrak t'_k\,u^{-k}\Big)
  =\left(\frac{1-t\sqrt{\xi}\,u}{1-\sqrt{\xi}\,u}\right)^{z}\,
   \left(1-\dfrac{\sqrt{\xi}}{u}\right)^{z'}.
\end{equation}
Therefore, the correlation kernel has the integrable form
\begin{equation}\label{eq:Ktz}
  \mathcal K^{(t)}_{z,z',\xi}(u,v)
  =\frac{\sqrt{uv}}{u-v}\,\frac{\mathcal J^{(t)}_{z,z',\xi}(u)}
                               {\mathcal J^{(t)}_{z,z',\xi}(v)}.
\end{equation}

\begin{proposition}[Poissonian limit to $t$-Plancherel]\label{prop:tz-poisson-limit}
Assume $z,z'\to\infty$ with $z/z'\to1$, and set $\xi=\kappa/(zz')$ with $\kappa>0$ fixed.
Then
\[
  \mathcal M^{(t)}_{z,z',\xi}\ \Longrightarrow\ \mathbb P_{t;\sqrt{\kappa},\sqrt{\kappa}},
\]
and at the level of kernels,
\[
  \mathcal K^{(t)}_{z,z',\xi}(u,v)\ \longrightarrow\
  \frac{\sqrt{uv}}{u-v}\exp\big((1-t)\sqrt{\kappa}\,(u-v)-\sqrt{\kappa}\,(u^{-1}-v^{-1})\big),
\]
i.e.\ the $t$-Plancherel Bessel kernel with parameter $(1-t)\,\sqrt{\kappa}\cdot\sqrt{\kappa}$.
\end{proposition}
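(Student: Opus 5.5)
The plan is to work entirely at the level of the Miwa times $\mathfrak t_k=z\xi^{k/2}/k$ and $\mathfrak t'_k=z'\xi^{k/2}/k$. Equivalently, we use the specialisation $p_k(x)=z\xi^{k/2}$ and $p_k(y)=z'\xi^{k/2}$, into which the factor $\xi^{|\lambda|}$ of \eqref{eq:tz-def} is absorbed by homogeneity. We then show that this specialisation converges to the $t$-Plancherel specialisation $p_k(x)=p_k(y)=\sqrt{\kappa}\,\delta_{k1}$ of Definition~\ref{def:tPl}.

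With $\xi=\kappa/(zz')$, the first power sums are
\[
p_1(x)=z\sqrt\xi=\sqrt{\kappa z/z'}\to\sqrt\kappa,\qquad p_1(y)=\sqrt{\kappa z'/z}\to\sqrt\kappa .
\]
For $k\ge2$ we have $|p_k(x)|=|z|\,\xi^{k/2}\le |z|\xi=\kappa/|z'|\to0$, and similarly $p_k(y)\to0$. This bound uses $\xi\le1$, which holds eventually.

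The first step is pointwise convergence of the measures. For fixed $\lambda$, Theorem~\ref{thm:Jacobi-Trudi} shows that $S_\lambda(x;t)$ is a polynomial in $p_1(x),\dots,p_{|\lambda|}(x)$ with coefficients depending on $t$, and likewise $s_\lambda(y)$. Hence
\[
\xi^{|\lambda|}S_\lambda(1^z;t)s_\lambda(1^{z'})\to \bigl((1-t)\sqrt\kappa\bigr)^{|\lambda|}\sqrt\kappa^{\,|\lambda|}\,(f^\lambda)^2/(|\lambda|!)^2 .
\]
For the normalisation \eqref{eq:tz-norm}, we expand
\[
zz'\bigl[\log(1-t\xi)-\log(1-\xi)\bigr]=zz'(1-t)\xi+O(zz'\xi^2)=(1-t)\kappa+O(\kappa^2/(zz'))\to(1-t)\kappa .
\]
Therefore $\mathcal M^{(t)}_{z,z',\xi}(\{\lambda\})\to\mathbb P_{t;\sqrt\kappa,\sqrt\kappa}(\{\lambda\})$ for every $\lambda$. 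For $t\le0$ and the parameters in the positive regime, all terms are probability measures on the countable set $\mathbb Y$. Pointwise convergence of probability mass functions whose limit has total mass one then gives total-variation convergence by Scheffé's lemma, and hence weak convergence.

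The second step is the kernel. Taking logarithms in \eqref{eq:Jtz} gives
\[
\log\mathcal J^{(t)}_{z,z',\xi}(u)=z\bigl[\log(1-t\sqrt\xi u)-\log(1-\sqrt\xi u)\bigr]+z'\log(1-\sqrt\xi/u).
\]
Expanding to first order, this equals $(1-t)z\sqrt\xi\,u-z'\sqrt\xi\,u^{-1}+O\bigl(z\xi|u|^2+z'\xi|u|^{-2}\bigr)$. Since $z\sqrt\xi,z'\sqrt\xi\to\sqrt\kappa$ and $z\xi,z'\xi\to0$, we get
\[
\mathcal J^{(t)}_{z,z',\xi}(u)\to\exp\bigl((1-t)\sqrt\kappa\,u-\sqrt\kappa\,u^{-1}\bigr),
\]
uniformly for $u$ in compact subsets of $\mathbb C^\*$. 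The same holds for $1/\mathcal J^{(t)}_{z,z',\xi}$. Plugging this into \eqref{eq:Ktz} gives the stated limiting kernel, which is exactly the kernel of Section~\ref{sect:t-Plancherel-measure} with $a=b=\sqrt\kappa$, i.e. with Bessel parameter $(1-t)\kappa$.

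To pass to the matrix entries $\mathcal K_t(i,j)$, we write them as double contour integrals over circles $|w|=r_1<|z|=r_2$ (Theorem~\ref{thm:GF}, Corollary~\ref{cor:coef}). Locally uniform convergence of the integrand then gives entrywise convergence, and hence convergence of all correlation functions via Theorem~\ref{thm:main}.

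The main obstacle is the analytic justification in the kernel step. The formal expansion of Theorem~\ref{thm:GF} has to be realised as a genuine contour integral, which needs a region $|w|<|z|$ lying inside the annulus $\sqrt\xi<|u|<1/\sqrt\xi$ where $\mathcal J^{(t)}$ is analytic. This is harmless because $\xi\to0$: any fixed pair of radii works eventually. The uniform error bounds on the logarithms then follow from $|\log(1-\epsilon)+\epsilon|\le|\epsilon|^2$ for small $\epsilon$. The same convergence also holds more generally for complex $z,z'$, as long as one only claims convergence of the (signed) weights rather than of probability measures.
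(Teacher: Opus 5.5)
Your proposal is correct and follows essentially the same route as the paper: expand $\log\mathcal J^{(t)}_{z,z',\xi}$ to first order using $z\sqrt\xi,\,z'\sqrt\xi\to\sqrt\kappa$, and check that $\log Z^{(t)}_{z,z'}(\xi)=zz'\log\frac{1-t\xi}{1-\xi}\to(1-t)\kappa$. You go further than the paper on three points, and there is one harmless slip. First, you prove measure-level convergence explicitly (pointwise convergence of the weights via their polynomial dependence on $p_1,\dots,p_{|\lambda|}$, then Scheff\'e), which the paper omits. Second, you justify the passage from the generating function to the entries $\mathcal K_t(i,j)$ by a contour-integral argument. Third, your error term $O(z\xi+z'\xi)=O(\kappa/z'+\kappa/z)$ is the correct one, where the paper writes $O(\xi)$. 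The slip is that for $|t|>1$ the annulus where $\mathcal J^{\pm1}$ is analytic is $\sqrt\xi<|u|<1/(|t|\sqrt\xi)$ rather than $\sqrt\xi<|u|<1/\sqrt\xi$; this changes nothing, since $\xi\to0$.
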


\begin{proof}
From \eqref{eq:Jtz},
\begin{align*}
  \log\mathcal J^{(t)}_{z,z',\xi}(u)
  &= z\sum_{k\ge1}\frac{1-t^k}{k}\,(\sqrt{\xi}u)^k
   -z'\sum_{k\ge1}\frac{1}{k}\,(\sqrt{\xi}u^{-1})^k\\
  &= z(1-t)\sqrt{\xi}\,u - z'\sqrt{\xi}\,u^{-1} + O(\xi).
\end{align*}
Under $z\sqrt{\xi}\to\sqrt{\kappa}$ and $z'\sqrt{\xi}\to\sqrt{\kappa}$, which is ensured by $z/z'\to1$,
we obtain
$$\mathcal J^{(t)}_{z,z',\xi}(u)\to\exp\big((1-t)\sqrt{\kappa}\,u-\sqrt{\kappa}\,u^{-1}\big),$$
which is the multiplier of the $t$-Plancherel kernel with $a=b=\sqrt{\kappa}$.

Finally,
$$\log Z^{(t)}_{z,z'}(\xi)=zz'\log\frac{1-t\xi}{1-\xi}=(1-t)\kappa+o(1)$$ yields the convergence of normalisation.
\end{proof}

\section{From LIS to $t$-ascent pairs for permutations}\label{sect:t-ascent-pair}

We now return from the kernel-level analysis to a purely combinatorial model.
Starting from the classical LIS statistic and the classical RSK correspondence for Schur functions,
we introduce a $ t$-deformation: the \textbf{$t$-ascent pair}, which is linked to the generalised $\mathcal A$-RSK correspondence\footnote{Throughout the probabilistic parts, we take $t\le 0$ (so that all weights are nonnegative), while purely combinatorial statements do not require this restriction.}.

Let $\pi\in S_N$ be a permutation. A subsequence
$\pi(i_1),\dots,\pi(i_m)$ with $i_1<\cdots<i_m$ is increasing if
$\pi(i_1)<\cdots<\pi(i_m)$. The longest increasing subsequence (LIS) length is
\[
  \mathrm{LIS}(\pi)=\max\{\,m:\ \exists\ i_1<\cdots<i_m,\ \pi(i_1)<\cdots<\pi(i_m)\,\}.
\]
Applying the ordinary RSK to the biword
$\binom{1\ \, 2\ \cdots\ N}{\pi(1)\ \pi(2)\ \cdots\ \pi(N)}$
produces a pair $(P,Q)$ of \textit{standard Young tableaux} with a common
shape $\lambda\vdash N$. Schensted's theorem~\cite{Sch} yields
\(
  \mathrm{LIS}(\pi)=\lambda_1.
\)
Consequently, under the uniform measure on $S_N$ the shape law is
\(
  \mathbb P(\mathrm{shape}=\lambda)=\frac{(f^\lambda)^2}{N!}
\),
and
\(
  \mathbb P(\mathrm{LIS}\le h)=\mathbb P^{\mathrm{Pl}}_N(\lambda_1\le h).
\)

Let $\mathcal A=\{1'<1<2'<2<\cdots\}$ be the marked alphabet with total order
$1'<1<2'<2<\dots$. A word $\alpha=\alpha_1\cdots\alpha_N$ over $\mathcal A$
is weakly increasing if
$\alpha_{i_1}\le\cdots\le\alpha_{i_m}$ with respect to this total order.

\begin{definition}\label{def:t-ascent-interleave}
A \textbf{$t$-ascent pair} for $(\pi,\varepsilon)$ is a pair of index sets
$I'\subset\{i:\varepsilon_i=1\}$ and $I\subset\{i:\varepsilon_i=0\}$ with
$i_1'<\cdots<i_{r}'$ and $i_1<\cdots<i_s$ such that
$\pi(i_1')<\cdots<\pi(i_r')$ and $\pi(i_1)<\cdots<\pi(i_s)$, and moreover the
merged sequence $(\alpha_j)_{j\in I'\cup I}$ read in the increasing order of indices
is weakly increasing in $\mathcal A$. Its length is $r+s$, and
$L^{(t)}(\pi,\varepsilon)$ is the maximal such length.
\end{definition}

\begin{lemma}\label{lem:equiv-LIS-A}
For permutations, $L^{(t)}(\pi,\varepsilon)$ equals the maximal length of a weakly
increasing subsequence of the $\mathcal A$-word $\alpha$.
\end{lemma}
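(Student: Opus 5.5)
The plan is to show that, for a permutation $\pi\in S_N$ with marking vector $\varepsilon\in\{0,1\}^N$, the $t$-ascent pairs for $(\pi,\varepsilon)$ are exactly the weakly increasing subsequences of the $\mathcal A$-word $\alpha$, with lengths preserved. Taking maxima over both sets then gives the equality in Lemma~\ref{lem:equiv-LIS-A}. Throughout, $\alpha$ is the word determined by $(\pi,\varepsilon)$: $\alpha_i=\pi(i)'$ if $\varepsilon_i=1$ and $\alpha_i=\pi(i)$ if $\varepsilon_i=0$.

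The first step records one structural fact. Since $\pi$ is a permutation, the absolute values $|\alpha_i|=\pi(i)$ are pairwise distinct. On letters with distinct absolute values, the total order $1'<1<2'<2<\cdots$ on $\mathcal A$ coincides with the order of absolute values. This is because $k'<k<(k+1)'<(k+1)$, so any letter over $k$ precedes any letter over $l$ whenever $k<l$. Hence a subsequence $(\alpha_j)_{j\in J}$ is weakly increasing in $\mathcal A$ if and only if $(\pi(j))_{j\in J}$ is strictly increasing. As a side remark, the extra constraint used in $\ell(w_A)$, namely at most one marked $k'$ for each $k$, is automatically satisfied here, so the two possible notions of ``increasing'' agree for permutation words.

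Next, take a $t$-ascent pair $(I',I)$ as in Definition~\ref{def:t-ascent-interleave}. By the last clause of that definition, the merged sequence indexed by $J=I'\cup I$ is weakly increasing in $\mathcal A$. Since $I'$ and $I$ are disjoint (they lie in different $\varepsilon$-classes), this subsequence has length $r+s$. Therefore $L^{(t)}(\pi,\varepsilon)$ is at most the maximal length of a weakly increasing subsequence of $\alpha$.

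Conversely, let $J$ index a weakly increasing subsequence of $\alpha$, and split it as $I'=\{j\in J:\varepsilon_j=1\}$ and $I=\{j\in J:\varepsilon_j=0\}$. By the structural fact, $\pi$ is strictly increasing along $J$, and hence along each of $I'$ and $I$ in increasing index order. The merged sequence is just $(\alpha_j)_{j\in J}$, which is weakly increasing by assumption. So $(I',I)$ is a $t$-ascent pair of length $|J|$, which gives the reverse inequality.

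The only point needing care is the identification of the $\mathcal A$-order with the order of absolute values when those values are distinct. This is where the permutation hypothesis is used, and it is the step I would spell out explicitly. Everything else is bookkeeping.
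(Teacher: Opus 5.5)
Your proposal is correct and follows the same route as the paper: you split a weakly increasing subsequence of $\alpha$ into its primed and unprimed letters, and conversely you note that a $t$-ascent pair is weakly increasing when merged, by definition. You also make explicit a step the paper leaves implicit: because the absolute values $\pi(i)$ are distinct, weak increase in $\mathcal A$ forces strict increase of $\pi$, which is what justifies calling the two split subsequences strictly increasing.
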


\begin{proof}
Given a weakly increasing subsequence of $\alpha$, splitting its letters by primed and unprimed
yields the required two strictly increasing numeric subsequences. Conversely, any such pair,
when merged by the original index order, is weakly increasing in $\mathcal A$ by construction.
\end{proof}

Combining Lemma~\ref{lem:equiv-LIS-A} with Theorem~\ref{thm:LIS-A} and specialising the top row to $\beta=12\cdots N$ gives:

\begin{theorem}\label{thm:t-ascent-equals-lambda1}
Let $(S,U)$ be the image of $(\beta;\alpha)$ under the generalised $\mathcal A$-RSK
with $\beta=12\cdots N$ and $\alpha$ as above. If $\mathrm{shape}(S)=\mathrm{shape}(U)=\lambda$,
then
\[
  L^{(t)}(\pi,\varepsilon)=\lambda_1.
\]
\end{theorem}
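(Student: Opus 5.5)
The plan is to obtain the statement as a direct specialisation of Theorem~\ref{thm:LIS-A}, with Lemma~\ref{lem:equiv-LIS-A} as the bridge. First I will make the encoding explicit. A pair $(\pi,\varepsilon)$, with $\pi\in S_N$ and $\varepsilon\in\{0,1\}^N$, determines the $\mathcal A$-word $\alpha=\alpha_1\cdots\alpha_N$ by $\alpha_i=\pi(i)'$ if $\varepsilon_i=1$ and $\alpha_i=\pi(i)$ if $\varepsilon_i=0$. It also determines the matrix $A\in\mathcal A_{N,N}$ with $a_{\pi(i),i}=\alpha_i$ and all other entries zero. Every row and column of $A$ contains exactly one nonzero entry, of absolute value $1$. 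So the biword $w_A$, sorted with $\beta$ primary, is exactly $\binom{1\,2\cdots N}{\alpha_1\cdots\alpha_N}$. There are no ties in $\beta$, so the secondary ordering never intervenes. The generalised RSK of $(\beta;\alpha)$ is therefore the image of $A$ under Theorem~\ref{thm:gen-rsk-A}, and $\mathrm{shape}(S)=\mathrm{shape}(U)=\lambda$.

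Next I apply Theorem~\ref{thm:LIS-A} to get $\ell(w_A)=\lambda_1$. Here $\ell(w_A)$ is the maximal length of a subsequence of $\alpha$ that is weakly increasing in $\mathcal A$ and uses each marked letter $k'$ at most once. Because $\pi$ is a permutation, each absolute value $|\alpha_i|=\pi(i)$ occurs exactly once in $\alpha$. Hence each $k'$ appears at most once in the whole word, and the extra constraint on marked letters holds automatically. So $\ell(w_A)$ is simply the maximal length of a weakly increasing subsequence of $\alpha$. By Lemma~\ref{lem:equiv-LIS-A}, this equals $L^{(t)}(\pi,\varepsilon)$, and chaining the equalities gives $L^{(t)}(\pi,\varepsilon)=\lambda_1$.

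The only step needing care is the use of Lemma~\ref{lem:equiv-LIS-A}, so I would recheck it in the permutation setting. Take a weakly increasing subsequence of $\alpha$. Its letters have distinct absolute values, so weak increase in $\mathcal A$ is strict increase, and the absolute values increase strictly. Restricting to the primed positions gives $I'$ and to the unprimed positions gives $I$, each with $\pi$ strictly increasing. Conversely, the merge condition in Definition~\ref{def:t-ascent-interleave} is exactly weak increase of the merged subsequence in $\mathcal A$. Lengths agree in both directions. I expect no real obstacle beyond verifying that the biword ordering convention matches, and the absence of ties in $\beta$ settles that.
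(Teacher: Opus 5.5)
Your proposal is correct and follows the paper's route exactly: the paper obtains the theorem in one line by combining Lemma~\ref{lem:equiv-LIS-A} with Theorem~\ref{thm:LIS-A} specialised to $\beta=12\cdots N$, and you have simply made explicit the matrix encoding, the absence of ties in $\beta$, and why the marked-letter constraint is vacuous. One small slip: the matrix entry should be $a_{\pi(i),i}=1'$ or $1$ according to $\varepsilon_i$, not $\alpha_i$, because the absolute value of an entry is a multiplicity; your next sentence, about entries of absolute value $1$, shows this is what you meant.
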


We now randomise the marks and connect the distribution of $L^{(t)}$ to a $t$-Schur measure on fixed size partition.

\begin{definition}[Random $t$-ascent model on $S_N$]\label{def:random-t-ascent}
Sample $\pi$ uniformly from $S_N$. Independently mark each position with
\[
  \mathbb P(\varepsilon_i=1)=q,\qquad \mathbb P(\varepsilon_i=0)=1-q,
\]
and set $\alpha$ from $(\pi,\varepsilon)$. We parameterise $q$ by $t\le0$ via
\(
  q/(1-q)=-t
\)
(i.e.\ $q=\frac{-t}{1-t}$), so that all weights are nonnegative.
\end{definition}

\begin{theorem}
The map $(\pi,\varepsilon)\mapsto (S,U)$ given by $\mathrm{RSK}_{\mathcal A}$ with
top row $12\cdots N$ is a bijection between marked permutations and pairs
$(S,U)$ where $U$ is a standard Young tableau and $S$ is a \emph{marked standard}
tableau (absolute values $1,\dots,N$ appear exactly once, each either primed or unprimed) of the same shape.
\end{theorem}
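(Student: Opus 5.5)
The natural route is to obtain the statement as a restriction of the generalised RSK bijection of Theorem~\ref{thm:gen-rsk-A} to a subclass of matrices. First I would encode a marked permutation $(\pi,\varepsilon)$ as a matrix $A=A(\pi,\varepsilon)\in\mathcal A_{N,N}$ with exactly one nonzero entry in each row and each column: put $a_{\pi(j),j}=\pi(j)'$-type mark, i.e.\ $a_{\pi(j),j}=1'$ if $\varepsilon_j=1$ and $a_{\pi(j),j}=1$ if $\varepsilon_j=0$, and all other entries $0$. Under the biword encoding, $A$ produces the pairs $(j,\pi(j)')$ or $(j,\pi(j))$. Since each $\beta$-value $j$ occurs exactly once, the sorting step is trivial, and $w_A=\binom{1\,2\,\cdots\,N}{\alpha_1\cdots\alpha_N}$ with $\alpha$ exactly the $\mathcal A$-word built from $(\pi,\varepsilon)$. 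Conversely, any $A\in\mathcal A_{N,N}$ whose entries all have absolute value in $\{0,1\}$ and which has exactly one nonzero entry per row and column arises this way from a unique $(\pi,\varepsilon)$. So marked permutations are in bijection with such matrices.

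Next I would apply Theorem~\ref{thm:gen-rsk-A} together with its sum statements. The condition $\sum_i|a_{ij}|=1$ for every $j$ is equivalent to $U$ containing each $j\in\{1,\dots,N\}$ exactly once. A semistandard tableau with this content is precisely a standard Young tableau. Likewise, $\sum_j|a_{ij}|=1$ for all $i$ is equivalent to $S$ containing exactly one letter of absolute value $i$ for each $i$, either $i'$ or $i$. That is exactly the definition of a marked standard tableau, and conditions (T1) and (T2) hold automatically by the theorem. The shapes agree by the theorem, so the restriction maps marked permutations injectively into the stated set of pairs.

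For surjectivity, take a pair $(S,U)$ of the stated type and let $A$ be its preimage under the bijection of Theorem~\ref{thm:gen-rsk-A}. The row- and column-sum identities force every row and column sum of $|a_{ij}|$ to equal $1$. Entries lie in $\mathcal A\cup\{0\}$ with $|a_{ij}|\ge 0$ integral, so each row and column has a single nonzero entry of absolute value $1$, i.e.\ it is $1$ or $1'$. Hence $A=A(\pi,\varepsilon)$ for some marked permutation. As a consistency check, $\mathrm{mark}(S)=\mathrm{mark}(A)=\#\{i:\varepsilon_i=1\}$.

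The main obstacle is bookkeeping rather than mathematics. One must confirm that the marking convention attaches the mark to the inserted letter $\alpha$, which is the row index $\pi(j)$, and hence that the primes in $S$ correspond to marked positions as in Definition~\ref{def:random-t-ascent}. One must also confirm that the weak versus strict bumping rule in the insertion is the same one used in Theorem~\ref{thm:gen-rsk-A}, so that no separate verification of (T2) for standard content is needed.
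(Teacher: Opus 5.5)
Your proposal is correct and takes the same route as the paper: it restricts the bijection of Theorem~\ref{thm:gen-rsk-A} to matrices whose row and column sums of $|a_{ij}|$ are all $1$, and uses the content identities to show the image is exactly the pairs with $U$ standard and $S$ marked standard. Your version is more explicit than the paper's one-line argument, especially on surjectivity; the only point worth adding there is that a marked standard $S$ satisfies \textbf{(T2)} vacuously (each absolute value occurs once), so it lies in the codomain of Theorem~\ref{thm:gen-rsk-A} and has a preimage.
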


\begin{proof}
Invertibility of $\mathrm{RSK}_{\mathcal A}$ is given by Theorem~\ref{thm:gen-rsk-A}.
With the top row sorted, the recording tableau $U$ is standard and the inverse insertion
recovers $\alpha$, hence $(\pi,\varepsilon)$ uniquely.
\end{proof}

\begin{proposition}\label{prop:shape-law-fixed-N}
Let $\lambda\vdash N$. Under the model in Definition~\ref{def:random-t-ascent},
\[
  \mathbb P\big(\mathrm{shape}(\pi,\varepsilon)=\lambda\big)\;=\;\frac{(f^\lambda)^2}{N!}
\]
which is the Plancherel measure independent of $t$.
Equivalently, writing $T_\lambda(t):=\sum_{S}(-t)^{\mathrm{mark}(S)}$ and
$Z_{N,t}=\sum_{\mu\vdash N}T_\mu(t)\,f^\mu$, one has
\[
  T_\lambda(t)=(1-t)^N f^\lambda,\qquad Z_{N,t}=(1-t)^N N!,
\]
and hence $\ \mathbb P(\mathrm{shape}=\lambda)=T_\lambda(t)\,f^\lambda/Z_{N,t}=(f^\lambda)^2/N!$.
\end{proposition}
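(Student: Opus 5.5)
The plan is to compute the shape law by pushing the product measure on marked permutations through the bijection of the preceding theorem. A marked permutation $(\pi,\varepsilon)$ with $r$ marks has probability $\frac{1}{N!}q^{r}(1-q)^{N-r}=\frac{1}{N!}(1-q)^N(-t)^r$, using $q/(1-q)=-t$. The bijection sends it to a pair $(S,U)$ of common shape $\lambda$, with $U$ standard and $S$ marked standard. By Theorem~\ref{thm:gen-rsk-A} it preserves the number of marks, so $\mathrm{mark}(S)=r$. Summing over all pairs of shape $\lambda$ gives
\[
\mathbb P(\mathrm{shape}=\lambda)=\frac{(1-q)^N}{N!}\,T_\lambda(t)\,f^\lambda ,
\]
where $T_\lambda(t)=\sum_S(-t)^{\mathrm{mark}(S)}$ runs over marked standard tableaux of shape $\lambda$ and $f^\lambda$ counts the choices of $U$.

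The key step is the identity $T_\lambda(t)=(1-t)^N f^\lambda$. In a marked standard tableau each absolute value $k\in\{1,\dots,N\}$ appears exactly once. Condition \textbf{(T2)} is therefore vacuous: each letter occurs only once, so no row or column can contain two copies of $k'$ or of $k$.

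Condition \textbf{(T1)} compares only distinct absolute values. Since $k'<k<(k+1)'$, the relative order of two letters with different absolute values does not depend on their marks. Hence deleting all primes gives a bijection between marked standard tableaux of shape $\lambda$ and pairs consisting of a standard Young tableau of shape $\lambda$ and an arbitrary subset of $\{1,\dots,N\}$ (the marked positions). It follows that
\[
T_\lambda(t)=f^\lambda\sum_{r=0}^{N}\binom{N}{r}(-t)^r=(1-t)^N f^\lambda .
\]

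With this identity in hand, $Z_{N,t}=(1-t)^N\sum_{\mu\vdash N}(f^\mu)^2=(1-t)^N N!$ by the classical identity $\sum_{\mu\vdash N}(f^\mu)^2=N!$. Substituting $1-q=1/(1-t)$ gives
\[
\mathbb P(\mathrm{shape}=\lambda)=\frac{(1-t)^{-N}}{N!}(1-t)^N (f^\lambda)^2=\frac{(f^\lambda)^2}{N!}=\frac{T_\lambda(t)f^\lambda}{Z_{N,t}} .
\]

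As a cross-check, the shape law can be read off from the exponential specialisation of Section~\ref{sect:t-Plancherel-measure}, where $S_\lambda(x;t)$ with $p_1=a$ equals $((1-t)a)^{|\lambda|}f^\lambda/|\lambda|!$. This matches the same $(1-t)^N$ factor.

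The main obstacle is to make sure that the bijection genuinely lands on all marked standard tableaux, i.e.\ that every prime pattern on a standard tableau occurs. This is where the argument relies on the preceding theorem's surjectivity. If that seemed doubtful, I would instead compare total masses: summing $\frac{(1-q)^N}{N!}T_\lambda f^\lambda$ over $\lambda$ must give $1$, since there are $2^N N!$ marked permutations. The count of marked standard tableaux then forces equality.
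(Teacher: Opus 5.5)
Your proof is correct and follows essentially the same route as the paper: you push the product measure through the marked RSK bijection, and you note that with distinct absolute values \textbf{(T2)} is vacuous and \textbf{(T1)} does not depend on the marks, so each standard tableau carries $\binom{N}{r}$ markings with $r$ primes, which gives $T_\lambda(t)=(1-t)^N f^\lambda$. The only differences are cosmetic: you factor the weight as $(1-q)^N(-t)^r$ before summing, whereas the paper sums $q^{r}(1-q)^{N-r}$ directly to $1$ for each underlying standard tableau, and your cardinality fallback for surjectivity is a sound extra safeguard.
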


\begin{proof}
Given $(S,U)$, the marks contribute $q^{\mathrm{mark}(S)}(1-q)^{N-\mathrm{mark}(S)}$
and $\pi$ contributes $1/N!$. For a fixed standard $U$, there are $f^\lambda$ choices,
each of the $N$ cells in $S$ carries a distinct absolute value, so the row/column
admissibility constraints \textbf{(T2)} are vacuous and \textbf{(T1)} is preserved under
marking.
Thus, the number of marked standard $S$ with exactly
$r$ primes equals $\binom{N}{r}$.

Summing in $r$ gives
\[
\sum_{S}q^{\mathrm{mark}(S)}(1-q)^{N-\mathrm{mark}(S)}=(q+(1-q))^N=1,
\]
whence $\mathbb P(\mathrm{shape}=\lambda)=(f^\lambda)/N!\times f^\lambda=(f^\lambda)^2/N!$.

Equivalently, with $t=-q/(1-q)$,
$$
\sum_{S}(-t)^{\mathrm{mark}(S)}=\sum_{r=0}^N\binom{N}{r}(-t)^r=(1-t)^N,
$$
so $T_\lambda(t)=(1-t)^N f^\lambda$ and
$Z_{N,t}=(1-t)^N\sum_{\mu}(f^\mu)^2=(1-t)^N N!$.
\end{proof}

\begin{remark}
$T_\lambda(t)$ is the coefficient of the monomial $x_1x_2\cdots x_N$ in
$S_\lambda(x_1,\dots,x_N;t)$, 
which in turn
admits the explicit evaluation $T_\lambda(t)=(1-t)^N f^\lambda$ by the vertex operator
realisation (See~Theorem~\ref{thm:Jacobi-Trudi} and Theorem~\ref{thm:t-Cauchy-VO} above).
Thus, the fixed-size shape law is Plancherel for all $t\le0$.
\end{remark}

The following result tells us the equivalence between the $t$-Plancherel measure and the Poissonised Plancherel law.

\begin{corollary}[Poissonisation and $t$-Plancherel]\label{cor:poiss}
Let $N\sim\mathrm{Poisson}(\kappa)$ and choose $q$ so that $q/(1-q)=-t$.
Then the pushforward law of shapes is the Poissonised Plancherel measure
\[
  \mathbb P(\{\lambda\}) \;=\; e^{-\kappa}\,\frac{\kappa^{|\lambda|}}{(|\lambda|!)^2}\,(f^\lambda)^2.
\]
To match the $t$-Plancherel family of Definition~\ref{def:tPl}
simply choose the Poisson mean to be that parameter. For example:
\begin{itemize}
\item If $a=b=\sqrt{\xi}$ (\emph{unbalanced} normalisation), take $N\sim\mathrm{Poisson}((1-t)\xi)$ to obtain the $t$-Plancherel law with parameter $(1-t)\xi$.
\item If $a=\sqrt{\xi}/(1-t)$ and $b=\sqrt{\xi}$ (\emph{balanced} normalisation), then $\kappa=\xi$ and the kernel coincides with the classical Poissonised Plancherel one; take $N\sim\mathrm{Poisson}(\xi)$.
\end{itemize}
Consequently,
\[
  \frac{L^{(t)}-2\sqrt{\kappa}}{\kappa^{1/6}}\ \Rightarrow\ \mathrm{TW}_{2},
\]
with $\kappa$ understood as the above Poisson mean aligned with the chosen $t$-Plancherel normalisation.
\end{corollary}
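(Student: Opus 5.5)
The plan is to condition on the size $N$ and reduce everything to Proposition~\ref{prop:shape-law-fixed-N}. Let $N\sim\mathrm{Poisson}(\kappa)$. Conditionally on $N$, draw $\pi\in S_N$ uniformly and the marks $\varepsilon$ independently with $q=-t/(1-t)$. By Proposition~\ref{prop:shape-law-fixed-N}, the conditional shape law given $N$ is the Plancherel measure $(f^\lambda)^2/N!$ on partitions of $N$, and it does not depend on $t$. Averaging over $N$ gives
\[
\mathbb P(\{\lambda\})=e^{-\kappa}\frac{\kappa^{|\lambda|}}{|\lambda|!}\cdot\frac{(f^\lambda)^2}{|\lambda|!},
\]
which is the first claim.

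Next, I would match this with Definition~\ref{def:tPl}. There $\mathbb P_{t;a,b}(\{\lambda\})=e^{-(1-t)ab}((1-t)ab)^{|\lambda|}(f^\lambda)^2/(|\lambda|!)^2$, so the Poissonised law above coincides with $\mathbb P_{t;a,b}$ exactly when $\kappa=(1-t)ab$. The two bullet points are then direct substitutions.
\begin{itemize}
\item For $a=b=\sqrt{\xi}$ one gets $\kappa=(1-t)\xi$.
\item For $a=\sqrt{\xi}/(1-t)$ and $b=\sqrt{\xi}$ one gets $\kappa=\xi$. In this case the multiplier $\mathcal J_t(z)=\exp((1-t)az-bz^{-1})$ becomes $\exp(\sqrt{\xi}(z-z^{-1}))$, which is the classical Poissonised Plancherel kernel by Theorem~\ref{thm:GF}.
\end{itemize}

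For the fluctuation statement, Theorem~\ref{thm:t-ascent-equals-lambda1} gives $L^{(t)}(\pi,\varepsilon)=\lambda_1$ pointwise on the shape produced by $\mathrm{RSK}_{\mathcal A}$. Hence the law of $L^{(t)}$ under the Poissonised model equals the law of $\lambda_1$ under $\mathbb P_{t;a,b}$ with $(1-t)ab=\kappa$. The Tracy--Widom corollary following Proposition~\ref{prop:bessel-form} then gives $(\lambda_1-2\sqrt\kappa)/\kappa^{1/6}\Rightarrow\mathrm{TW}_2$ as $\kappa\to\infty$, and the same convergence transfers to $L^{(t)}$.

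The one step needing care is whether Theorem~\ref{thm:t-ascent-equals-lambda1} really applies to the marked permutation word. For permutations all absolute values are distinct, so the constraint of at most one $k'$ per value is automatic. Therefore Lemma~\ref{lem:equiv-LIS-A}, Theorem~\ref{thm:LIS-A} and Theorem~\ref{thm:t-ascent-equals-lambda1} identify $L^{(t)}$ with $\lambda_1$ exactly, with no correction term. I expect the main obstacle to be bookkeeping rather than substance: using Proposition~\ref{prop:shape-law-fixed-N} for every fixed $N$, including $N=0$, and interpreting $\kappa$ consistently with the chosen normalisation. The asymptotic input itself is taken from the earlier corollary.
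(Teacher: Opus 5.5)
Your proposal is correct and follows the route the paper intends. The paper states this corollary without a separate proof, and its ingredients are the ones you use: conditioning on $N$, applying Proposition~\ref{prop:shape-law-fixed-N} to get the $t$-independent Plancherel law, Poissonising and matching with Definition~\ref{def:tPl} via $\kappa=(1-t)ab$, and transferring the $\mathrm{TW}_2$ limit through $L^{(t)}=\lambda_1$ from Theorem~\ref{thm:t-ascent-equals-lambda1}, with your check that distinct absolute values make the marked-letter constraint vacuous.
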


\begin{remark}
We should note that our $t$-ascent pair differs from the ascent pair that appears in the shifted Schur measure setting (strict partitions, shifted RSK,
Schur's $Q/P$-functions), which pairs a decreasing and an increasing subsequence so that the
concatenation is weakly increasing~\cite{Mat2}. Its distribution under the uniform measure on
$S_N$ matches $\lambda_1$ under the shifted version of Plancherel measure.
At $t=0$, $L^{(t)}$ reduces to the classical LIS and all statements specialise to the Schur setting.

\end{remark}

\section{Limit distributions for the $t$-Schur measure}\label{sect:limit-distri}
\label{sec:limit}

We now pass from the operator formulae to edge asymptotics.  Let $K_{\mathrm{Airy}}$ denote the Airy kernel
\[
  K_{\mathrm{Airy}}(x,y)
  =\frac{\mathrm{Ai}(x)\mathrm{Ai}'(y)-\mathrm{Ai}'(x)\mathrm{Ai}(y)}{x-y},
\]
and let $\mathbb P_{\mathrm{Airy}}$ be the determinantal point process on $\mathbb R$
with correlation functions
$\mathfrak C_{\mathrm{Airy}}(X)=\det(K_{\mathrm{Airy}}(x_i,x_j))_{x_i,x_j\in X}$.
Its top particle has the Tracy-Widom $F_2$ distribution.

\subsection{Poissonised $t$-Plancherel: Bessel $\Rightarrow$ Airy}
\label{subsec:pois-pl}

We begin with the Poissonised $t$-Plancherel specialisation of Proposition~\ref{prop:bessel-form}.
Take
\[
  p_k(x)=\sqrt{\xi}\,\delta_{k1},\qquad p_k(y)=\sqrt{\xi}\,\delta_{k1}\qquad(\xi>0),
\]
so that
\(
  \mathcal J_t(z)=\exp\big((1-t)\sqrt{\xi}\,z-\sqrt{\xi}\,z^{-1}\big)
\)
and, with
\(
  \kappa:=(1-t)\,\xi,
\)
the kernel $\mathcal K_t$ is unitarily equivalent to the discrete Bessel kernel with parameter $\kappa$.  The soft edge is at $2\sqrt{\kappa}$.

\begin{theorem}
\label{thm:pois-t-schur-airy}
Let $u=2\sqrt{\kappa}+\kappa^{1/6}x$ and $v=2\sqrt{\kappa}+\kappa^{1/6}y$.
Then, uniformly for $x,y$ in compact sets,
\begin{equation}\label{eq:bessel-to-airy}
  \kappa^{1/6}\,\mathcal K_t(u,v)\ \longrightarrow\ K_{\mathrm{Airy}}(x,y).
\end{equation}
Consequently, for every $x\in\mathbb R$,
\[
  \det\nolimits_{[u,\infty)}\bigl(I-\mathcal K_t\bigr)\ \longrightarrow\ F_2(x),
  \qquad x=\frac{u-2\sqrt{\kappa}}{\kappa^{1/6}},
\]
and hence
\[
  \frac{\lambda_1-2\sqrt{\kappa}}{\kappa^{1/6}}
  \ \Rightarrow\ \mathrm{TW}_{2},
\]
with joint convergence of the top $M$ rows to the Airy ensemble for every fixed $M$.
\end{theorem}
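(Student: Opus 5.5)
The plan is to reduce everything to the classical Bessel-to-Airy asymptotics of Borodin--Okounkov--Olshanski and Johansson.

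\textbf{Step 1: identify the kernel.} Under the specialisation $p_k(x)=p_k(y)=\sqrt{\xi}\,\delta_{k1}$ we have $\mathcal J_t(z)=\exp\bigl((1-t)\sqrt\xi\,z-\sqrt\xi\,z^{-1}\bigr)$. Rescale $z\mapsto cz$ with $c=(1-t)^{-1/2}$. Then
\[
\mathcal J_t(cz)=\exp\bigl(\sqrt{\kappa}\,(z-z^{-1})\bigr),\qquad \kappa=(1-t)\xi .
\]
In the generating function of Theorem~\ref{thm:GF}, the prefactor $\sqrt{zw}/(z-w)$ is homogeneous of degree $0$. Hence the rescaling only produces a conjugation
\[
\mathcal K_t(i,j)=c^{\,j-i}\,\mathcal K^{\mathrm{Bes}}_\kappa(i,j),
\]
where $\mathcal K^{\mathrm{Bes}}_\kappa$ is the classical discrete Bessel kernel on $\mathbb Z+\tfrac12$ with parameter $\sqrt\kappa$.

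This step makes Proposition~\ref{prop:bessel-form} concrete. Via Corollary~\ref{cor:coef}, and using that the coefficients of $e^{\sqrt\kappa(z-z^{-1})}$ are $J_n(2\sqrt\kappa)$, it gives the explicit form
\[
\mathcal K^{\mathrm{Bes}}_\kappa(i,j)=\sqrt\kappa\,\frac{J_{i-\frac12}J_{j+\frac12}-J_{i+\frac12}J_{j-\frac12}}{i-j}.
\]
The conjugation factor $c^{j-i}$ cancels in every determinant $\det[\mathcal K_t(x_a,x_b)]$ and in every Fredholm determinant. So all correlation functions and gap probabilities of $\mathcal K_t$ coincide with those of $\mathcal K^{\mathrm{Bes}}_\kappa$. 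The statement~\eqref{eq:bessel-to-airy} should therefore be read modulo this gauge factor, or after symmetrising the kernel. Keeping track of this gauge is the one subtle point I would flag.

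\textbf{Step 2: pointwise Airy limit.} Use the Bessel asymptotics in the transition region,
\[
\kappa^{1/6}J_{2\sqrt\kappa+x\kappa^{1/6}}(2\sqrt\kappa)\to \mathrm{Ai}(x),
\]
uniformly for $x$ in compacts. This follows from Nicholson-type integrals, or by steepest descent on the contour integral $J_n(2\sqrt\kappa)=\frac1{2\pi i}\oint e^{\sqrt\kappa(z-z^{-1})}z^{-n-1}dz$. Near $n\approx 2\sqrt\kappa$ the phase has a double critical point at $z=1$, and the cubic expansion produces $\mathrm{Ai}$.

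Plugging this into the Christoffel--Darboux form is delicate because of the $1/(i-j)$ factor. So instead I will use the summed form from Corollary~\ref{cor:coef},
\[
\mathcal K^{\mathrm{Bes}}(i,j)=\sum_{k\ge 1/2}J_{i+k}J_{j+k}.
\]
Under the scaling this Riemann-sums to
\[
\int_0^\infty \mathrm{Ai}(x+s)\,\mathrm{Ai}(y+s)\,ds=K_{\mathrm{Airy}}(x,y).
\]

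\textbf{Step 3: trace-class convergence (the main obstacle).} To upgrade pointwise convergence to convergence of $\det_{[u,\infty)}(I-\mathcal K_t)$ I need a uniform tail bound: for $n\ge 2\sqrt\kappa+s\kappa^{1/6}$ with $s\ge s_0$,
\[
|\kappa^{1/6}J_n(2\sqrt\kappa)|\le Ce^{-cs}.
\]
This follows from the integral representation by shifting the contour to $|z|=r>1$ chosen depending on $s$, as in Johansson and in Borodin--Okounkov--Olshanski. With this bound the integrable-operator factorisation $\mathcal K=AA^{*}$, where $A(i,k)=J_{i+k}$, converges in Hilbert--Schmidt norm. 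That gives trace-norm convergence on $\ell^2([u,\infty))$ and hence convergence of the Fredholm determinants to $F_2(x)$.

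\textbf{Step 4: conclusions.} By the determinantal structure of Theorem~\ref{thm:main}, $\mathbb P(\lambda_1<u)=\det(I-\mathcal K_t)_{\ell^2\{u+\frac12,u+\frac32,\dots\}}$. Step 3 then yields $(\lambda_1-2\sqrt\kappa)/\kappa^{1/6}\Rightarrow \mathrm{TW}_2$.

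For the top $M$ rows, the same trace-class convergence on $[u,\infty)$ gives convergence of all correlation functions and of the joint gap probabilities $\det(I-\chi\mathcal K_t)$ on finite unions of half-lines. This gives convergence to the Airy point process, whose top $M$ particles are the limits of $\lambda_1,\dots,\lambda_M$ in the scaling. If $\xi$ rather than the Poisson parameter is the natural variable, de-Poissonisation is not needed, since the statement concerns the Poissonised measure itself.
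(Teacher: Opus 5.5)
Your argument is correct, but it takes a different route from the paper.

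\textbf{The paper's route.} The paper never identifies $\mathcal K_t$ with a known kernel. It runs a two-variable steepest-descent analysis directly on the double contour integral from Theorem~\ref{thm:GF}. It locates the double saddle of $\Phi(z)=(1-t)\sqrt{\xi}\,z-\sqrt{\xi}\,z^{-1}-u\log z$ at $z_0=(1-t)^{-1/2}$, conjugates by a diagonal weight to centre the phase there, and rescales $z=z_0(1+\zeta\kappa^{-1/6})$, $w=z_0(1+\omega\kappa^{-1/6})$. It then reads off the double-integral form of $K_{\mathrm{Airy}}$. The passage to Fredholm determinants and $\mathrm{TW}_2$ is simply delegated to \cite{BDJ}.

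\textbf{Your route.} Your rescaling $z\mapsto(1-t)^{-1/2}z$ is the same gauge; indeed $\mathcal J_t(z_0)=1$, so the paper's weight is a pure power of $z_0$. The difference is that you use it globally, to show that $\mathcal K_t$ is exactly a diagonal conjugate of the $t=0$ discrete Bessel kernel with parameter $\kappa$. After that you only need three inputs, i.e.\ the Johansson/Borodin--Okounkov--Olshanski argument:
\begin{itemize}
\item the one-variable asymptotics $\kappa^{1/6}J_{2\sqrt{\kappa}+x\kappa^{1/6}}(2\sqrt{\kappa})\to\mathrm{Ai}(x)$;
\item the factorisation $\sum_k J_{i+k}J_{j+k}$;
\item a uniform exponential tail bound.
\end{itemize}

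\textbf{What each buys.} Your route makes the $t$-dependence transparent: it is pure gauge plus the parameter change $\kappa=(1-t)\xi$. It also supplies explicitly the trace-norm estimate that the paper leaves to \cite{BDJ}. The paper's direct saddle-point scheme buys portability. It is reused for the rectangular specialisation (Lemma~\ref{lem:alpha-three-regimes}), where no rescaling reduces $\mathcal J_t$ to a $t=0$ symbol, so an exact reduction to a known kernel is unavailable there.

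\textbf{Your gauge caveat is well founded.} We have $\mathcal K_t(u,v)=z_0^{\,v-u}\mathcal K^{\mathrm{Bes}}_\kappa(u,v)$. For $x\neq y$ and $t\neq0$, the factor $z_0^{\kappa^{1/6}(y-x)}$ tends to $0$ or $\infty$. So \eqref{eq:bessel-to-airy} can only hold for the conjugated kernel, and that is all the paper's computation actually establishes. Its closing appeal to invariance of minors does not transfer a pointwise limit back to $\mathcal K_t$. Only the gauge-invariant determinantal consequences survive, and those are exactly what you use.

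One harmless index slip: $\lambda_1<u$ means there is no particle at or above $u-\tfrac12$, so your gap set should start at $u-\tfrac12$.
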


\begin{proof}
Using the double contour form of $\mathcal K_t$ by Theorem~\ref{thm:GF},
\[
  \mathcal K_t(u,v)=\frac{1}{(2\pi i)^2}\oint\oint
  \frac{\sqrt{zw}}{z-w}\,
  \frac{\mathcal J_t(z)}{\mathcal J_t(w)}\,
  \frac{dz\,dw}{z^{u+1}w^{-v+1}}\qquad(|w|<|z|),
\]
Let
\begin{align*}
  \Phi(z)&=(1-t)\sqrt{\xi}\,z-\sqrt{\xi}\,z^{-1}-u\log z,\\
 \Psi(w)&=(1-t)\sqrt{\xi}\,w-\sqrt{\xi}\,w^{-1}-v\log w .
\end{align*}
A direct computation gives
\(
  \Phi'(z)=(1-t)\sqrt{\xi}+\sqrt{\xi}\,z^{-2}-u\,z^{-1}
\)
and
\(
  \Phi''(z)=-2\sqrt{\xi}\,z^{-3}+u\,z^{-2}.
\)
Solving $\Phi'(z_0)=\Phi''(z_0)=0$ yields
\[
  z_0=\frac{1}{\sqrt{\,1-t\,}}\,,\qquad u=2\sqrt{(1-t)\xi}=2\sqrt{\kappa},
\]
and the same for $\Psi$ with $v$. Thus, the double saddle is at $z=w=z_0$ (not at $1$ unless $t=0$).

For the steepest-descent analysis, we conjugate by the diagonal weight
\[
  D(i):=\bigl(\mathcal J_t(z_0)\,z_0\bigr)^{-i},\qquad
  \widetilde{\mathcal K}_t(i,j):=D(i)\,\mathcal K_t(i,j)\,D(j)^{-1}.
\]
This conjugation preserves all minors and the Fredholm determinants on $[u,\infty)$
(since $D$ commutes with the coordinate projection).  In terms of
$\widetilde{\mathcal K}_t$, we have
\[
  \widetilde{\mathcal K}_t(u,v)=\frac{1}{(2\pi i)^2}\oint\oint
    \frac{\sqrt{zw}}{z-w}\,
    \exp\!\Big(\Phi(z)-\Phi(z_0)-\big(\Psi(w)-\Psi(z_0)\big)\Big)\,
    \frac{dz\,dw}{zw}.
\]

Now set $u=2\sqrt{\kappa}+\kappa^{1/6}x$, $v=2\sqrt{\kappa}+\kappa^{1/6}y$ and
\[
  z=z_0\Bigl(1+\frac{\zeta}{\kappa^{1/6}}\Bigr),\qquad
  w=z_0\Bigl(1+\frac{\omega}{\kappa^{1/6}}\Bigr).
\]

A Taylor expansion at $z_0$ then gives, uniformly for $x,y$ in compacts,
\[
  \Phi(z)-\Phi(z_0)=\frac{\zeta^3}{3}-x\zeta+o(1),\qquad
  \Psi(w)-\Psi(z_0)=\frac{\omega^3}{3}-y\omega+o(1),
\]
while
\[
  \frac{\sqrt{zw}}{z-w}=\kappa^{1/6}\,\frac{1}{\zeta-\omega}\,(1+o(1)).
\]
Deforming to steepest-descent contours through $z_0$ and applying dominated convergence yields
\[
  \kappa^{1/6}\,\widetilde{\mathcal K}_t(u,v)\ \longrightarrow\ K_{\mathrm{Airy}}(x,y).
\]
Since diagonal conjugation preserves correlation minors and (on $[u,\infty)$) Fredholm determinants,
we obtain \eqref{eq:bessel-to-airy} for $\mathcal K_t$.  The Tracy-Widom limit follows exactly as in
\cite{BDJ}.
\end{proof}

\begin{proposition}\label{lem:three-regimes}
With $u,v$ as in Theorem~\ref{thm:pois-t-schur-airy} and uniformly on compact sets,
\begin{align}
 \kappa^{1/6}\,\mathcal K_t\big(2\sqrt{\kappa}+x\kappa^{1/6},\,2\sqrt{\kappa}+y\kappa^{1/6}\big)
   &\to K_{\mathrm{Airy}}(x,y),\label{eq:airy-limit}\\
 \kappa^{1/6}\,\mathcal K_t\big(2\sqrt{\kappa}+x\kappa^{1/6},\,2\sqrt{\kappa}-y\kappa^{1/6}\big)
   &\to 0,\label{eq:airy-zero-1}\\
 \kappa^{1/6}\,\mathcal K_t\big(2\sqrt{\kappa}-x\kappa^{1/6},\,2\sqrt{\kappa}+y\kappa^{1/6}\big)
   &\to 0.\label{eq:airy-zero-2}
\end{align}
\end{proposition}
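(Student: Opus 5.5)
The first limit \eqref{eq:airy-limit} is exactly the content of Theorem~\ref{thm:pois-t-schur-airy}, so I will simply invoke it. For \eqref{eq:airy-zero-1} and \eqref{eq:airy-zero-2} I plan to reuse the same double-contour machinery, with the appropriately signed shifts. The representation I will work from is
\[
  \widetilde{\mathcal K}_t(u,v)=\frac{1}{(2\pi i)^2}\oint\oint\frac{\sqrt{zw}}{z-w}\,
  e^{\Phi(z)-\Phi(z_0)-(\Psi(w)-\Psi(z_0))}\,\frac{dz\,dw}{zw}.
\]
Here $z_0=(1-t)^{-1/2}$ is the double critical point found in the proof of Theorem~\ref{thm:pois-t-schur-airy}. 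As there, I pass freely between $\mathcal K_t$ and the gauge-equivalent $\widetilde{\mathcal K}_t$. The key steps are as follows.

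First, in \eqref{eq:airy-zero-1}, set $u=2\sqrt\kappa+x\kappa^{1/6}$ and $v=2\sqrt\kappa-y\kappa^{1/6}$, and rescale $z=z_0(1+\zeta\kappa^{-1/6})$, $w=z_0(1+\omega\kappa^{-1/6})$. The cubic Taylor expansion then gives
\[
  \Phi(z)-\Phi(z_0)=\tfrac{\zeta^3}{3}-x\zeta+o(1),\qquad \Psi(w)-\Psi(z_0)=\tfrac{\omega^3}{3}+y\omega+o(1).
\]
The sign flip in the linear term of $\Psi$ is the only change from Theorem~\ref{thm:pois-t-schur-airy}. Second, I will split the $z$-contour (chosen so that $|w|<|z|$) into a local piece of size $\kappa^{-1/6+\delta}$ around $z_0$ and a global remainder. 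On the global remainder, $\operatorname{Re}\Phi$ decays at rate $e^{-c\kappa^{3\delta}}$ by the steepest-descent property; this needs a global check that $\operatorname{Re}(\Phi(z)-\Phi(z_0))$ is maximised on the chosen circle exactly at $z_0$, which I will verify using $\Phi'(z)=\sqrt\xi\,z^{-2}(1-t)(z-z_0)^2$ when $u=2\sqrt\kappa$. Third, on the local piece I use the factorisation $\frac{1}{\zeta-\omega}=\int_0^\infty e^{-s(\zeta-\omega)}ds$, valid on the Airy contours. This separates the local integral into a product of one-variable Airy integrals in $\zeta$ and $\omega$, integrated over $s\ge 0$. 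The case \eqref{eq:airy-zero-2} is symmetric, with the roles of $\Phi$ and $\Psi$ exchanged.

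The main obstacle is the last step. After factorisation, the $\omega$-integral with linear coefficient $+y$ is an Airy function evaluated at $-y$ on the oscillatory side, not the decaying one. So the naive local limit is an $s$-integral of the form $\int_0^\infty \mathrm{Ai}(x+s)\,\mathrm{Ai}(-y+s)\,ds$, and showing that the leading term cancels is the crux. What I will try first is to exploit the freedom in the diagonal gauge $D$. I will replace $D(i)$ by a gauge whose base is the $v$-dependent saddle $w_0(v)$ of $\Psi$, which becomes a pair of complex-conjugate points at distance of order $\sqrt{y}\,\kappa^{-1/6}$ from $z_0$. I then need to check that, after this recentring, the $\omega$-contour can be deformed to pass below both saddles so that its contribution is $O(\kappa^{-1/6})$, and that multiplication by the prefactor $\kappa^{1/6}$ still tends to $0$. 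If that check fails, the fallback is to prove the vanishing only for the gauge-invariant products $\mathcal K_t(u,v)\mathcal K_t(v,u)$ that enter the Fredholm expansion. That weaker statement still suffices for the corollary on $\mathrm{TW}_2$, which only uses determinants on $[u,\infty)$.
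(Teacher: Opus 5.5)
The step you call the crux cannot be completed, because \eqref{eq:airy-zero-1} and \eqref{eq:airy-zero-2} are false. The leading term $\int_0^\infty \mathrm{Ai}(x+s)\,\mathrm{Ai}(s-y)\,ds=K_{\mathrm{Airy}}(x,-y)$ that your local analysis produces is the actual limit, and nothing cancels it. Since $2\sqrt\kappa-y\kappa^{1/6}=2\sqrt\kappa+(-y)\kappa^{1/6}$, the quantity in \eqref{eq:airy-zero-1} is just the quantity in \eqref{eq:airy-limit} at the point $(x,-y)$. So for any conjugate of $\mathcal K_t$ satisfying \eqref{eq:airy-limit} uniformly on compacts, the limit is $K_{\mathrm{Airy}}(x,-y)$.

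There is also a gauge-free counterexample. Take $y=-x$ in \eqref{eq:airy-zero-1}, or $x=y=0$ if you only allow $x,y\ge 0$. Both arguments then equal $u$, and the entry is $\kappa^{1/6}\mathcal K_t(u,u)=\kappa^{1/6}\,\mathfrak C_t(\{u\})$, which no conjugation changes. By \eqref{eq:airy-limit} it tends to $K_{\mathrm{Airy}}(x,x)=\int_0^\infty\mathrm{Ai}(x+s)^2\,ds>0$, which equals $\mathrm{Ai}'(0)^2$ at $x=0$, and not to $0$. The same happens in \eqref{eq:airy-zero-2} with $x=-y$.

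This disposes of both of your remedies. A base point chosen according to $v$ is not a gauge transformation: $g(i)\mathcal K_t(i,j)g(j)^{-1}$ preserves minors only when $g$ is one fixed function of the lattice variable, and no gauge moves a diagonal entry. The fallback is false too, since $\kappa^{1/3}\mathcal K_t(u,v)\mathcal K_t(v,u)\to K_{\mathrm{Airy}}(x,-y)^2$, which is nonzero near $x=y=0$. The paper's one-line argument does not supply the missing step either. The mixed-sign arguments lie in the same $\kappa^{1/6}$-window around the double saddle $z_0$, so the contribution from near $z_0$ is of order one, and no choice of contours makes the phase uniformly negative there.

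What is true, and what the $\mathrm{TW}_2$ statements actually use, is the following. Here $\mathcal J_t(z_0 s)=\exp(\sqrt\kappa(s-s^{-1}))$, so $B(i,j):=z_0^{\,i-j}\mathcal K_t(i,j)$ is exactly the discrete Bessel kernel with parameter $\kappa$, and the double integral you work from equals $B(u,v)$. The classical edge limit gives \eqref{eq:airy-limit} for $B$. It does not give it for $\mathcal K_t$ off the diagonal when $t\neq 0$, because $\mathcal K_t(u,v)=(1-t)^{(u-v)/2}B(u,v)$ with $u-v\asymp\kappa^{1/6}$. So you may pass between the two only inside minors, not entrywise.

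The Fredholm determinant on $[u,\infty)$ involves only arguments above the cut, so no mixed-sign block is needed at all. What is needed is \eqref{eq:airy-limit} for $B$ together with a tail bound for large positive $x,y$. Finally, your global estimate cannot be carried out on the circle $|z|=z_0$. There $\mathcal J_t(z_0e^{i\phi})=e^{2i\sqrt\kappa\sin\phi}$, so $\operatorname{Re}\Phi$ is constant along it. The contours must instead leave $z_0$ along the Airy directions ($\pm\pi/3$ for $z$, $\pm 2\pi/3$ for $w$) before closing up.
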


\begin{proof}
Using the same conjugated kernel and saddle analysis as in the theorem, in the mixed-sign cases, we choose contours so that
\[
  \Re\big(\Phi(z)-\Phi(z_0)\big)\quad\text{or}\quad \Re\big(\Psi(w)-\Psi(z_0)\big) 
\]
is uniformly negative away from the double saddle $z_0=(1-t)^{-1/2}$, which yields exponential decay and hence \eqref{eq:airy-zero-1}-\eqref{eq:airy-zero-2}.
\end{proof}

In balanced normalisation $p_1(x)=\sqrt{\xi}/(1-t)$, $p_1(y)=\sqrt{\xi}$, one has
$\mathcal J_t(z)=\exp(\sqrt{\xi}(z-z^{-1}))$, identical to the Schur case. In particular, the double saddle is at $z_0=1$, so the above proof runs without conjugation and \eqref{eq:bessel-to-airy} holds with $\kappa$ replaced by~$\xi$.

\subsection{De-Poissonisation: fixed-size $t$-Schur law}
\label{subsec:depois}

Let $\mathbb P^{(\xi)}_{t}$ denote the Poissonised law from subsection~\ref{subsec:pois-pl}, where
$|\lambda|\sim\mathrm{Poisson}((1-t)\xi)$.  Fix $N\to\infty$ and choose $\xi(N)=N/(1-t)$
so that the Poisson mean equals~$N$.

\begin{theorem}\label{thm:depois-t}
Let $\lambda$ be distributed according to the $t$-Plancherel specialisation of the $t$-Schur measure, conditioned on $|\lambda|=N$. Then, for every fixed $M\ge1$,
\[
  \Big(\,\frac{\lambda_i-2\sqrt{N}}{N^{1/6}}\,\Big)_{1\le i\le M}
  \ \Longrightarrow\ \text{\emph{Airy ensemble}},
\]
in particular,
\(
  \frac{\lambda_1-2\sqrt{N}}{N^{1/6}}\Rightarrow \mathrm{TW}_{2}.
\)
\end{theorem}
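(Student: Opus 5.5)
Conditioned on $|\lambda|=N$, Proposition~\ref{prop:cond-Pl} shows that the $t$-Plancherel specialisation of the $t$-Schur measure is exactly the classical Plancherel measure $\mathbb P^{\mathrm{Pl}}_N$, for every $t<1$. The plan is therefore to run the standard de-Poissonisation argument of Johansson~\cite{Joh} and Borodin--Okounkov--Olshanski~\cite{BOO}, feeding it the Poissonised edge asymptotics of Theorem~\ref{thm:pois-t-schur-airy}.

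\textbf{Step 1 (Poissonised law as a mixture).} Choose $\xi(N)=\theta/(1-t)$, so that the Poisson mean is $\kappa=\theta$. By Definition~\ref{def:tPl},
\[
  \mathbb P^{(\xi)}_t=\sum_{n\ge0}e^{-\theta}\frac{\theta^n}{n!}\,\mathbb P^{\mathrm{Pl}}_n .
\]
Fix $M$ and real numbers $s_1,\dots,s_M$. Define
\[
  \varphi_n=\mathbb P^{\mathrm{Pl}}_n\bigl(\lambda_i\le 2\sqrt N+s_iN^{1/6},\ 1\le i\le M\bigr),
\]
with $N$ fixed in the thresholds. Then the Poissonised probability of the same event is the Poisson transform $\Phi(\theta)=\sum_n e^{-\theta}\theta^n\varphi_n/n!$.

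\textbf{Step 2 (monotonicity).} The sequence $\varphi_n$ is non-increasing in $n$. This holds because the Plancherel growth process (the RSK shape of the first $n$ letters of a uniform random word in $[0,1]^\infty$) adds one box at each step, so $\lambda_i^{(n)}$ is non-decreasing in $n$ for every $i$, and the event in question is a down-set.

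\textbf{Step 3 (sandwich).} Johansson's de-Poissonisation lemma gives, for $\theta_\pm=N\pm 4\sqrt{N\log N}$,
\[
  \Phi(\theta_+)-o(1)\le\varphi_N\le\Phi(\theta_-)+o(1).
\]
This rests on monotonicity together with Chebyshev/Poisson tail bounds.

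\textbf{Step 4 (evaluate the bounds).} Apply Theorem~\ref{thm:pois-t-schur-airy} at $\kappa=\theta_\pm$, or in the balanced normalisation, where the kernel is exactly the classical one. We have
\[
  2\sqrt{\theta_\pm}=2\sqrt N\pm O(\sqrt{\log N}) \quad\text{and}\quad \theta_\pm^{1/6}\sim N^{1/6}.
\]
Since $\sqrt{\log N}=o(N^{1/6})$, the rescaled thresholds $(2\sqrt N+s_iN^{1/6}-2\sqrt{\theta_\pm})/\theta_\pm^{1/6}$ converge to $s_i$. Joint convergence of the top $M$ rows to the Airy ensemble, together with continuity of the limiting distribution function, then shows that both $\Phi(\theta_\pm)$ converge to the Airy ensemble probability $F^{(M)}(s_1,\dots,s_M)$.

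Hence $\varphi_N\to F^{(M)}$. Since these joint distribution functions determine weak convergence, the theorem follows; the case $M=1$ gives $\mathrm{TW}_2$.

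\textbf{Main obstacle.} The delicate step is the uniformity in Step~4. The edge convergence must hold for joint distribution functions of several rows, not only for $\lambda_1$. This requires the Fredholm determinant convergence of Theorem~\ref{thm:pois-t-schur-airy} on unbounded sets $[u,\infty)$, with trace-norm control coming from the exponential tail decay in Proposition~\ref{lem:three-regimes}. It also requires that this convergence is locally uniform in the edge parameter, so that the thresholds may shift by $o(1)$. I would handle this exactly as in~\cite{BOO,BDJ}, using uniform bounds on the Bessel kernel. Alternatively, since the fixed-$N$ law is literally Plancherel for all $t$, I would simply invoke the known Baik--Deift--Johansson and Okounkov results for $\mathbb P^{\mathrm{Pl}}_N$ directly.
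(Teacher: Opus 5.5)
Your proposal is correct and follows the paper's own route: it obtains the Airy limit for the Poissonised $t$-Plancherel law from Theorem~\ref{thm:pois-t-schur-airy} and transfers it to fixed $|\lambda|=N$ via Johansson's de-Poissonisation lemma. Your version is more complete than the paper's two-line proof, since you supply the monotonicity hypothesis that lemma requires (via Proposition~\ref{prop:cond-Pl} and the Plancherel growth process) and the explicit sandwich at $\theta_\pm=N\pm4\sqrt{N\log N}$, both of which the paper leaves implicit.
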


\begin{proof}
Under $\xi(N)=N/(1-t)$, Theorem~\ref{thm:pois-t-schur-airy} gives the Airy limit for the
Poissonised model with centring $2\sqrt{(1-t)\xi}=2\sqrt{N}$ and scale $((1-t)\xi)^{1/6}=N^{1/6}$.
A standard de-Poissonisation lemma~\cite{Joh} for determinantal Fredholm determinants transfers the
limit to the conditional law $|\lambda|=N$.
\end{proof}

\begin{remark}
For the shifted Plancherel measure~\cite{Mat2}, the centring is $2\sqrt{2N}$ and $(2N)^{1/6}$. In the $t$-Schur determinantal case, the centring is $2\sqrt{N}$ and the parameter $t$ affects only lower-order terms unless one rescales the Poisson parameter as in Theorem~\ref{thm:pois-t-schur-airy}.
\end{remark}

\subsection{Rectangular $t$-specialisation}
\label{subsec:rect}

We now recall the finite rectangular specialisation ($\alpha$-specialisation), the setting studied in~\cite{Mat1}. 
Fix
\[
  x=(\underbrace{\alpha,\dots,\alpha}_{m}),\qquad
  y=(\underbrace{\alpha,\dots,\alpha}_{n}),\qquad
  \tau:=\frac{m}{n},\quad 0<\alpha<1,
\]
with $n\to\infty$ and $\tau$ fixed. By the vertex operator calculus in Section~\ref{sect:corr-fcn},
\[
  \mathcal J_t(z)=\Big(\frac{1-t\alpha z}{1-\alpha z}\Big)^{m}(1-\alpha/z)^{n}
  =\exp\big(n\,\sigma_t(z)\big),\quad
  \sigma_t(z)=\tau\log\frac{1-t\alpha z}{1-\alpha z}+\log(1-\alpha/z).
\]
Let $c_1\in\mathbb R$ and $z_0>0$ satisfy the double-saddle equations
\begin{align}\label{eq:doublesaddle}
  \Phi'(z_0)=\Phi''(z_0)=0,\qquad \Phi(z):=\sigma_t(z)-c_1\log z,
\end{align}
equivalently
\begin{equation}\label{eq:c1-from-sigma}
  c_1=z_0\,\sigma_t'(z_0)=-\,z_0^2\,\sigma_t''(z_0),
\end{equation}
and set
\begin{equation}\label{eq:c2-from-sigma}
  c_2=\left(\Big(\frac{2}{\Phi^{(3)}_t(z_0)}\Big)^{1/3}\frac{1}{z_0}\right)^{-1}.
\end{equation}

Under the $\alpha$-specialisation, Matsumoto applied Tracy-Widom's approach to obtain the following result.

\begin{theorem}[{\cite[Theorem 1]{Mat1}}]\label{thm:rect-airy}
With $u=c_1 n+c_2 n^{1/3}x$ and $v=c_1 n+c_2 n^{1/3}y$,
\[
  (c_2 n^{1/3})\,\widetilde{\mathcal K}_t(u,v)\ \Longrightarrow\ K_{\mathrm{Airy}}(x,y),
\]
where $\widetilde{\mathcal K}_t=D\,\mathcal K_t\,D^{-1}$ and
$(Df)(k)=\big(\mathcal J_t(z_0)\,z_0\big)^{-k}f(k)$.
Hence the edge fluctuations are Tracy-Widom $F_2$ with centering $c_1 n$ and
scale $c_2 n^{1/3}$.
\end{theorem}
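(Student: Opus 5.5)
We will follow the same steepest-descent template as in the proof of Theorem~\ref{thm:pois-t-schur-airy}, with $n$ now the large parameter and $\sigma_t$ playing the role of the exponent. By Theorem~\ref{thm:GF} and Corollary~\ref{cor:coef}, the kernel has the double contour representation
\[
  \mathcal K_t(u,v)=\frac{1}{(2\pi i)^2}\oint\oint\frac{\sqrt{zw}}{z-w}\,
  \frac{\exp\big(n\sigma_t(z)\big)}{\exp\big(n\sigma_t(w)\big)}\,\frac{dz\,dw}{z^{u+1}w^{-v+1}},\qquad |w|<|z|,
\]
with both contours in the annulus $\alpha<|w|<|z|<\min(1/\alpha,1/(|t|\alpha))$ where $\mathcal J_t$ is analytic and nonvanishing. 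Writing $z^{-u}=\exp(-u\log z)$ with $u=c_1n+c_2n^{1/3}x$, the exponent becomes $n\Phi(z)-c_2n^{1/3}x\log z$, and similarly for $w$ with $y$.

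The first step is to verify that the double-saddle system \eqref{eq:doublesaddle} has a solution $z_0>0$ in the annulus with $c_1>0$ and $\Phi^{(3)}_t(z_0)>0$, so that $c_2$ in \eqref{eq:c2-from-sigma} is real and positive. Since $\sigma_t'(z)=\tau\big(\frac{\alpha}{1-\alpha z}-\frac{t\alpha}{1-t\alpha z}\big)+\frac{\alpha}{z(z-\alpha)}$, eliminating $c_1$ via \eqref{eq:c1-from-sigma} gives a single algebraic equation $(z\sigma_t'(z))'=0$ for $z_0$. For $t\le0$, we will check that it has a unique root in $(\alpha,1/\alpha)$ by sign analysis at the endpoints, where $z\sigma_t'$ blows up to $+\infty$ at both ends, so it has a minimum.

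Next we conjugate by $D$, which removes $\Phi(z_0)$ and preserves all minors and the Fredholm determinants on $[u,\infty)$, as in Theorem~\ref{thm:pois-t-schur-airy}. We then rescale
\[
  z=z_0\bigl(1+\zeta\,c_2^{-1}n^{-1/3}\bigr),\qquad w=z_0\bigl(1+\omega\,c_2^{-1}n^{-1/3}\bigr).
\]
The choice of $c_2$ is exactly what normalises the cubic term, giving
\[
  n(\Phi(z)-\Phi(z_0))-c_2n^{1/3}x\log(z/z_0)=\tfrac{\zeta^3}{3}-x\zeta+o(1).
\]
The Jacobian and the factor $\frac{\sqrt{zw}}{z-w}$ then contribute $c_2n^{1/3}/(\zeta-\omega)$. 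This yields the Airy double-integral representation
\[
  K_{\mathrm{Airy}}(x,y)=\frac{1}{(2\pi i)^2}\int\int\frac{e^{\zeta^3/3-x\zeta-\omega^3/3+y\omega}}{\zeta-\omega}\,d\zeta\,d\omega
\]
on the standard contours.

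The main obstacle is the global contour deformation. We must find a $z$-contour through $z_0$ leaving at angles $\pm\pi/3$ on which $\Re\Phi(z)<\Phi(z_0)$ away from $z_0$, and a $w$-contour leaving at angles $\pm2\pi/3$ with $\Re\Phi>\Phi(z_0)$. Both must stay in the annulus avoiding the poles $1/\alpha$ and $1/(t\alpha)$ and the zero at $\alpha$, and they must not cross, apart from the residue bookkeeping near $z_0$ needed to keep $|w|<|z|$. We would first try circles $|z|=z_0$ and $|w|=z_0$, locally modified near $z_0$. We would check monotonicity of $\theta\mapsto\Re\Phi(z_0e^{i\theta})$ on $(0,\pi)$, which reduces to explicit inequalities for $\log|1-\alpha z_0e^{i\theta}|$ and $\log|1-t\alpha z_0e^{i\theta}|$. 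The $t$-term has the favourable sign when $t\le0$.

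With these contours, dominated convergence gives uniform convergence on compacts. Tail bounds of the form $|c_2n^{1/3}\widetilde{\mathcal K}_t(u,v)|\le Ce^{-(x+y)}$ for $x,y\ge x_0$ then give trace-class convergence of the Fredholm determinants on $[u,\infty)$. We obtain these bounds by shifting the contours slightly inside or outside, as in Tracy--Widom and Johansson. Combined with Theorem~\ref{thm:main} and the gap-probability identity $\mathbb P_t(\lambda_1<u)=\det(I-\mathcal K_t)_{[u,\infty)}$, this gives the $F_2$ limit with centring $c_1n$ and scale $c_2n^{1/3}$.
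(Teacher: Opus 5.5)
You follow the same route as the paper's own argument (Lemma~\ref{lem:alpha-three-regimes}, which stands in for a proof of this cited theorem): double contour integral, gauge conjugation, double saddle at $z_0$, the rescaling $z=z_0(1+\zeta/(c_2n^{1/3}))$, cubic Taylor expansion, then dominated convergence. Your reduction of the saddle system to $(z\sigma_t')'(z_0)=0$ is correct, and so is your normalisation of $c_2$. The paper itself does not construct the steepest-descent contours; it defers them to Johansson and Tracy--Widom. You rightly call this step the main obstacle, but the construction you propose for it cannot work.

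On the circle of radius $z_0$, $\Re\Phi(z_0e^{i\theta})=\Re\sigma_t(z_0e^{i\theta})-c_1\log z_0$ for \emph{both} variables. The $z$-integrand $e^{n\Phi(z)}$ needs $\Re\Phi<\Phi(z_0)$ away from $z_0$, while the $w$-integrand $e^{-n\Phi(w)}$ needs $\Re\Phi>\Phi(z_0)$. One monotonicity check on that circle can deliver at most one of these, so the two contours must be genuinely different curves.

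Your reduction of the monotonicity to $\log|1-\alpha z_0e^{i\theta}|$ and $\log|1-t\alpha z_0e^{i\theta}|$ also drops the factor $(1-\alpha/z)^n$. With $g:=z\sigma_t'(z)$ one has $\frac{d}{d\theta}\Re\Phi(z_0e^{i\theta})=-\operatorname{Im}g(z_0e^{i\theta})$, and
\[
\operatorname{Im} g(z)=\alpha z_0\sin\theta\Bigl[\frac{\tau}{|1-\alpha z|^2}+\frac{\tau|t|}{|1-t\alpha z|^2}-\frac{1}{|z-\alpha|^2}\Bigr],\qquad z=z_0e^{i\theta},
\]
whose last term has the unfavourable sign. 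At $t=0$, which the theorem covers, the saddle equation turns the bracket into $2\alpha z_0(\tau-1)(1-\cos\theta)/(|1-\alpha z|^2|z-\alpha|^2)$. Hence the circle satisfies the global inequality needed for $z$ only if $\tau>1$, the one needed for $w$ only if $\tau<1$, and it is flat if $\tau=1$; in the flat case one must use radii on either side of $z_0$, as for Plancherel.

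So the analytic content of the theorem is still missing. You need non-crossing contours through $z_0$ at angles $\pm\pi/3$ and $\pm2\pi/3$, adapted to $(\tau,t,\alpha)$ and staying inside the region where the Laurent expansions are valid. On them $\Re\Phi$ must be strictly below, respectively above, $\Phi(z_0)$ away from $z_0$. For example, take a circle for one variable and, for the other, a curve of different radius or a level line of $\operatorname{Im}\Phi$, and then verify the inequalities.

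Two smaller points.

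\begin{itemize}
\item Sign analysis at the endpoints gives only existence of $z_0$. It gives neither uniqueness nor $\Phi'''(z_0)=g''(z_0)/z_0>0$, which you need for $c_2>0$. Both follow from $g'(z)=\alpha(z-\alpha)^{-2}(h(z)-1)$ with
$h(z)=\tau\bigl(\frac{z-\alpha}{1-\alpha z}\bigr)^2+\tau|t|\bigl(\frac{z-\alpha}{1+|t|\alpha z}\bigr)^2$, which is strictly increasing from $0$ to $\infty$ on $(\alpha,1/\alpha)$.
\item The gauge you actually use in your exponent is $\widetilde{\mathcal K}_t(u,v)=z_0^{\,u-v}\mathcal K_t(u,v)$, as in the integrand of Lemma~\ref{lem:alpha-three-regimes}. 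The $D$ of the statement, taken literally, multiplies $\mathcal K_t(u,v)$ by $(\mathcal J_t(z_0)z_0)^{v-u}$. That factor does not cancel the saddle value $z_0^{v-u}$, and its part $\mathcal J_t(z_0)^{v-u}=e^{n\sigma_t(z_0)(v-u)}$ is unbounded in general.
\end{itemize}
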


Based on the determinantal form in Section~\ref{sect:corr-fcn}, we want to revisit the rectangular specialisation in a self-contained way and provide a proof using correlation functions, which yields \textit{multi-point} convergence at the edge directly, whereas Matsumoto~\cite{Mat1} only considered the one-point limit for $\lambda_1$.


Let $\mathcal K_{\sigma}$
denote the corresponding kernel (i.e.\ $\mathfrak C_{\sigma}(X)=\det[\mathcal K_{\sigma}(k_i,k_j)]$).

Define the conjugated kernel
\begin{equation}\label{eq:conjugated-kernel}
  \widetilde{\mathcal K}_{\sigma}(i,j)
   :=(\mathcal J_{\sigma}(z_0)z_0)^{-i}\,
     \mathcal K_{\sigma}(i,j)\,
     (\mathcal J_{\sigma}(z_0)z_0)^{\,j}.
\end{equation}
Conjugation preserves all minors and hence correlation functions.

\begin{lemma}\label{lem:alpha-three-regimes}
For $x,y$ in compact subsets of $\mathbb R$ and $n\to\infty$,
\begin{align}
 (c_2 n^{1/3})\,
 \widetilde{\mathcal K}_{\sigma}\bigl(c_1 n+c_2 n^{1/3}x,\,
                                     c_1 n+c_2 n^{1/3}y\bigr)
   &\longrightarrow K_{\mathrm{Airy}}(x,y),
   \label{eq:airy-block} \\
 (c_2 n^{1/3})\,
 \widetilde{\mathcal K}_{\sigma}\bigl(c_1 n+c_2 n^{1/3}x,\,
                                     c_1 n-c_2 n^{1/3}y\bigr)
   &\longrightarrow 0,
   \label{eq:zero-block-1} \\
 (c_2 n^{1/3})\,
 \widetilde{\mathcal K}_{\sigma}\bigl(c_1 n-c_2 n^{1/3}x,\,
                                     c_1 n+c_2 n^{1/3}y\bigr)
   &\longrightarrow 0 .
   \label{eq:zero-block-2}
\end{align}
\end{lemma}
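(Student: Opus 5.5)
We write $\widetilde{\mathcal K}_\sigma$ as a double contour integral, obtained from Theorem~\ref{thm:GF} by Cauchy's coefficient formula:
\[
  \widetilde{\mathcal K}_{\sigma}(i,j)=\frac{1}{(2\pi i)^2}\oint_{|z|=r_1}\oint_{|w|=r_2}
  \frac{\sqrt{zw}}{z-w}\,
  \exp\bigl(n(\Phi_i(z)-\Phi_i(z_0))-n(\Phi_j(w)-\Phi_j(z_0))\bigr)\frac{dz\,dw}{zw},
  \qquad r_2<r_1,
\]
where $n\Phi_k(z):=n\sigma_t(z)-k\log z$. The conjugation \eqref{eq:conjugated-kernel} is exactly what subtracts the constants $n\Phi_k(z_0)$; since it is diagonal, minors are unchanged. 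The radii must lie in the annulus $\alpha<|z|<1/\alpha$ (and $|z|<1/(|t|\alpha)$ when $t<0$) where $\mathcal J_t$ is analytic and nonvanishing, and $z_0$ lies in this annulus. We set
\[
  k=c_1n+c_2n^{1/3}x,\qquad z=z_0(1+\zeta n^{-1/3}\gamma),\qquad w=z_0(1+\omega n^{-1/3}\gamma),
\]
with $\gamma=(2/\Phi^{(3)}(z_0))^{1/3}$, so that $c_2=z_0/(\gamma z_0)\cdot z_0^{\,}$ matches \eqref{eq:c2-from-sigma}. Using the double-saddle equations \eqref{eq:doublesaddle}, a Taylor expansion gives
\[
  n(\Phi_k(z)-\Phi_k(z_0))=\tfrac{\zeta^3}{3}-x\zeta+O(n^{-1/3}(|\zeta|^4+|x\zeta|^2)),
\]
and similarly in $w$ with $y$. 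Moreover $\frac{\sqrt{zw}}{z-w}\frac{dz\,dw}{zw}=\frac{n^{-1/3}}{c_2}\cdot\frac{d\zeta\,d\omega}{\zeta-\omega}(1+o(1))$ after the normalisation. This is the same computation as in the proof of Theorem~\ref{thm:pois-t-schur-airy}, with $\sigma_t$ in place of the Bessel phase.

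The key step is the global contour deformation. We deform the $z$-circle to a steepest-descent contour leaving $z_0$ at angles $\pm\pi/3$ along which $\Re\Phi_{c_1n}(z)<\Re\Phi_{c_1n}(z_0)$ away from $z_0$, and the $w$-circle to one leaving at angles $\pm 2\pi/3$ along which $\Re\Phi(w)>\Re\Phi(z_0)$. The crossing of the two contours near $z_0$ produces no residue, after keeping $|w|<|z|$ locally via the Airy contour convention $\Re(\zeta-\omega)>0$; where they cross, the residue of $1/(z-w)$ contributes $\oint z^{-i+j-1}dz$-type terms that vanish for the index ranges in question, or are handled as in \cite{Joh2,TW}. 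I expect the main obstacle to be verifying that such contours exist for all $t\le0$, i.e.\ the sign structure of $\Re\sigma_t(e^{i\theta}r)-c_1\log r$ on the relevant circles. I would try first the circles $|z|=z_0$ and $|w|=z_0$ themselves. Since $\sigma_t$ has real Taylor coefficients in the appropriate expansions, $\theta\mapsto\Re\Phi(z_0e^{i\theta})$ should be monotone on $(0,\pi)$, and I would check this by computing $\partial_\theta$ explicitly as a rational function of $\cos\theta$. With the contours fixed, dominated convergence using the cubic decay yields
\[
  \frac{1}{(2\pi i)^2}\int\int\frac{e^{\zeta^3/3-x\zeta-\omega^3/3+y\omega}}{\zeta-\omega}\,d\zeta\,d\omega=K_{\mathrm{Airy}}(x,y),
\]
which proves \eqref{eq:airy-block}.

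For \eqref{eq:zero-block-1}--\eqref{eq:zero-block-2} we use the same contours, but with one index shifted to $c_1n\mp c_2n^{1/3}y$. The local integrand becomes $e^{\zeta^3/3-x\zeta-\omega^3/3-y\omega}$, and this sign change alone does not force vanishing. So instead of the local argument we use the integrable-type structure: we write $\widetilde{\mathcal K}_\sigma(i,j)=\sum_{k>0}A(i+k)B(j+k)$ from Corollary~\ref{cor:coef}, with $A,B$ single-contour integrals. Each of $A,B$ has Airy-type asymptotics, of the form $n^{-1/3}\mathrm{Ai}(\cdot)$ on the right of the edge and bounded oscillatory behaviour on the left. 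Summing and using exponential decay of $\mathrm{Ai}$ for the index beyond the edge then gives the claimed $o(n^{-1/3})$ bound, uniformly for $x,y$ in compacts. The rigorous tail bounds for the sum over $k$, obtained from the single-integral saddle estimates, form the remaining technical point.
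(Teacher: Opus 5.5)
Your treatment of \eqref{eq:airy-block} follows the paper's argument. Both use the double-contour form from Theorem~\ref{thm:GF}, a cubic expansion at the double saddle $z_0$, and steepest descent with dominated convergence; the paper likewise only cites the literature for the contours.

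Two bookkeeping fixes are needed there:
\begin{itemize}
\item With $z=z_0(1+\gamma\zeta n^{-1/3})$ the phase becomes $z_0^3\zeta^3/3-z_0x\zeta$. The substitution should be $z=z_0(1+\zeta/(c_2n^{1/3}))=z_0+\gamma\zeta n^{-1/3}$.
\item A single circle $|z|=z_0$ cannot carry both integrals. The $z$-contour needs $\Re\Phi$ maximal at $z_0$ and the $w$-contour needs it minimal there, and near $z_0$ the circle is not a descent path anyway.
\end{itemize}

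The genuine gap is in \eqref{eq:zero-block-1}--\eqref{eq:zero-block-2}: as stated they are false, so no argument can establish them.

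At $x=y=0$ the left-hand sides of \eqref{eq:airy-block} and \eqref{eq:zero-block-1} coincide. Both equal $c_2n^{1/3}\,\widetilde{\mathcal K}_\sigma(c_1n,c_1n)$, and the diagonal is unchanged by conjugation. Yet the asserted limits are $K_{\mathrm{Airy}}(0,0)=\mathrm{Ai}'(0)^2>0$ and $0$.

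More generally, since $y$ ranges over compacts of $\mathbb R$, the minus scaling at $y$ is the plus scaling at $-y$. The true limits of the mixed blocks are therefore $K_{\mathrm{Airy}}(x,-y)$ and $K_{\mathrm{Airy}}(-x,y)$. You saw this locally: $e^{\zeta^3/3-x\zeta-\omega^3/3-y\omega}/(\zeta-\omega)$ integrates to $K_{\mathrm{Airy}}(x,-y)$.

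Your fallback argument reproduces this limit rather than refuting it. Take $i=c_1n+c_2n^{1/3}x$, $j=c_1n-c_2n^{1/3}y$ and $k=c_2n^{1/3}s$. Then the sum in Corollary~\ref{cor:coef} is a Riemann sum, and
\[
c_2n^{1/3}\,\widetilde{\mathcal K}_\sigma(i,j)\;\longrightarrow\;\int_0^\infty \mathrm{Ai}(x+s)\,\mathrm{Ai}(s-y)\,ds\;=\;K_{\mathrm{Airy}}(x,-y).
\]
The exponential decay of $\mathrm{Ai}(x+s)$ only makes this integral converge. The range $s=O(1)$, where both factors are of order one, contributes a non-vanishing amount, so the $o(n^{-1/3})$ bound you aim for cannot be obtained.

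The paper's own justification, that the deformed contours avoid the double saddle, fails for the same reason. Shifting an index by $O(n^{1/3})$ does not move the saddle at leading order, so the double saddle at $z_0$ is still present.

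The correct course is to discard the two mixed-block claims. Statement \eqref{eq:airy-block}, proved uniformly for $(x,y)$ in compacts of $\mathbb R^2$, already covers every sign configuration. It is also the only input that Proposition~\ref{prop:alpha-det-limit} actually uses.
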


\begin{proof}
Using \eqref{eq:conjugated-kernel} and the generating function
$\mathcal K_{\sigma}(z,w)=\dfrac{\sqrt{zw}}{z-w}\dfrac{\mathcal J_{\sigma}(z)}
{\mathcal J_{\sigma}(w)}$, we obtain the double contour representation
\[
  \widetilde{\mathcal K}_{\sigma}(u,v)
  =\frac{1}{(2\pi i)^2}
    \oint\oint
    \frac{\sqrt{zw}}{z-w}\,
    \exp\big(n(\sigma_t(z)-\sigma_t(w)) - u\log(z/z_0)+ v\log(w/z_0)\big)
    \frac{dz\,dw}{zw}.
\]
With $u=c_1 n+c_2 n^{1/3}x$, $v=c_1 n+c_2 n^{1/3}y$ and the choice
\eqref{eq:doublesaddle}-\eqref{eq:c2-from-sigma}, the phase has a
double saddle at $z=w=z_0$. Put
$z=z_0(1+\zeta/(c_2 n^{1/3}))$, $w=z_0(1+\omega/(c_2 n^{1/3}))$,
a cubic Taylor expansion yields
\[
  n\big(\sigma_t(z)-\sigma_t(w)\big)-u\log(z/z_0)+v\log(w/z_0)
  =\tfrac{\zeta^3}{3}-x\zeta - \big(\tfrac{\omega^3}{3}-y\omega\big) + o(1),
\]
while $\dfrac{\sqrt{zw}}{z-w}=(c_2 n^{1/3})\,(\zeta-\omega)^{-1}(1+o(1))$.
Dominated convergence on steepest-descent contours~\cite{Joh2,TW} proves \eqref{eq:airy-block}.
If one argument uses the ``minus'' scaling, the deformed contours avoid the
double saddle and the real part of the phase is strictly negative, whence the
exponential decay leading to \eqref{eq:zero-block-1}-\eqref{eq:zero-block-2}.
\end{proof}

\begin{proposition}\label{prop:alpha-det-limit}
Let $N\ge1$ and $X=(x_1,\dots,x_N)\in\mathbb R^N$ be fixed.
Define integers $k_i=\big\lfloor c_1 n+c_2 n^{1/3}x_i\big\rfloor$. Then
\begin{equation}\label{eq:prop-det-limit}
  \lim_{n\to\infty}
  \det\Big[\, (c_2 n^{1/3})\,\widetilde{\mathcal K}_{\sigma}
        \big(c_1 n+c_2 n^{1/3}x_i,\, c_1 n+c_2 n^{1/3}x_j\big)\,\Big]_{1\le i,j\le N}
  \;=\; \det\big[K_{\mathrm{Airy}}(x_i,x_j)\big]_{i,j=1}^{N}.
\end{equation}
\end{proposition}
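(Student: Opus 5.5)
The limit \eqref{eq:prop-det-limit} should follow from the entrywise convergence in Lemma~\ref{lem:alpha-three-regimes} together with the continuity of the determinant on fixed-size matrices.

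Write the $N\times N$ matrix
\[
M_n=\Big[(c_2n^{1/3})\,\widetilde{\mathcal K}_{\sigma}\big(c_1n+c_2n^{1/3}x_i,\;c_1n+c_2n^{1/3}x_j\big)\Big]_{i,j=1}^N .
\]
Each entry has both arguments in the ``plus'' scaling of \eqref{eq:airy-block}. The finite set $\{x_1,\dots,x_N\}$ is compact, so \eqref{eq:airy-block} gives $(M_n)_{ij}\to K_{\mathrm{Airy}}(x_i,x_j)$ for every pair $(i,j)$. This includes the diagonal entries, where $K_{\mathrm{Airy}}(x,x)=\mathrm{Ai}'(x)^2-x\,\mathrm{Ai}(x)^2$ is understood as the continuous extension. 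The determinant of an $N\times N$ matrix is a polynomial in its $N^2$ entries, hence continuous, so $\det M_n\to\det[K_{\mathrm{Airy}}(x_i,x_j)]$. The mixed-sign estimates \eqref{eq:zero-block-1}--\eqref{eq:zero-block-2} are not needed here; they only matter later for Fredholm determinants over half-lines.

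The only genuine issue is that the kernel is defined on $\mathbb Z+\tfrac12$, not at the real point $c_1n+c_2n^{1/3}x_i$. This is why the statement introduces $k_i=\lfloor c_1n+c_2n^{1/3}x_i\rfloor$ (shifted by $\tfrac12$). I would write $k_i=c_1n+c_2n^{1/3}x_i^{(n)}$ with $x_i^{(n)}=x_i+O(n^{-1/3})$, so $x_i^{(n)}\to x_i$ and stays in a compact neighbourhood of $x_i$. Since \eqref{eq:airy-block} holds uniformly on compacts and $K_{\mathrm{Airy}}$ is continuous, $(c_2n^{1/3})\widetilde{\mathcal K}_\sigma(k_i,k_j)\to K_{\mathrm{Airy}}(x_i,x_j)$ as well. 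I expect this uniformity to be the main point to check. In the double contour integral the argument enters only through $-u\log(z/z_0)$, so after the scaling $z=z_0(1+\zeta/(c_2n^{1/3}))$ it contributes $-x\zeta+o(1)$ with the error uniform in $x$ on compacts. The dominating function on the steepest-descent contours, of the form $e^{-c|\zeta|^3+C|\zeta|}$, can also be chosen uniformly in $x$ on compacts. Hence replacing $x_i$ by $x_i^{(n)}$ does not affect the limit.

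Finally, the conjugation in \eqref{eq:conjugated-kernel} multiplies the matrix on the left by a diagonal matrix and on the right by its inverse. It therefore leaves $\det M_n$ unchanged, so the same limit holds for the correlation function $\mathfrak C_\sigma(\{k_1,\dots,k_N\})$ rescaled by $(c_2n^{1/3})^N$. This is the multi-point edge statement claimed.
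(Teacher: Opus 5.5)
Your proof is correct and uses the same ingredients as the paper's: entrywise convergence from \eqref{eq:airy-block}, uniformity on compacts to pass from $c_1n+c_2n^{1/3}x_i$ to the lattice points $k_i$, invariance of the determinant under the diagonal conjugation \eqref{eq:conjugated-kernel}, and continuity of the determinant.

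The difference is in packaging. The paper embeds $\mathbf K_n$ in the $2N\times 2N$ skew block matrix $\begin{pmatrix}0&\mathbf K_n\\-\mathbf K_n^{\top}&0\end{pmatrix}$ and conjugates it by block-diagonal matrices. It then lets the off-diagonal blocks converge entrywise and takes a square root of $\det\begin{pmatrix}0&A\\-A^{\top}&0\end{pmatrix}=(\det A)^2$. This imitates the Pfaffian argument for shifted Schur measures, which is why the paper also invokes the mixed-sign regimes of Lemma~\ref{lem:alpha-three-regimes}. In the determinantal setting it adds nothing. Once the $N\times N$ entries converge, continuity of $\det$ already finishes the job. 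The square root only returns $|\det|$, so the sign has to be recovered separately, e.g.\ from nonnegativity of correlation functions.

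As written, the paper's block step also has a bookkeeping slip. Conjugating by $\mathrm{diag}(D_n,D_n^{-\top})$ and $\mathrm{diag}(D_n^{-1},D_n^{\top})$ puts $D_n\mathbf K_nD_n$, not $D_n\mathbf K_nD_n^{-1}$, in the corner. One should use $\mathrm{diag}(D_n,D_n^{-1})$ on both sides.

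Your direct route avoids both issues. It gives the signed identity for the real-argument determinant in \eqref{eq:prop-det-limit} and for $(c_2n^{1/3})^N\mathfrak C_\sigma(\{k_1,\dots,k_N\})$ simultaneously. You are also right that it makes no use of \eqref{eq:zero-block-1}--\eqref{eq:zero-block-2}.
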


\begin{proof}
Let
\[
  \mathbf K_n=\bigl[\mathcal K_{\sigma}(k_i,k_j)\bigr]_{1\le i,j\le N},\qquad
  k_i=\big\lfloor c_1 n+c_2 n^{1/3}x_i\big\rfloor .
\]
By the determinantal structure,
\(
  \mathfrak C_{\sigma}(\{k_1,\ldots,k_N\})=\det\mathbf K_n.
\)
Introduce the diagonal matrix
\[
  D_n=\operatorname{diag}\Big((\mathcal J_{\sigma}(z_0)z_0)^{-k_1},\dots,
                               (\mathcal J_{\sigma}(z_0)z_0)^{-k_N}\Big).
\]
We now rewrite the determinant in the \emph{skew block} form:
\begin{align*}
  \mathfrak C_{\sigma}(\{k_1,\ldots,k_N\})
  &=\det\mathbf K_n
   \;=\;\sqrt{\det \begin{pmatrix} 0 & \mathbf K_n \\ -\mathbf K_n^{\top} & 0 \end{pmatrix}}\\
  &=\sqrt{\det
      \left(
      \begin{pmatrix} D_n & 0 \\ 0 & D_n^{-\top} \end{pmatrix}
      \begin{pmatrix} 0 & \mathbf K_n \\ -\mathbf K_n^{\top} & 0 \end{pmatrix}
      \begin{pmatrix} D_n^{-1} & 0 \\ 0 & D_n^{\top} \end{pmatrix}
      \right)}\\
  &=\sqrt{\det \begin{pmatrix}
       0 & D_n\,\mathbf K_n\,D_n^{-1} \\
      -D_n^{-\top}\mathbf K_n^{\top}D_n^{\top} & 0
     \end{pmatrix}} .
\end{align*}

Multiplying the off-diagonal blocks by the common scalar $(c_2 n^{1/3})$ multiplies
the determinant (inside the square root) by $(c_2 n^{1/3})^{2N}$. Hence,
\begin{equation}\label{eq:block-identity}
  (c_2 n^{1/3})^{N}\,
  \mathfrak C_{\sigma}(\{k_1,\ldots,k_N\})
  \;=\;
  \sqrt{\det \begin{pmatrix}
     0 & (c_2 n^{1/3})\,D_n\,\mathbf K_n\,D_n^{-1}\\
    -(c_2 n^{1/3})\,D_n^{-\top}\mathbf K_n^{\top}D_n^{\top} & 0
  \end{pmatrix}} .
\end{equation}
By the definition of the conjugated kernel
$\widetilde{\mathcal K}_{\sigma}(i,j)
=(\mathcal J_{\sigma}(z_0)z_0)^{-i}\mathcal K_{\sigma}(i,j)
 (\mathcal J_{\sigma}(z_0)z_0)^{\,j}$,
the \((i,j)\)-entry of the upper right block equals
\[
  (c_2 n^{1/3})\,
  \widetilde{\mathcal K}_{\sigma}(k_i,k_j)
  =(c_2 n^{1/3})\,
    \widetilde{\mathcal K}_{\sigma}\big(c_1 n+c_2 n^{1/3}x_i,\,
                                          c_1 n+c_2 n^{1/3}x_j\big) \,+\, o(1),
\]
where the $o(1)$ is uniform for $(x_1,\ldots,x_N)$ in compact sets.
By Lemma~\ref{lem:alpha-three-regimes}, entrywise on compact sets,
\[
  (c_2 n^{1/3})\,
  \widetilde{\mathcal K}_{\sigma}\big(c_1 n+c_2 n^{1/3}x_i,\,
                                        c_1 n+c_2 n^{1/3}x_j\big)
  \ \longrightarrow\ K_{\mathrm{Airy}}(x_i,x_j),
\]
while the two \textit{mixed} regimes of Lemma~\ref{lem:alpha-three-regimes} tend to $0$ and therefore do not appear
in the present block.
Hence, the $2N\times2N$ block matrix in \eqref{eq:block-identity} converges entrywise to
\[
  \begin{pmatrix}
    0 & K_{\mathrm{Airy}}(x_i,x_j)\\
    -K_{\mathrm{Airy}}(x_j,x_i) & 0
  \end{pmatrix}_{1\le i,j\le N},
\]
whose determinant equals $\det\big(K_{\mathrm{Airy}}(x_i,x_j)\big)^{2}$.
Taking square roots and using the continuity of the determinant, we obtain
\[
  \lim_{n\to\infty}(c_2 n^{1/3})^{N}\,
  \mathfrak C_{\sigma}(\{k_1,\ldots,k_N\})
  \;=\;
  \det\big[K_{\mathrm{Airy}}(x_i,x_j)\big]_{i,j=1}^{N},
\]
uniformly on compact sets of $(x_1,\ldots,x_N)\in\mathbb R^N$.
\end{proof}

\begin{theorem}\label{thm:gen-mat}
Let $\sigma=(m,n,\alpha)$ with $m/n\to\tau\in(0,\infty)$ and $0<\alpha<1$.
There exist positive constants $c_1=c_1(\alpha,\tau,t)$, $c_2=c_2(\alpha,\tau,t)$
given by \eqref{eq:doublesaddle}-\eqref{eq:c2-from-sigma} such that,
for every fixed $M\ge1$ and $a_1,\dots,a_M\in\mathbb R$,
\[
  \lim_{n\to\infty}
  \mathbb P_{\sigma}\Big(\,\lambda\ :\
     \frac{\lambda_i-c_1 n}{c_2 n^{1/3}}<a_i,\quad 1\le i\le M\,\Big)
  \;=\;
  \mathbb P_{\mathrm{Airy}}\big(\zeta_i<a_i,\ 1\le i\le M\big),
\]
i.e.\ the rescaled top rows converge jointly to the Airy ensemble,
and in particular
\[
  \frac{\lambda_1-c_1 n}{c_2 n^{1/3}}\ \Rightarrow\ \mathrm{TW}_{2}.
\]
\end{theorem}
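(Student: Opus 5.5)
The plan is to derive the joint convergence of the top $M$ rows from convergence of Fredholm determinants of the conjugated kernel $\widetilde{\mathcal K}_\sigma$ on the edge window. The input is the pointwise Airy limit of correlation functions (Proposition~\ref{prop:alpha-det-limit}), upgraded to trace-class convergence through tail estimates.

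First, I express the event $\{\lambda_i<c_1n+c_2n^{1/3}a_i,\ 1\le i\le M\}$ through the point configuration $\mathfrak S(\lambda)=\{\lambda_i-i+\tfrac12\}$. Since $i\le M$ is fixed, the shift by $-i+\tfrac12$ is $o(n^{1/3})$ and disappears after rescaling. The top $M$ rows are then the $M$ largest particles of the determinantal process with kernel $\mathcal K_\sigma$ (Theorem~\ref{thm:main}). For a process with locally finitely many particles above a level, the joint law of the top $M$ particles is a finite linear combination of quantities of the form
\[
\det\bigl(I-\textstyle\sum_r z_r\chi_{J_r}\mathcal K_\sigma\chi_{J_r}\bigr)
\]
and their derivatives in the $z_r$ at $z_r=1$. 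Here the $J_r$ are intervals $[c_1n+c_2n^{1/3}b,\infty)\cap(\mathbb Z+\tfrac12)$. Every such quantity has the same expression for the Airy process. It therefore suffices to prove
\[
\det\bigl(I-\chi_J\mathcal K_\sigma\chi_J\bigr)\to\det\bigl(I-\chi_{[b,\infty)}K_{\mathrm{Airy}}\chi_{[b,\infty)}\bigr)
\]
together with its multi-interval, generating-parameter version. Conjugation by the diagonal $D$ does not change these determinants, so I may work with $\widetilde{\mathcal K}_\sigma$ throughout.

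Second, I expand each Fredholm determinant as its series $\sum_N\frac{(-1)^N}{N!}\sum_{k_1,\dots,k_N\in J}\det[\widetilde{\mathcal K}_\sigma(k_i,k_j)]$. After the substitution $k=c_1n+c_2n^{1/3}x$, the inner sums become Riemann sums with mesh $(c_2n^{1/3})^{-1}$. Proposition~\ref{prop:alpha-det-limit} and Lemma~\ref{lem:alpha-three-regimes} give convergence of the integrands uniformly on compacts. To pass to the limit I need a dominating bound of the form
\[
|(c_2n^{1/3})\widetilde{\mathcal K}_\sigma(c_1n+c_2n^{1/3}x,\,c_1n+c_2n^{1/3}y)|\le Ce^{-(x+y)}
\]
for $x,y\ge b$, uniformly in $n$. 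Given this bound, Hadamard's inequality gives $|\det[\cdot]_{N\times N}|\le N^{N/2}C^N\prod_i e^{-2x_i}$. The series is then dominated termwise and summable, and dominated convergence yields the limit. The same argument handles the multi-interval versions, and the derivatives in $z_r$ follow by analyticity.

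The main obstacle is this uniform tail bound, which must hold for all $x$ up to order $n^{2/3}$ and beyond, not just on compacts. I would first try the double contour integral from the proof of Lemma~\ref{lem:alpha-three-regimes}. The plan is to place the $z$-contour slightly outside the saddle, at radius $z_0(1+\delta/(c_2n^{1/3}))$, and the $w$-contour slightly inside. The factor $(z/z_0)^{-u}(w/z_0)^{v}$ then produces the decay $e^{-\delta x}e^{-\delta y}$ in $x,y$. I also need $\Re(n\sigma_t(z)-c_1n\log z)$ to decrease along the contour away from the saddle. This is where $t\le0$ and $0<\alpha<1$ enter: the pole of $\sigma_t$ at $z=1/\alpha$ and the zero of $1-t\alpha z$ must remain on the correct sides of the contours. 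For large $x$, of order $n^{2/3}$ or larger, I would add a separate cruder estimate using a fixed-radius circle with $z_0<r<1/\alpha$, which gives geometric decay in $u$. With the tail bound in hand, monotonicity in the $a_i$ turns convergence of the finite-dimensional distribution functions into the stated joint convergence. The case $M=1$ is the $\mathrm{TW}_2$ statement, since $F_2(a)=\det(I-K_{\mathrm{Airy}})_{L^2(a,\infty)}$.
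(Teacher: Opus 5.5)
Your skeleton is the paper's. Its proof of Theorem~\ref{thm:gen-mat} only invokes Proposition~\ref{prop:alpha-det-limit} and then defers to the Fredholm argument of Theorem~\ref{thm:pois-t-schur-airy}. You are right that convergence on compacts does not control the Fredholm series. Everything therefore rests on a uniform bound $|(c_2n^{1/3})\widetilde{\mathcal K}_\sigma|\le Ce^{-(x+y)}$ on $[b,\infty)^2$. Given that bound, your Hadamard and dominated-convergence step and the final monotonicity argument are fine.

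You flag this bound as the main obstacle, but the mechanism you propose for it does not work. On origin-centred circles the factors $(z/z_0)^{-u}$ and $(w/z_0)^{v}$ have constant modulus. An absolute-value estimate therefore needs, with $\Phi=\sigma_t-c_1\log z$,
\begin{itemize}
\item $n\,\Re(\Phi(z)-\Phi(z_0))\le C$ on the whole $z$-circle, and
\item $n\,\Re(\Phi(w)-\Phi(z_0))\ge -C$ on the whole $w$-circle.
\end{itemize}
The cubic expansion gives these only on a shrinking arc around $\theta=0$; globally they fail in general.

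Already at $t=0$ we have $\Phi(z)=-\tau\log(1-\alpha z)+\log(1-\alpha/z)-c_1\log z$ and $z_0=(1+\sqrt\tau\,\alpha)/(\sqrt\tau+\alpha)$, and
\[
\Re\Phi(r)-\Re\Phi(-r)=\tau\log\frac{1+\alpha r}{1-\alpha r}-\log\frac{1+\alpha/r}{1-\alpha/r}.
\]
At $r=z_0$ this has the sign of $\tau-1$. It vanishes only for $\tau=1$, where $z_0=1$ and $\Re\Phi\equiv 0$ on $|z|=1$. Consequences:
\begin{itemize}
\item For $\tau>1$, the $w$-integrand $|e^{-n\Phi(w)}|$ near $w=-r$ exceeds its value at the saddle by a factor $e^{cn}$.
\item For $\tau<1$, the same happens to $|e^{n\Phi(z)}|$ near $z=-r$.
\end{itemize}
No choice of $\delta$ repairs this. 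Your fixed-radius circle only handles $x\ge Kn^{2/3}$, so the whole range $b\le x\le Kn^{2/3}$ still depends on the missing descent property.

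For $t<0$ there is a second obstruction. If $|t|\alpha z_0>1$, the $w$-circle of radius $z_0(1-\epsilon)$ encloses the pole $w=1/(t\alpha)$ of $\mathcal J_t(w)^{-1}$, so it is not an admissible contour at all. The hypotheses $t\le0$ and $0<\alpha<1$ do not give the contour properties you attribute to them.

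What is missing is the genuine steepest-descent input. You need contours through $z_0(1+\epsilon)$ and $z_0(1-\epsilon)$, adapted to the singular points $0,\alpha,1/\alpha,1/(t\alpha)$ and in general not circles about the origin. On them the two global inequalities must actually be proved for the full range of $(\alpha,\tau,t)$, with room left to extract the decay in $x$ and $y$.

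A convenient way to organise this is the factorisation $\mathcal K_t=AB$ from Corollary~\ref{cor:coef}, with $A(i,k)=\mathcal J_{t,i+k}$ and $B(k,j)=\widehat{\mathcal J}_{t,-j-k}$. It reduces trace-norm convergence to Hilbert--Schmidt estimates for single contour integrals, which is the Tracy--Widom route that Matsumoto follows. The contour construction is still needed there. Neither your proposal nor the paper's sketch provides it; Lemma~\ref{lem:alpha-three-regimes} only cites steepest-descent contours. So the passage from Proposition~\ref{prop:alpha-det-limit} to gap probabilities remains unproved.

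A smaller slip: for disjoint $J_r$ the generating function is $\det\bigl(I-\mathcal K_\sigma\sum_r z_r\chi_{J_r}\bigr)$. It contains the cross terms $\chi_{J_s}\mathcal K_\sigma\chi_{J_r}$, unlike $\det\bigl(I-\sum_r z_r\chi_{J_r}\mathcal K_\sigma\chi_{J_r}\bigr)$.
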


\begin{proof}
Take $k_i=\lfloor c_1 n+c_2 n^{1/3}x_i\rfloor$ and invoke
Proposition~\ref{prop:alpha-det-limit} to get convergence of $k$-point functions.
The rest proof of this theorem is similar to that of Theorem~\ref{thm:pois-t-schur-airy}.
\end{proof}

Hence, Matsumoto's main result~\cite[Theorem 1]{Mat1} is recovered here, when choosing $M=1$ in Theorem~\ref{thm:gen-mat}.  Specialising $t=0$ gives us Johansson's result for the Schur measure, as presented in~\cite{Joh2}.

\subsection{Application to $t$-ascent pairs }
\label{subsec:t-ascent-limit}

Let $L^{(t)}$ be the longest $t$-ascent pair length in the random marked permutation
model. By Theorem~\ref{thm:t-ascent-equals-lambda1},
$L^{(t)}\stackrel{d}{=}\lambda_1$ under the fixed-size  $t$-Schur measure,
in the Poissonised case, $\lambda_1$ is the top particle of the determinantal process
with kernel $\mathcal K_t$.

\begin{corollary}\label{cor:t-ascent-limits}
In the settings of Theorems~\ref{thm:pois-t-schur-airy}, \ref{thm:depois-t}
and~\ref{thm:rect-airy}, the same limits hold for $L^{(t)}$:
\[
  \frac{L^{(t)}-2\sqrt{\kappa}}{\kappa^{1/6}}\Rightarrow \mathrm{TW}_{2}
  \quad\text{(Poissonised),}\qquad
  \frac{L^{(t)}-2\sqrt{N}}{N^{1/6}}\Rightarrow \mathrm{TW}_{2}
  \quad\text{(fixed size)}.
\]
\end{corollary}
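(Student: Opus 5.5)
The plan is to reduce the statement to the edge limits already proved for $\lambda_1$, through an identity in law between $L^{(t)}$ and $\lambda_1$ under the appropriate measure. First, fix $N$ and the random marked permutation model of Definition~\ref{def:random-t-ascent}. By Theorem~\ref{thm:t-ascent-equals-lambda1}, $L^{(t)}(\pi,\varepsilon)=\lambda_1(\mathrm{shape}(S))$ pointwise for every $(\pi,\varepsilon)$. Hence the law of $L^{(t)}$ is the pushforward of the shape law under $\lambda\mapsto\lambda_1$. Proposition~\ref{prop:shape-law-fixed-N} identifies that shape law as the Plancherel measure $(f^\lambda)^2/N!$. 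By Proposition~\ref{prop:cond-Pl}, this equals the $t$-Plancherel specialisation of the $t$-Schur measure conditioned on $|\lambda|=N$. Therefore
\[
\mathbb P(L^{(t)}\le h)=\mathbb P_{t;a,b}\bigl(\lambda_1\le h\,\bigm|\,|\lambda|=N\bigr)\quad\text{for every }h,
\]
and Theorem~\ref{thm:depois-t} immediately gives the fixed-size statement $(L^{(t)}-2\sqrt N)/N^{1/6}\Rightarrow\mathrm{TW}_2$.

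For the Poissonised statement, take $N\sim\mathrm{Poisson}(\kappa)$ independent of everything else and mix over $N$. The pushforward shape law is then $e^{-\kappa}\kappa^{|\lambda|}(f^\lambda)^2/(|\lambda|!)^2$, as recorded in Corollary~\ref{cor:poiss}. This coincides with $\mathbb P_{t;a,b}$ from Definition~\ref{def:tPl} whenever $(1-t)ab=\kappa$, for instance under either normalisation listed there. Consequently $L^{(t)}\stackrel{d}{=}\lambda_1$ under a measure whose correlation kernel is $\mathcal K_t$ with Bessel parameter $\kappa$. Theorem~\ref{thm:pois-t-schur-airy} then yields $(L^{(t)}-2\sqrt\kappa)/\kappa^{1/6}\Rightarrow\mathrm{TW}_2$.

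The rectangular setting of Theorem~\ref{thm:rect-airy} is different, because the corresponding combinatorial object is the matrix model of Subsection~\ref{subsect:prob-model} rather than a permutation. I would note that Theorem~\ref{thm:LIS-A} together with the computation of $\mathbb P_{t,m,n}(\ell\le h)$ in the introduction gives $\ell(w_A)\stackrel{d}{=}\lambda_1$ under $\mathbb P_\sigma$. Theorem~\ref{thm:gen-mat} (or Theorem~\ref{thm:rect-airy}) then transfers to $\ell(w_A)$ with centring $c_1n$ and scale $c_2n^{1/3}$. This is the natural reading of \emph{the same limits} in that case.

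I expect the main point needing care to be the matching of parameters between the Poisson mean and the $t$-Plancherel normalisation. In the unbalanced case $a=b=\sqrt\xi$ the Poisson mean must be $(1-t)\xi$, and the centring $2\sqrt\kappa$ has to be read with $\kappa$ equal to that mean. Everything else is direct substitution of the distributional identity into the previously established limit theorems. Weak convergence transfers trivially, since the two random variables have identical distribution functions for each $N$ (respectively each $\kappa$).
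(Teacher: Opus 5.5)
Your proposal is correct and follows the paper's own route. You use the identity in law $L^{(t)}\stackrel{d}{=}\lambda_1$ from Theorem~\ref{thm:t-ascent-equals-lambda1}, identify the shape law via Propositions~\ref{prop:shape-law-fixed-N} and~\ref{prop:cond-Pl} (respectively Corollary~\ref{cor:poiss} with $(1-t)ab=\kappa$), and substitute into Theorems~\ref{thm:depois-t} and~\ref{thm:pois-t-schur-airy}; your reading of the rectangular case through $\ell(w_A)$ and Theorem~\ref{thm:LIS-A} spells out what the paper leaves implicit.
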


\begin{remark}
The shifted Schur measure case is Pfaffian and, in its Plancherel regime, has centring $2\sqrt{2N}$ with scale $(2N)^{1/6}$. Our $t$-Schur case is determinantal: in the Poissonised regime, $t$ enters only through the effective parameter $\kappa=(1-t)\xi$ (or is absorbed by a balanced normalisation), while for fixed size the edge scaling is the Schur one: centre $2\sqrt{N}$ and scale $N^{1/6}$.
\end{remark}

\vskip 0.5in

\noindent{\bf Conflict of interest statement}. The authors have no conflicts of interest to declare.

\vskip 0.5in

\noindent{\bf Data availability}. All data of this work are included in the manuscript.

	\bigskip
	\centerline{\bf Acknowledgements}
G.G. was supported by the Singapore Ministry of Education Academic Research Fund; grant numbers: RG14/24 (Tier 1) and MOET2EP20222-0005 (Tier 2). N.J. is partially supported by Simons Foundation grant MP-TSM-00002518 and NSFC grant No. 12171303. H.Z. would like to thank the Research Scholarship awarded by NTU.

\bibliographystyle{amsalpha}

\begin{thebibliography}{99}

\bibitem{AD}
D. Aldous and P. Diaconis,
\textit{Longest increasing subsequences: From patience sorting to the Baik-Deift-Johansson theorem},
Bull. Amer. Math. Soc. \textbf{36} (1999), 413--432.

\bibitem{AvM}
M. Adler and P. van Moerbeke,
\textit{Integrals over classical groups, random permutations, Toda and Toeplitz lattices},
Comm. Pure Appl. Math. \textbf{54} (2001), 153--205.

\bibitem{BDJ}
J. Baik, P. Deift and K. Johansson,
\textit{On the distribution of the length of the longest increasing subsequence in a random permutation},
J. Amer. Math. Soc. \textbf{12} (1999), 1119--1178.

\bibitem{BR1}
J. Baik and E. M. Rains,
\textit{Symmetrized random permutations},
in \textit{Random Matrix Models and their Applications}, eds. P. Bleher and A. Its,
Math. Sci. Res. Inst. Publications \textbf{40}, Cambridge Univ. Press, 2001, pp. 1--19.

\bibitem{BR2}
J. Baik and E. R. Rains,
\textit{The asymptotics of monotone subsequences of involutions},
Duke Math. J. \textbf{109} (2001), 205--281.

\bibitem{BR3}
J. Baik and E. R. Rains,
\textit{Limiting distributions for a polynuclear growth model with external sources},
J. Statist. Phys. \textbf{100} (2000), 523--541.

\bibitem{BlO}
S.~Bloch and A.~Okounkov,
\textit{The character of the infinite wedge representation},
Adv. Math. \textbf{149} (2000), no.~1, 1-60.


\bibitem{BO}
A. Borodin and A. Okounkov,
\textit{A Fredholm determinant formula for Toeplitz determinants},
Int. Eqns. Oper. Th. \textbf{37} (2000), 386--396.



\bibitem{BOO}
A. Borodin, G. Olshanski and A. Okounkov,
\textit{Asymptotics of Plancherel measures for symmetric groups},
J. Amer. Math. Soc. \textbf{13} (2000), 481--515.



\bibitem{BS}
A. Böttcher and B. Silbermann,
\textit{Analysis of Toeplitz Operators},
Springer-Verlag, Berlin, 1990.

\bibitem{BW}
E. Basor and H. Widom,
\textit{On a Toeplitz determinant identity of Borodin and Okounkov},
Int. Eqns. Oper. Th. \textbf{37} (2000), 397--401.

\bibitem{BoPer}
A. Borodin,
\textit{Periodic Schur process and cylindric partitions},
Duke Math. J. \textbf{140} (2007), no.~3, 391-468.

\bibitem{BC}
A. Borodin and I. Corwin,
\textit{Macdonald processes},
Probab. Theory Relat. Fields, \textbf{158} (2014), 225-400.

\bibitem{BufPet}
A. I. Bufetov and L. Petrov,
\textit{Yang-Baxter field for spin Hall-Littlewood symmetric functions},
Forum Math. Sigma \textbf{7} (2019), e39.


\bibitem{CJ} W. Cai, N. Jing, {\em On vertex operator realizations of Jack functions}, J. Algebra Comb. {\bf 32} (2010), 579--595.

\bibitem{DJKM} E. Date, M. Jimbo, M. Kashiwara and T. Miwa, {\it Transformation groups for
soliton equations, nonlinear integrable systems-classical theory and quantum
theory} pp. 39-l 19, Kyoto, World Scientific, Singapore, 1983.

\bibitem{FJK} B. Fauser, P.~D. Jarvis and R.~C. King, {\it Plethysms, replicated Schur functions and
series, with applications to vertex operators}, J. Phys. A.: Math. Theor. 43 (2010) 405202
(30pp).

\bibitem{FLM} I. Frenkel, J. Lepowsky and A. Meurman, {\em Vertex operator algebras and the Monster}, Academic Press, New York, 1988.

\bibitem{FPS}
I.~Frenkel, I.~Penkov and V.~Serganova,
\textit{A categorification of the boson--fermion correspondence via representation theory of $sl(\infty)$},
Commun.\ Math.\ Phys.\ \textbf{341} (2016), 911--931.


\bibitem{FH} W. Fulton and J. Harris, {\it Representation Theory}, Springer-Verlag, New York, 1991.


\bibitem{IIKS}
A.~R.~Its, A.~G.~Izergin, V.~E.~Korepin, and N.~A.~Slavnov,
\textit{Differential equations for quantum correlation functions},
Int. J. Mod. Phys. B \textbf{4} (1990), no.~5, 1003-1037.

\bibitem{Ge}
I.~M.~Gessel,
\textit{Symmetric functions and P-recursiveness},
J. Combin. Theory Ser.~A \textbf{53} (1990), no.~2, 257-285.



\bibitem{J1} N. Jing, {\em Vertex operators, symmetric functions and the spin groups $\Gamma_n$}, J. Algebra {\bf 138} (1991), 340-398

\bibitem{J2} N. Jing, {\em Vertex operators and Hall-Littlewood symmetric functions}, Adv. Math. {\bf 87} (1991), 226-248.

\bibitem{J3} N. Jing, {\em $q$-hypergeometric series and Macdonald functions}, J. Algebr. Comb. {\bf 3} (1994), 291--305.

\bibitem{J4} N. Jing, T. J\'ozefiak, {\em  A formula for two row Macdonald functions}, Duke Math. J. {\bf 67} (1992), no. 2, 377--385.

\bibitem{Joh}
K.~Johansson,
\textit{Discrete orthogonal polynomial ensembles and the Plancherel measure},
Annals Math. \textbf{153} (2001), no.~1, 259-296.

\bibitem{Joh2}
K.~Johansson,
\textit{Shape fluctuations and random matrices},
Comm. Math. Phys. \textbf{209} (2000), no.~2, 437-476.

\bibitem{Kac}
V.~G.~Kac,
\textit{Infinite-dimensional Lie algebras},
3rd ed., Cambridge University Press, Cambridge, 1990.


\bibitem{KO}
S.~Kerov and G.~Olshanski, \textit{Polynomial functions on the set of Young diagrams},
C. R. Acad. Sci. Paris S\'er.~I Math. \textbf{319} (1994), no.~2, 121-126.

\bibitem{KOV}
S.~V.~Kerov, G.~I.~Olshanski, and A.~M.~Vershik,
\textit{Harmonic analysis on the infinite symmetric group. A deformation of the regular representation},
C. R. Acad. Sci. Paris S\'er.~I Math. \textbf{316} (1993), no.~8, 773-778.


\bibitem{K} R.~C. King, {\em S-functions and characters of Lie algebras and superalgebras}. Invariant theory and tableaux (Minneapolis, MN, 1988), pp.226-261,
IMA Vol. Math. Appl., 19, Springer, New York, 1990.

\bibitem{Li} D.~E. Littlewood, {\em The theory of group characters and matrix representations of
groups}, 2nd ed. Oxford University Press, London, 1950.

\bibitem{Ma} I.~G. Macdonald, {\em Symmetric functions and Hall polynomials}, 2nd ed., Clarendon Press,
Oxford, 1995.

\bibitem{Mat1}
S. Matsumoto,
{\em A scaling limit for $t$-Schur measures},
Kyushu J. Math. \textbf{59}, (2005) No. 1, 25-38 (2005).

\bibitem{Mat2}
S. Matsumoto,
{\em Correlation functions of the shifted Schur measure},
J. Math. Soc. Japan \textbf{57}, (2005) No. 3, 619-637.

\bibitem{O}
A. Okounkov,
{\em Infinite wedge and random partitions},
Sel. Math., New Ser. \textbf{7}, (2001) No. 1, 57-81.

\bibitem{Sch}
C.~Schensted,
\textit{Longest increasing and decreasing subsequences},
Canad. J. Math. \textbf{13} (1961), no.~2, 179-191.

\bibitem{TW}
C. Tracy, H. Widom,
\textit{A limit theorem for shifted Schur measures},
Duke Math. J. \textbf{123}, (2004), No. 1, 171-208.

\bibitem{W} H. Weyl, {\em The classical groups; their invariants and representations},
Princeton Univ. Press, Princeton, 1946.

\bibitem{Za} M. Zabrocki, {\em A Macdonald vertex operator and standard tableaux statistics for the two-column $(q, t)$-Kostka coefficients}, Electron. J. Comb. {\bf 5}, (1998) 45.

\bibitem{Ze} A. Zelevinsky, {\em Representations of finite classical groups}, Springer-Verlag, New York, 1981.

\end{thebibliography}

\end{document}